\pgfplotsset{compat=1.14}
\providecommand{\psreset}{\psset{%
		linewidth=0.3pt,linestyle=solid,linecolor=black,
		dotsize=2.5pt,dotsep=2.5pt,arrowsize=4pt,
		fillstyle=none,fillcolor=white,
		showpoints=false,arrows=-,linearc=0,framearc=0,
		hatchsep=2pt,hatchwidth=0.2pt,nodesep=4pt,opacity=1}
	\psset{gridcolor=black!60, subgridcolor=black!30}
}
\titleformat{\section}[block]{\centering\large\bfseries\sffamily}{\thesection.}{0.5em}{}
\titleformat{\subsection}[block]{\flushleft\bfseries}{\thesubsection.}{0.5em}{}
\titleformat{\subsection}[block]{\flushleft\bfseries\sffamily}{\thesubsection.}{0.5em}{}
\titleformat{\subsubsection}[runin]{\normalsize\itshape}{\bfseries\upshape\sffamily\thesubsubsection.}{0.5em}{}[.--\:]
\renewcommand{\thesubsubsection}{\arabic{section}.\arabic{subsection}.\alph{subsubsection}}
\titlespacing{\section}{0ex}{10ex}{5ex}
\titlespacing{\subsection}{0in}{6ex}{3ex}
\titlespacing{\subsubsection}{0mm}{2ex}{0.5em}
\providecommand{\abstitle}[1]{{\par\vspace*{2ex}\small\bfseries\sffamily #1}\hspace*{1ex}}
\renewenvironment{abstract}%
{\begin{center}\begin{minipage}{0.8\linewidth}%
			\setlength{\parindent}{0.0em}\abstitle{Abstract}\small}%
		{\end{minipage}\end{center}\vfill\clearpage}
\providecommand{\Expt}[2][\!]{\mathds{E}_{#1}\left[\,#2\,\right]}
\providecommand{\Char}[1]{\mathds{1}\left(\,#1\,\right)}
\providecommand{\Real}{{\mathds{R}}}
\providecommand{\Natural}{{\mathds{N}}}
\providecommand{\tr}{^{\prime}}
\providecommand{\as}{\ensuremath{\mathrm{a.s.}}}
\providecommand{\rand}[1]{\mathbf{#1}}
\providecommand{\rands}[1]{\boldsymbol{#1}}
\providecommand{\norm}[1]{\left\lVert#1\right\rVert}
\providecommand{\Prob}[1]{\mathds{P}\left(#1\right)}
\providecommand{\Exp}[1]{\mathds{E}\left[#1\right]}
\providecommand{\abs}[1]{\left\lvert#1\right\rvert}
\newcommand{\m}{\ensuremath{\mathrm{m}}}
\newcommand{\rat}{\ensuremath{\mathrm{R}}}
\newcommand{\ed}{\ensuremath{\mathrm{ED}}}
\newcommand{\s}{\ensuremath{\mathrm{s}}}
\newcommand{\cf}{\ensuremath{\mathrm{C}}}
  \theoremstyle{remark}
  \newtheorem{rem}{\protect\remarkname}
  \theoremstyle{plain}
  \newtheorem{lem}{\protect\lemmaname}
  \newtheorem{proposition}{\protect\propositionname}
  \theoremstyle{definition}
  \newtheorem{defn}{\protect\definitionname}
\theoremstyle{plain}
\newtheorem{thm}{\protect\theoremname}
\newtheorem*{thmnonumber}{\protect\theoremname}
  \theoremstyle{plain}
 \theoremstyle{definition}
  \newtheorem{example}{\protect\examplename}
  \theoremstyle{plain}
  \newtheorem{assumption}{\protect\assumptionname}
\newtheoremstyle{subplain}
  {\topsep}   
  {\topsep}   
  {\itshape}  
  {0pt}       
  {\bfseries} 
  {.}         
  {5pt plus 1pt minus 1pt} 
  {\protect\assumptionname\ \arabic{assumption}.\arabic{subassumption}}          
 \theoremstyle{subplain}
 \newtheorem{subassumption}{\protect\assumptionname}
  \providecommand{\assumptionname}{Assumption}
  \providecommand{\definitionname}{Definition}
  \providecommand{\lemmaname}{Lemma}
  \providecommand{\propositionname}{Proposition}
  \providecommand{\remarkname}{Remark}
\providecommand{\corollaryname}{Corollary}
\providecommand{\theoremname}{Theorem}
\providecommand{\examplename}{Example}
\begin{document}

\title{Stochastic Revealed Preferences with\\ Measurement Error\thanks{\scriptsize We thank the editor and three anonymous referees for comments and suggestions that have greatly improved the manuscript. We thank for their comments seminar and conference participants at UC Davis (ARE Department), University of Toronto, Universit\'e libre de Bruxelles, Caltech (DHSS), Arizona State University, Cornell University, NASMES 2018, MEG Meeting 2018, IAAE Conference 2018, BRIC 2018, CEA Meeting 2019, SAET Conference 2019, and the Conference on Counterfactuals with Economic Restrictions at UWO. We are grateful to Roy Allen, Andres Aradillas-Lopez, Elizabeth Caucutt, Laurens Cherchye, Ian Crawford, Tim Conley, Mark Dean, Rahul Deb, Thomas Demuynck, Federico Echenique, Yuichi Kitamura, Lance Lochner, Jim MacGee, Nirav Mehta, Salvador Navarro, Joris Pinkse, David Rivers, Bram De Rock, Susanne Schennach, J\"org Stoye, Al Slivinsky, Dan Silverman, and Todd Stinebrickner for useful comments and encouragement. We thank Vered Kurtz-David for providing validation data; Lance Lochner, Compute Ontario, and Compute Canada for computational resources; and the Social Sciences and Humanities Research Council for financial support.} }

\author{ 
	Victor H. Aguiar\thanks{Department of Economics, University of Western Ontario.}\\ vaguiar@uwo.ca
	\and
	Nail Kashaev\protect\footnotemark[2]\\
	nkashaev@uwo.ca
	}
\date{First version: September 2017\\
This version: June 2020 }
\maketitle
\begin{abstract}
A long-standing question about consumer behavior is whether individuals' observed purchase decisions satisfy the revealed preference (RP) axioms of the utility maximization theory (UMT). Researchers using survey or experimental panel data sets on prices and consumption to answer this question face the well-known problem of measurement error. We show that ignoring measurement error in the RP approach may lead to overrejection of the UMT. To solve this problem, we propose a new statistical RP framework for consumption panel data sets that allows for testing the UMT  in the presence of measurement error. Our test is applicable to all consumer models that can be characterized by their first-order conditions. Our approach is nonparametric, allows for unrestricted heterogeneity in preferences, and requires only a centering condition on measurement error.  We develop two applications that provide new evidence about the UMT.  First, we find support in a survey data set for the dynamic and time-consistent UMT in single-individual households, in the presence of \emph{nonclassical} measurement error in consumption. In the second application, we cannot reject the static UMT in a widely used experimental data set in which measurement error in prices is assumed to be the result of price misperception due to the experimental design. The first finding stands in contrast to the conclusions drawn from the deterministic RP test of \citet{browning1989anonparametric}. The second finding reverses the conclusions drawn from the deterministic RP test of \citet{afriat1967construction} and \citet{varian1982nonparametric}. 

\noindent JEL classification numbers: C60, D10.

\noindent Keywords: rationality, utility maximization, time consistency, revealed preferences, measurement error.
\end{abstract}

\section{Introduction}\label{sec:intro}
One well-known feature of consumer panel data sets---whether they are based on surveys, experiments, or scanners---is measurement error in prices or consumption.\footnote{See \citet{mathiowetz2002measurement}, \citet{echenique2011money}, \citet{carroll2014introduction}, and \citet{gillen2015experimenting}.} This is of significant concern because measurement error is responsible for one important limitation of the standard revealed-preference (RP) framework. Specifically, RP tests tend to overreject the utility maximization theory (UMT). To the best of our knowledge, this paper proposes the first fully nonparametric statistical RP framework  for consumer panel data sets in the presence of measurement error. We show that taking measurement error into account can significantly change the conclusions about the validity of the UMT in a given context. In the two applications we develop, we cannot reject the validity of the UMT, a finding which contradicts the conclusions of the deterministic RP framework. 
\par
Measurement error is the difference between the unobserved but true value of the variable of interest and its observed but mismeasured counterpart. If the UMT is valid, the corresponding RP conditions must be satisfied by the true prices and consumption. However, there is no reason to believe that either the mismeasured consumption or mismeasured prices that we usually observe are consistent with the RP conditions. A key concern for RP practitioners is that, in the presence of measurement error, a deterministic RP test may overreject the null hypothesis that the UMT is valid.
We provide Monte Carlo evidence that this concern may be relevant in practice.
Measurement error in consumption may arise in survey data due to misreporting, in experimental data due to trembling-hand errors, and in scanner data due to recording errors.\footnote{\textit{Trembling-hand errors} are nonsystematic mistakes incurred by a subject when trying to implement a decision because she has difficulties with the interface of an experiment.} Measurement error in prices may arise in experimental data due to misperception errors and in scanner data due to unobserved coupons.\footnote{\textit{Misperception errors} are nonsystematic mistakes incurred by a subject because she misperceives information due to the experimental design.}
\par
Our methodology covers as special cases the static UMT, and the classic dynamic UMT with exponential discounting.\footnote{We also cover firm-cost minimization \citep{varian1984nonparametric}, dynamic rationalizability with quasi-hyperbolic discounting \citep{blownonparametric2017}, homothetic rationalizability \citep{varian1985non}, quasilinear rationalizability \citep{brownquasilinear}, expected utility maximization \citep{diewert2012afriat}, and static utility maximization with nonlinear budget constraints \citep{forges2009afriats}.} When we apply our methodology to a consumer panel survey of Spanish households and allow for measurement error in consumption, we find that the dynamic UMT with exponential discounting cannot be rejected for single-individual households. This first finding contradicts the conclusions of the deterministic RP test of \citet{browning1989anonparametric}. When we apply our framework to a widely used experimental data set \citep{ahn2014estimating} and allow for measurement error in prices that arises due to misperception, we find that the static UMT cannot be rejected. This second finding is the opposite of the conclusions that must be drawn when one applies the deterministic RP test of \citet{afriat1967construction} and \citet{varian1982nonparametric} to the same data set. Taken together, these findings suggest that the negative conclusions about the validity of the UMT drawn from the deterministic RP framework may not be robust to measurement error. 

\par
The leading solution for dealing with measurement error in the RP framework usually consists of perturbing (minimally) any observed individual consumption streams in order to satisfy the conditions of an RP test \citep{adams2014consume}. However, this approach does not allow for standard statistical hypothesis testing. In particular, one cannot control the probability of erroneously rejecting a particular model when such a rejection could be an artifact of noisy measurements. Other works on RP with measurement error, such as the seminal contribution of \citet{varian1985non}, allow for statistical hypothesis testing but require knowledge of the distribution of measurement error; this may be impractical because this does not align with the nonparametric nature of the RP framework. In contrast, our procedure does not suffer from these issues.  
\par
Our main result is the formulation of a statistical test for the null hypothesis that a random data set of mismeasured prices and consumption is consistent with any given model that can be characterized by first-order conditions. Based on this test we provide and implement a general methodology to make out-of-sample predictions or counterfactual analyses with minimal assumptions (e.g., sharp bounds for average or quantile demand). Our approach takes advantage of the work of \citet{schennach2014entropic} on Entropic Latent Variable Integration via Simulation (ELVIS) to provide a practical implementation of our test.
\par
Our RP methodology is fully nonparametric and admits unrestricted heterogeneity in preferences. In addition, we require only a centering condition on the unobserved measurement error. The centering condition captures the application-specific knowledge we have about measurement error. Moreover, our framework is general enough to allow for (i) \emph{nonclassical} measurement errors in consumption in survey environments, (ii) trembling-hand errors in experimental setups, (iii) misperception of prices due to experimental designs (price measurement error), and (iv) different forms of measurement error in prices in scanner environments.
\par
In our first application in particular, we require that consumers be accurate, on average, in recalling and reporting their total expenditures. This assumption is compatible with systematic misreporting of consumption in surveys. For our second application, to an experimental data set, we require that errors in consumption or prices be centered around zero, which is compatible with trembling-hand and misperception errors in subjects' behavior. Measurement error in experimental data sets may arise because the experimental design may fail to elicit the intended choices of consumers. In such a case, classical measurement-error assumptions that allow for nonsystematic mistakes must be taken into account to ensure the external validity of the conclusions drawn from applying any RP test to this type of data set.    
\par
Our empirical contribution is to apply our methodology to a well-known consumer panel survey of single-individual and couples' households in Spain\footnote{This data set has been used in \citet{beatty2011howdemanding,adams2014consume}, and \citet{blownonparametric2017}.} in order to test for the dynamic UMT, and to reexamine the static UMT in a widely used experimental data set \citep{ahn2014estimating}; in doing this reexamination, we allow (separately) for trembling-hand errors and misperceived prices. 
\par
For the first application, we note that under the exponential discounting model, the consumer's time preferences are captured by a time-invariant discount factor and a time-invariant instantaneous utility. The main feature of this model is the time-consistency of the exponential discounting consumer. In other words, if the consumer prefers consumption bundle $c$ at time $t$ to $x$ at time $t+k$, then she will always prefer $c$ at time $\tau$ to $x$ at time $\tau+k$.
The exponential discounting model remains the workhorse of a large body of applied work in economics.
However, several authors, such as 
\citet{browning1989anonparametric}, \citet{dellavigna2006payingnot}, and \citet{blownonparametric2017},
have provided suggestive evidence against the validity of this model. Our methodology addresses, in a nonparametric fashion, the presence of measurement error in survey data, in order to examine the robustness of these findings.
\par
We find support for exponential discounting behavior for single-individual households. This contrasts with the results of applying the deterministic methodology of \citet{browning1989anonparametric} to the same sample. At the same time, in line with the findings of \citet{blownonparametric2017} (who also use  the deterministic methodology of \citealp{browning1989anonparametric}), we reject the null hypothesis of exponential discounting for the case of couples. When compared with the single-household evidence, these results suggest that time inconsistencies in consumer behavior in the couples' case arise due to preference aggregation. 
\par
In our second application, we apply our methodology to  test for the validity of the classical static UMT. Since the work of \citet{afriat1967construction} and \citet{varian1982nonparametric}, researchers have used the deterministic RP framework to examine the validity of the UMT in experimental data sets. The experimental design proposed by \citet{ahn2014estimating} is particularly useful for this task since it provides a controlled environment with substantial price variation, variation that guarantees that the UMT has empirical bite. 
\par
When we apply the deterministic RP test to this experimental data set, we conclude that the UMT is rejected for most subjects. Nonetheless, the external validity of the conclusions drawn from the deterministic RP tests applied to these experimental data sets may be limited. One key reason for this is that the elicitation of consumer behavior may have been subject to measurement error. \citet{gillen2015experimenting} argue that experimental elicitations of choices are subject to random variation in participants' perception and focus. 
Moreover, RP practitioners since \citet{afriat1967construction} have recognized that the deterministic RP test for the static UMT may be too demanding in the presence of imperfect devices for the elicitation of choices. Many researchers have studied how to allow for optimization mistakes in the RP framework and how to measure the intensity of any departure from rationality.\footnote{See \citet{afriat1967construction}, \citet{varian1990goodness}, and \citet{echenique2011money}.} However, none of the existing approaches designed to introduce the possibility of mistakes in the RP framework has allowed for a fully nonparametric approach to doing standard statistical hypothesis testing. 
In our application, we allow for the possibility of nonsystematic mistakes by requiring that the measurement error in consumption or prices be mean zero. 
\par
We cannot reject the null hypothesis of the validity of the static UMT with misperception of prices. However, when we allow only for trembling-hand errors in consumption, we must strongly reject the static UMT. Our findings call into question the robustness of the deterministic RP test that is due to \citet{afriat1967construction} and \citet{varian1982nonparametric} to measurement error in prices.
\subsection*{Outline} The paper proceeds as follows. 
Section~\ref{sec:determin} presents the first-order conditions approach to the deterministic RP methodology. Section~\ref{sec:edrum} contains our statistical test. Section~\ref{sec:recoverabilitycounterfactual} presents a framework for recoverability and counterfactual analysis on the basis of our testing methodology. Section~\ref{sec:econometric framework} provides an econometric framework for our methodology. Section~\ref{sec:error in data} provides a guide specifying the centering condition in different environments.
Section~\ref{sec:empirical application} implements our empirical test for the case of the dynamic UMT in a consumer panel-survey data set. Section~\ref{sec:ApplicationStaticUMT} implements our methodology for the case of the static UMT in an experimental data set. Section~\ref{sec:literature} presents a brief
discussion of related literature. Finally, we conclude in Section~\ref{sec:conclusion}.
All proofs can be found in Appendix~\ref{sec:appendix}. 

\section{The Revealed-Preference Methodology and the First-Order Conditions Approach}\label{sec:determin}
The main objective of this section is to provide a brief summary in a united fashion of two very important deterministic consumer models and their RP characterization. In particular, we study the static UMT or rational model ($\rat$), and the dynamic UMT with exponential discounting ($\ed$). These models are at the center of many applied and theoretical works. We show that they can be completely characterized by their first-order conditions in an RP fashion. All quantities used here
are assumed to be measured precisely.
\par
Let the consumption space be $\Real_{+}^{L}\setminus\{0\}$, where $L\in\Natural$ is the number of commodities.\footnote{We use $\Natural$ to denote the set of natural numbers. The expression $\Real_{+}^L$ denotes the set of componentwise nonnegative elements of the $L$-dimensional Euclidean space $\Real^L$, and $\Real_{+}^L\setminus\{0\}$ denotes the set of vectors $v\in\Real_{+}^L$ that are distinct from zero ($v\neq0$). Similarly, $\Real^L_{++}$ denotes the set of componentwise positive elements of $\Real^L_{+}$. The inner product of two vectors $v_1,v_2\in\Real^L$ is denoted by $v_1\tr v_2$.} Consider a consumer who is endowed with a utility function $u:\Real_{+}^{L}\to\Real$  that is assumed to be concave, locally
nonsatiated, and continuous. The consumer faces a sequence of decision problems indexed by $t\in \mathcal{T}$, where $\mathcal{T}=\{0,\cdots,T\}$, with a known and finite $T\in\Natural$. At each decision problem $t\in \mathcal{T}$, the consumer faces the price vector $p_{t}\in\Real_{++}^{L}$. 
\begin{defn}[Static UMT, $\rat$-rationalizability] 
A deterministic array $(p_{t},c_{t})_{t\in\mathcal{T}}$  is $\rat$-rationalizable (in a static sense) if there exists a concave, locally nonsatiated, and continuous function $u$, and some constants $y_{t}>0$, $t\in\mathcal{T}$, such that the consumption bundle $c_{t}$ solves:
\begin{align*}
    &\max_{c\in\Real_{+}^{L}} u(c),\\
    &\text{s.t. }p_{t}\tr c=y_{t},
\end{align*}
for all $t\in\mathcal{T}$.
\end{defn}
\par
Next we focus on the dynamic UMT. We assume that an individual consumer has preferences over a stream
of dated consumption bundles $(c_{t})_{t\in\mathcal{T}}$, where $\mathcal{T}=\{0,\cdots,T\}$,  $T\in\Natural$, and $c_{t}\in\Real_{+}^{L}\setminus\{0\}$. (The number
of goods, $L$, is kept the same across the time interval.) At time $\tau$, the consumer chooses how much $c_{\tau}$ she will consume by maximizing
\[
V_{\tau}(c)=u(c_{\tau})+\sum_{j=1}^{T-\tau}d^{j}u(c_{\tau+j}),
\]
subject to the linear budget or flow constraints shown here:
\[
p_{t}\tr c_{t}-y_{t}+s_{t}-a_{t}=0,\quad t=\tau,\dots,T,
\]
where $d\in(0,1]$ is the discount factor; $p_{t}\in\Real_{++}^{L}$ is the price vector as before; $y_{t}\in\Real_{++}$ is income received by the individual at time $t$; $s_{t}$ is the amount of savings held by the consumer at the end of time $t$; and $a_{t}$ is the volume of assets held at the start of time $t$. The consumer invests all her savings. Moreover, the assets evolve according to the following law of motion:
\[
a_{t}=(1+r_{t})s_{t-1},
\]
where $r_{t+1}>-1$ is the interest rate that is accessible for the consumer. The holdings of assets in the last period ($t=T$) are set to be zero. 
\par
The intertemporal value function, $V_{t}:\Real_{+}^{L\times(T-t+1)}\to\Real_{++}$, represents the consumer preferences
at a given time $t$. The components of this representation are the parameters of the model. First, $d\in(0,1]$ is a scalar number that measures the degree of discount that the consumer gives to the future. Second, $u:\Real_{+}^{L}\to\Real_{++}$ is an instantaneous utility function that is assumed to be concave, locally nonsatiated, and continuous. The exponential discounting consumer is time-consistent, that is, she will solve the dynamic problem above the same way at any point of the time window.   

\begin{defn}[Dynamic UMT, $\ed$-rationalizability]
A deterministic array $(p_{t},r_{t},c_{t})_{t\in\mathcal{T}}$  is $\ed$-rationalizable if there exist a concave, locally nonsatiated, and continuous function $u$, a vector $(y_{t})_{t\in\mathcal{T}}\in\Real^{\abs{\mathcal{T}}}_{++}$, and a scalar $a_{0}\geq0$ such that the consumption stream $(c_{t})_{t\in\mathcal{T}}$ solves:
\[
    \max_{z\in\Real_{+}^{L\times|\mathcal{T}|}} u(z_{0})+\sum_{t=1}^{T}d^{t}u(z_{t}),
\]
subject to 
\begin{align*}
    &p_{0}\tr z_{0}+\sum_{t=1}^{T}\frac{p_{t}\tr z_{t}}{\prod_{i=1}^{t}[1+r_{i}]}=
\sum_{t=1}^{T}\frac{y_{t}}{\prod_{i=1}^{t}[1+r_{i}]}+a_{0}.
\end{align*}
\end{defn}
$\ed$-rationalizability implicitly assumes perfect foresight (e.g., individuals know their future income) and homogeneity of consumers within a household (e.g., household members have the same discount factors). In Appendix~\ref{appen: extensions of ED } we show that our methodology covers two extensions of this model: the dynamic UMT with income uncertainty, and the collective model of \citet{adams2014consume}.

\subsection{The First-Order Conditions Approach}\label{subsec:determin,bench}
Now we establish that any consumer model $\m\in\{\rat,\ed\}$ can be completely characterized in terms of its first-order conditions with respect to (i) a concave, locally nonsatiated, and continuous utility function $u:\Real_{+}^{L}\to\Real$, (ii) the effective (or transformed) prices $\rho_t^\m\in\Real_{++}^{L}$, and (iii) restrictions on some constants $\lambda_t^\m \in \Real_{++}$ and $\delta^{\m}_t\in(0,1]$, interpreted as the marginal utility of income and the discount rate, respectively. We call this \emph{the first-order conditions approach}. Observe that the utility function is model-independent, but the effective prices, the marginal utility of income, and the discount rate are not. We define the effective prices in Table~\ref{table:Lambdarhodef}. 
\begin{longtable}{cccccc}
\bottomrule
\midrule\midrule
\endfoot
\caption{\label{table:Lambdarhodef} Definition of $\rho^\m_t$  }\\
\toprule
$\m$ &  $\rat$  &    $\ed$   \\
\midrule
$\rho^\m_t$ &   $p_t$   &      $p_t/\prod_{j=1}^{t}(1+r_j)$  \\
\end{longtable}

The following lemma summarizes the results in \citet{browning1989anonparametric} for the exponential discounting case, and it is trivial for the static rationalizability case. Let $\nabla u(c_t)$ denote a supergradient of $u$ at the point $c_t$.\footnote{The supergradient is $\nabla u(c_t)=\{\xi\in\Real^{L}\::\:u(c)-u(c_t)\leq \xi\tr(c-c_t),\:\forall\:c\in\Real_{+}^{L}\setminus\{0\}\}$. Under differentiability, $\nabla u(c_t)$ is a gradient.}
\begin{lem}
	\label{lem:Browiningparametric}
	For any model $\m\in\{\rat,\ed\}$, a deterministic array $(\rho_{t}^{\m},c_{t})_{t\in\mathcal{T}}$ is $\m$-rationalizable if and only if
	there exists $(u,(\lambda_{t}^{\m},\delta_t^{\m})_{t\in\mathcal{T}})$ such that
	\begin{enumerate}
	\item $u:\Real_{+}^{L}\to\Real$ is a concave, locally nonsatiated, and continuous utility function;
	\item $\delta_t^{\m}\nabla u(c_{t})\leq\lambda_{t}^{\m}\rho_{t}^{\m}$ for every $t\in\mathcal{T}$. If $c_{t,j}\neq0$, then $\delta_t^{\m}\nabla u(c_{t})_j=\lambda_{t}^{\m}\rho_{t,j}^{\m}$, where $c_{t,j}$, $\nabla u(c_{t})_j$, and $\rho_{t,j}$ are the $j$-th components of $c_{t}$, $\nabla u(c_{t})$, and $\rho_{t}$, respectively;
	\item $\lambda^\rat_t=\lambda_t>0$ and $\delta_t^{\rat}=1$ for all $t\in\mathcal{T}$;
	\item $\lambda^\ed_t=1$ and $\delta^{\ed}_t=d^{t}$, where $d\in(0,1]$, for all $t\in\mathcal{T}$.
	\end{enumerate}
\end{lem}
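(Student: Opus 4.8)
The plan is to observe that, for $\m\in\{\rat,\ed\}$, $\m$-rationalizability of $(\rho_t^{\m},c_t)_{t\in\mathcal{T}}$ means precisely that $(c_t)_{t\in\mathcal{T}}$ maximizes a concave objective over a convex polyhedral feasible set cut out by one or more linear budget \emph{equalities} together with the nonnegativity constraints $c\geq 0$, and then to characterize the maximizers of such a program by first-order (Karush--Kuhn--Tucker) conditions. Two points need care: $u$ is only concave, so ``gradients'' are elements of the supergradient set $\nabla u(c_t)$; and boundary bundles with some $c_{t,j}=0$ are handled through complementary slackness. The ``only if'' direction reads off the multipliers (the marginal utilities of income and, in the $\ed$ case, the discount rate), while the ``if'' direction turns the first-order inequalities back into global optimality by concavity and then reconstructs the latent incomes and initial assets.

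\textbf{Static case.}
For $\m=\rat$ the program decouples across $t$. If $(p_t,c_t)_t$ is $\rat$-rationalizable by $u$, then for each $t$ the bundle $c_t$ maximizes the concave $u$ over the compact polyhedron $\{c\in\Real_+^L:p_t\tr c=y_t\}$; by local nonsatiation and continuity this equals maximizing over the larger set $\{c\in\Real_+^L:p_t\tr c\leq y_t\}$ (an unexhausted budget could be perturbed to a nearby, still feasible, strictly better bundle). KKT for this inequality-constrained concave program yields $\xi_t\in\nabla u(c_t)$ and a multiplier $\lambda_t\geq 0$ with $\xi_t\leq\lambda_t p_t$ and $c_{t,j}(\lambda_t p_{t,j}-\xi_{t,j})=0$ for all $j$; moreover $\lambda_t>0$, since $\lambda_t=0$ would force $\xi_t\leq 0$ and then complementary slackness and concavity would make $c_t$ a global maximizer of $u$, contradicting local nonsatiation. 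With $\rho_t^{\rat}=p_t$ and $\delta_t^{\rat}=1$ this is (ii)--(iii). Conversely, given $(u,(\lambda_t,1)_t)$ satisfying (i)--(iii), set $y_t=p_t\tr c_t>0$ (positive since $p_t\in\Real_{++}^L$ and $c_t\neq 0$); for any feasible $c$, concavity plus the coordinatewise relations give $u(c)-u(c_t)\leq\xi_t\tr(c-c_t)\leq\lambda_t\,p_t\tr(c-c_t)=0$, using $\xi_{t,j}=\lambda_t p_{t,j}$ where $c_{t,j}\neq 0$ and $\xi_{t,j}\leq\lambda_t p_{t,j}$ with $c_j\geq 0$ elsewhere. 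Hence $c_t$ solves the static program.

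\textbf{Dynamic case.}
For $\m=\ed$, collapse the intertemporal constraint of the $\ed$ problem into $\sum_{t\in\mathcal{T}}(\rho_t^{\ed})\tr z_t=W$, where $\rho_t^{\ed}=p_t/\prod_{j=1}^t(1+r_j)$ (so $\rho_0^{\ed}=p_0$) and $W=\sum_{t=1}^T y_t/\prod_{i=1}^t(1+r_i)+a_0>0$ is the latent discounted wealth. The objective $u(z_0)+\sum_{t=1}^T d^t u(z_t)$ is concave and additively separable and the feasible set $\{z\geq 0:\sum_t(\rho_t^{\ed})\tr z_t=W\}$ is a compact polyhedron with a single affine equality; arguing as in the static case (local nonsatiation $\Rightarrow$ wealth is exhausted $\Rightarrow$ reduce to an inequality budget, then apply KKT) gives one multiplier $\mu>0$ and supergradients with $d^t\nabla u(c_t)\leq\mu\,\rho_t^{\ed}$, with equality where $c_{t,j}\neq 0$. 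Replacing $u$ by $u/\mu$---which preserves concavity, continuity and local nonsatiation---normalizes $\mu$ to $1$, giving $\lambda_t^{\ed}=1$ and $\delta_t^{\ed}=d^t$, i.e.\ (ii) and (iv). Conversely, given $(u,(1,d^t)_t)$ satisfying (i), (ii), (iv), summing the coordinatewise relations over $j$ and over $t$ shows as before that $(c_t)_t$ maximizes $u(z_0)+\sum_{t\geq 1}d^t u(z_t)$ over $\{z\geq 0:\sum_t(\rho_t^{\ed})\tr z_t=\sum_t(\rho_t^{\ed})\tr c_t\}$; since that wealth level is strictly positive one may take $a_0=0$ and $y_t\propto\prod_{i=1}^t(1+r_i)$, making the original intertemporal budget identity hold with $(y_t)_{t\in\mathcal{T}}\in\Real_{++}^{\abs{\mathcal{T}}}$ and $a_0\geq 0$, so the array is $\ed$-rationalizable. (The codomain mismatch between $u:\Real_+^L\to\Real$ in the statement and $u:\Real_+^L\to\Real_{++}$ in the definition is immaterial, as only the behavior of $u$ on the compact feasible set is used.)

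\textbf{Main obstacle.}
The one genuinely delicate step is the existence and, above all, strict positivity of the KKT multiplier(s) for a \emph{nondifferentiable} concave program with nonnegativity constraints: one must select supergradients consistent with complementary slackness at boundary bundles and invoke local nonsatiation (through the ``budget-exhaustion'' and ``no global maximum'' arguments) to exclude a zero multiplier. The remaining ingredients---reducing the dynamic flow constraints to the single discounted budget line (already built into the $\ed$ definition), the normalization $u\mapsto u/\mu$, and rebuilding the latent $(y_t)$ and $a_0$---are routine, which is why the $\ed$ part can be credited to \citet{browning1989anonparametric} and the $\rat$ part is immediate.
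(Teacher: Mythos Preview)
The paper does not actually prove this lemma: it introduces it with the sentence ``The following lemma summarizes the results in \citet{browning1989anonparametric} for the exponential discounting case, and it is trivial for the static rationalizability case,'' and there is no corresponding proof in the appendix. So there is no ``paper's proof'' to compare against beyond that citation.

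Your KKT-based argument is the standard route and is correct. The decomposition into (a) budget exhaustion via local nonsatiation, (b) existence of a multiplier for the concave program over a polyhedral set, (c) strict positivity of the multiplier via the ``no global maximizer'' observation, and (d) the concavity inequality for the converse, is exactly what underlies the cited result; your normalization $u\mapsto u/\mu$ to force $\lambda_t^{\ed}=1$ is the right way to read the paper's convention. One cosmetic point: in reconstructing the $\ed$ budget you should also assign some $y_0>0$, since the definition requires $(y_t)_{t\in\mathcal{T}}\in\Real_{++}^{|\mathcal{T}|}$ even though $y_0$ does not appear in the collapsed constraint; this is harmless. Your closing paragraph correctly flags the only nontrivial technicality (nondifferentiable KKT with complementary slackness at the boundary), and your own assessment---that the $\ed$ part is Browning and the $\rat$ part is immediate---matches the paper's one-line justification exactly.
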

We want to highlight that while we focus on these two models for expositional and motivational purposes, our methodology is applicable to any model that can be characterized using the first-order conditions approach. 

\begin{rem}
	Lemma~\ref{lem:Browiningparametric} allows for nondifferentiable utility functions. So, the supergradient of ${u}( c_{t})$ may be set-valued. In this case one should read the condition $\delta_t^{\m}\nabla u(c_{t})\leq\lambda_{t}^{\m}\rho_{t}^{\m}$ as ``there exists $\xi\in\nabla u(c_t)$ such that $\delta_t^{\m}\xi\leq\lambda_{t}^{\m}\rho_{t}^{\m}$.''  
\end{rem}

\begin{rem} Lemma~\ref{lem:Browiningparametric} specialized for $\ed$-rationality implies that we do not need to observe consumers over all periods of their lives. The first-order conditions are the same irrespective of whether the consumer lives any finite number of time periods containing the observed time-window, or is alive only during the latter period.
\end{rem}

\subsection{The Elimination of a Latent Infinite-Dimensional Parameter}\label{subsec:determin,latent}
Since our objective is not to estimate but to test $\m$-rationalizability, we will eliminate the utility function $u$ from its characterization. We follow the theorists of RP to eliminate the latent infinite-dimensional parameters by exploiting their shape restrictions.
\par
In particular, we follow \citet{afriat1967construction}, \citet{varian1985non}, \citet{browning1989anonparametric},
and \citet{rockafellar1970convexanalysis} to formulate
a result that eliminates the utility function  $u$ (an infinite dimensional parameter) from the first-order conditions.
The cost of doing this is that we have to replace the first-order conditions by a set of inequalities that require only the concavity of $u$. As a result, the inequalities are exact and
do not involve any form of approximation; this is an advantage compared to other nonparametric methods (e.g., sieves, kernel estimators) or the parametric approach used in many applied papers.
\par
To formulate our result, we first recall the definition of the concavity of $u$.
\begin{defn}[Concavity]\label{def:concavity}
	A utility function $u$ is said to be concave if and only
	if $u(\tilde{c})-u(c)\leq\nabla u(c)\tr(\tilde{c}-c)$, for all $c,\tilde{c}\in\Real_{+}^{L}\setminus\{0\}$.
\end{defn}
\begin{rem}
	In Definition~\ref{def:concavity} we implicitly assume the existence of the supergradient of $u$. Since the supergradient may be set-valued, one should read the condition $u(\tilde{c})-u(c)\leq\nabla u(c)\tr(\tilde{c}-c)$ as ``$u(\tilde{c})-u(c)\leq\xi\tr(\tilde{c}-c)$ for all $\xi\in\nabla u(c)$.''  
\end{rem}
The nonparametric characterization of the $\m$-rationalizability of observed consumption and prices without measurement error is captured by the following result.

\begin{thm}\label{thm:Deterministic-thm_exponentialdiscount} 
For any $m\in\{\rat,\ed\}$, the following are equivalent:
	\begin{enumerate}
		\item The deterministic array $(\rho_{t}^\m,c_{t})_{t\in\mathcal{T}}$ is $\m$-rationalizable.
		\item There exist vectors $(\lambda^{\m}_{t})_{t\in\mathcal{T}}$, $(\delta_t)_{t\in\mathcal{T}}$, and a positive vector $(v_{t})_{t\in\mathcal{T}}$
		such that:
		\[
		v_{t}-v_{s}\geq \dfrac{\lambda^\m_{t}}{\delta_t^{\m}}\rho_{t}^{\m\prime}(c_{t}-c_{s}),
		\]
		with $\lambda^\rat_t=\lambda_t>0$, $\delta^{\rat}_t=1$, $\lambda^{\ed}_t=1$, and   $\delta^\ed_t=d^{t}$, where $d\in(0,1]$, for all $t,s\in\mathcal{T}$.
	\end{enumerate}
\end{thm}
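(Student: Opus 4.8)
The plan is to lean on Lemma~\ref{lem:Browiningparametric}, which already equates $\m$-rationalizability with the first-order conditions (its items~1--4), so that it suffices to prove that those first-order conditions are equivalent to the Afriat-type system in item~2 of the theorem. Since the structural restrictions on $(\lambda_t^\m,\delta_t^\m)$ appear verbatim in both statements, the real content is the equivalence between ``there is a concave, locally nonsatiated, continuous $u$ with $\delta_t^\m\xi_t\le\lambda_t^\m\rho_t^\m$ for some $\xi_t\in\nabla u(c_t)$, with equality on coordinates $j$ having $c_{t,j}\neq0$'' and ``there is a positive vector $(v_t)_{t\in\mathcal{T}}$ with $v_t-v_s\ge(\lambda_t^\m/\delta_t^\m)\rho_t^{\m\prime}(c_t-c_s)$ for all $t,s$.'' Note that the restrictions in item~2 force $\lambda_t^\m/\delta_t^\m>0$ in both models, a fact used repeatedly below.

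For the direction from the first-order conditions to the inequalities, I would take $u$ and $(\lambda_t^\m,\delta_t^\m)$ from Lemma~\ref{lem:Browiningparametric}, pick for each $t$ a supergradient $\xi_t\in\nabla u(c_t)$ with $\delta_t^\m\xi_t\le\lambda_t^\m\rho_t^\m$ (equality on coordinates with $c_{t,j}\neq0$), and set $v_t=u(c_t)$. Concavity (Definition~\ref{def:concavity}) applied with $\tilde c=c_s$, $c=c_t$ and this $\xi_t$ gives $v_s-v_t\le\xi_t\tr(c_s-c_t)$, that is, $v_t-v_s\ge\xi_t\tr(c_t-c_s)$; it then remains to bound the right-hand side below by $(\lambda_t^\m/\delta_t^\m)\rho_t^{\m\prime}(c_t-c_s)$ coordinatewise. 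On coordinates with $c_{t,j}\neq0$ the equality $\delta_t^\m\xi_{t,j}=\lambda_t^\m\rho_{t,j}^\m$ makes $\xi_{t,j}(c_{t,j}-c_{s,j})$ equal to $(\lambda_t^\m/\delta_t^\m)\rho_{t,j}^\m(c_{t,j}-c_{s,j})$. On coordinates with $c_{t,j}=0$ we get $\xi_{t,j}(c_{t,j}-c_{s,j})=-\xi_{t,j}c_{s,j}\ge-(\lambda_t^\m/\delta_t^\m)\rho_{t,j}^\m c_{s,j}=(\lambda_t^\m/\delta_t^\m)\rho_{t,j}^\m(c_{t,j}-c_{s,j})$, using $\delta_t^\m\xi_{t,j}\le\lambda_t^\m\rho_{t,j}^\m$ and $c_{s,j}\ge0$. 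Summing over $j$ yields the claimed inequality, and positivity of $(v_t)$ is then free: adding a common constant to all $v_t$ leaves every inequality unchanged, so one can shift until $v_t>0$ for all $t$.

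For the converse, given $(v_t)$ and multipliers $(\lambda_t^\m,\delta_t^\m)$ satisfying item~2, I would use the Afriat-type function
\[
u(c)=\min_{t\in\mathcal{T}}\Bigl\{v_t+\frac{\lambda_t^\m}{\delta_t^\m}\,\rho_t^{\m\prime}(c-c_t)\Bigr\},\qquad c\in\Real_{+}^{L}.
\]
As a finite minimum of affine functions it is concave and continuous; since each $\rho_t^\m\in\Real_{++}^{L}$ and $\lambda_t^\m/\delta_t^\m>0$, every affine branch is strictly increasing, so $u$ is strictly increasing and hence locally nonsatiated. Rewriting item~2 as $v_t+(\lambda_t^\m/\delta_t^\m)\rho_t^{\m\prime}(c_s-c_t)\ge v_s$ shows that the $t=s$ branch attains the minimum at $c=c_s$, so $u(c_s)=v_s$ for every $s$. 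Since $u(\cdot)\le v_t+(\lambda_t^\m/\delta_t^\m)\rho_t^{\m\prime}(\cdot-c_t)$ everywhere with equality at $c_t$, the vector $\xi_t:=(\lambda_t^\m/\delta_t^\m)\rho_t^\m$ lies in $\nabla u(c_t)$; this supergradient satisfies $\delta_t^\m\xi_t=\lambda_t^\m\rho_t^\m$ exactly, so items~1--2 of Lemma~\ref{lem:Browiningparametric} hold (with equality on all coordinates, in particular on those with $c_{t,j}\neq0$), while items~3--4 hold by hypothesis; Lemma~\ref{lem:Browiningparametric} then gives $\m$-rationalizability.

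The step I expect to require the most care is extracting the first-order conditions---especially the complementary-slackness equality on consumed goods---from the nonsmooth minimum $u$ in the converse direction: this rests on recognizing $(\lambda_t^\m/\delta_t^\m)\rho_t^\m$ as a genuine supergradient of $u$ at $c_t$ and on reading $\delta_t^\m\nabla u(c_t)\le\lambda_t^\m\rho_t^\m$ in the ``there exists $\xi$'' sense of the Remark following Lemma~\ref{lem:Browiningparametric}. The coordinatewise bookkeeping for goods with $c_{t,j}=0$ in the forward direction is routine but is the other place where the argument must be written out carefully.
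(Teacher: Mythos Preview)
Your proof is correct. The paper itself does not give a self-contained proof of Theorem~\ref{thm:Deterministic-thm_exponentialdiscount}; it simply cites Afriat, Varian, and Browning. However, the paper does prove the stochastic analogue (Lemma~\ref{lem:RandomExponentialdiscountingCstar}) in Appendix~\ref{appen: proof Lemma2}, and that is the natural point of comparison.

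Your forward direction---use concavity to get $v_t-v_s\ge\xi_t\tr(c_t-c_s)$, then split coordinates according to whether $c_{t,j}=0$ and use complementary slackness to replace $\xi_t$ by $(\lambda_t^\m/\delta_t^\m)\rho_t^\m$---is exactly the argument the paper uses for (i)$\Rightarrow$(ii) in Lemma~\ref{lem:RandomExponentialdiscountingCstar}, with the same coordinate bookkeeping and the same constant shift to force $v_t>0$. Your converse differs: you build $u$ as the Afriat minimum of the finitely many affine functions $v_t+(\lambda_t^\m/\delta_t^\m)\rho_t^{\m\prime}(c-c_t)$, whereas the paper follows Rockafellar's cyclical-monotonicity construction (Theorem~24.8 in \citealp{rockafellar1970convexanalysis}), defining $u$ as an infimum over all finite chains of indices. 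For finite $\mathcal{T}$ both constructions produce a concave, locally nonsatiated $u$ with $(\lambda_t^\m/\delta_t^\m)\rho_t^\m\in\nabla u(c_t)$; your route is shorter and avoids the chain argument, while the Rockafellar construction is what generalizes cleanly (and is what the paper needs for the almost-sure statement in Lemma~\ref{lem:RandomExponentialdiscountingCstar}). The step you flagged as delicate---reading the first-order conditions in the ``there exists $\xi$'' sense of the Remark after Lemma~\ref{lem:Browiningparametric}---is handled correctly: your $\xi_t=(\lambda_t^\m/\delta_t^\m)\rho_t^\m$ is a genuine supergradient of the minimum at $c_t$ because the $t$-th affine branch is an upper tangent there.
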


Theorem~\ref{thm:Deterministic-thm_exponentialdiscount}  summarizes known results from the RP literature.\footnote{The proof is a consequence from the results in  \citet{afriat1967construction}, \citet{varian1985non}, and \citet{browning1989anonparametric} taken together.} 
Observe that Theorem~\ref{thm:Deterministic-thm_exponentialdiscount} has transformed the first-order conditions  that depend on  the infinite-dimensional $u$ to a set of inequalities that depend only on a deterministic finite-dimensional array $(v_{t},\delta_t^{\m},\lambda^{\m}_{t})_{t\in\mathcal{T}}$. Nonetheless, this set of conditions is satisfied if and only if we can find a utility function that satisfies the conditions in Lemma~\ref{lem:Browiningparametric}.\footnote{In contrast to Afriat's theorem for the static UMT, the assumption of concavity of the utility function is necessary in our framework. The reason is that concavity is testable in some cases that are different from static utility maximization (e.g., expected utility maximization, \citealp{polisson2020revealed}). Concavity guarantees that first-order conditions are necessary and sufficient in a wide variety of models beyond the static UMT. For the additional generality our result requires this additional constraint.} Checking the set of inequalities is a parametric problem and it tells us whether a consumption stream is $\m$-rationalizable. This methodology is traditionally applied at the individual level in panel data sets, assuming that the data contains no measurement error. In the next section we extend the RP framework to a noisy or stochastic environment.

\section{The Revealed-Preference Approach with Measurement Error}\label{sec:edrum}
In this section, we introduce a new statistical notion of $\m$-rationalizability (henceforth, $\s/\m$-rationalizability) with mismeasured consumption or prices, and provide a result similar to Theorem~\ref{thm:Deterministic-thm_exponentialdiscount} in the presence of measurement error. From here on, we use boldface font to denote random objects and regular font for deterministic ones.
\subsection{Statistical Rationalizability}\label{subsec:edrum,model}
We are interested in testing a statistical model of consumption such that each individual is an independent, identically distributed (i.i.d.) draw from some stochastic consumption rule. Note that by Lemma~\ref{lem:Browiningparametric} the choice of a particular model $\m$ only affects the definition of the effective price, and the restrictions on the marginal utility of income and the discount rate. Henceforth, we fix some model such that the effective prices, and the restrictions on the marginal utility of income and the discount rate are known, and we omit the superscript $\m$ from the notation. 
\par
Using Lemma~\ref{lem:Browiningparametric} as motivation, we directly define $\s/\m$-rationalizability as follows. Let $\rands{\rho}^*_t\in P^*_t\subseteq\mathbb{R}^L_{++}$ and $\rand{c}^*_t\in C_t^*\subseteq\mathbb{R}^L_+ \setminus{\{0\}}$ denote random vectors of true effective prices and true consumption at time $t$, respectively.\footnote{For short, we use $\as$ instead of ``almost surely.'' We denote (i) the probability of an event $A$ by the expression $\Prob{A}$; (ii) the indicator function by $\Char{A}=1$ when the statement $A$ is true, otherwise it is zero; (iii) the mathematical expectation of any random vector $\rand{z}$ by the expression $\Exp{\rand{z}}$; (iv) the cardinality of a set $\mathcal{A}$ is given by the expression $|\mathcal{A}|$; and (v) the norm of a vector $v$ is given by $\norm{v}$.}

\begin{defn} [$\s/\m$-rationalizability]
	A random array $(\rands{\rho}^*_{t},\rand c_{t}^{*})_{t\in\mathcal{T}}$
	is $\s/\m$-rationalizable if there exists a tuple $(\rand{u},(\rands{\lambda}_t,\rands{\delta}_t)_{t\in\mathcal{T}})$ such that
	\begin{enumerate}
	    \item $\rand u$ is a random, concave, locally nonsatiated, and continuous utility function;
	    \item $(\rands{\lambda_t})_{t\in\mathcal{T}}$ is a positive random vector, interpreted as the marginal utility of income, supported on or inside a known set $\Lambda\subseteq\Real_{++}^{|\mathcal{T}|}$;
	    \item $(\rands{\delta}_t)_{t\in\mathcal{T}}$ is a positive random vector, interpreted as time-varying discount factor, supported on or inside a known set $\Delta\subseteq(0,1]^{|\mathcal{T}|}$;
	    \item $\rands{\delta}_t\nabla \rands{u}(\rand c_{t}^{*})\leq\rands{\lambda_t}\rands{\rho}^*_{t}\:\as$ for all $t\in\mathcal{T}$;
	    \item For every $j=1,\dots,L$ and $t\in\mathcal{T}$, it must be the case that $\Prob{\rand{c}^*_{t,j}\neq 0,\rands{\delta}_t\nabla \rands{u}(\rand c_{t}^{*})_j<\rands{\lambda_t}\rands{\rho}^*_{t,j}}=0$, where $c_{t,j}^*$, $\rho^*_{t,j}$, and $\nabla {u}(c_{t}^{*})_j$ denote the $j$-th components of $c_t^*$, $\rho^*_t$, and $\nabla{u}(c_{t}^{*})$, respectively.
	\end{enumerate}
\end{defn}

This definition means that for a given realization of (i) the utility function, (ii)  the marginal utility of income, and (iii) the discount rate, the realized effective prices and the realized true consumption should fulfill the inequality $\delta_t\nabla u(c_{t}^{*})\leq\lambda_t \rho_{t}^*$. This is a special case of the dynamic random utility model in which the preferences (captured by $\rand u$), the random discount factor (captured by $(\rands{\delta}_t)_{t\in \mathcal{T}})$, and the distribution of the marginal utility of income (captured by $(\rands{\lambda}_t)_{t\in\mathcal{T}})$ are drawn at some initial time for each consumer, and then are kept fixed over time. 
\par
Several consumer models can be characterized by their first-order conditions and by restrictions on the marginal utility of income, as we observed in Section~\ref{sec:determin}. For instance, we define the statistical version of $\rat$-rationalizability or $\s/\rat$-rationalizability by requiring that the support of the marginal utility of income be strictly positive (i.e., $\Lambda=\Real_{++}^{|\mathcal{T}|}$), and the discount rate to be one (i.e., $\Delta=\{1\}^{|\mathcal{T}|}$). Similarly, we define $\s/\ed$-rationalizability by imposing that $\Lambda=\{1\}^{|\mathcal{T}|}$, and the support $\Delta$ be given by the restriction $\rands{\delta}_t=\rand{d}^{t}$, where $\rand{d}$ is a random variable supported on $(0,1]$. The effective prices in each case are to be defined according to Table~\ref{table:Lambdarhodef}. 
\par
Given the definition of $\s/\m$-rationalizability, we can now formulate the stochastic version of Theorem~\ref{thm:Deterministic-thm_exponentialdiscount}.
\begin{lem}
	\label{lem:RandomExponentialdiscountingCstar} For a given random array $(\rands{\rho}^*_{t},\rand c_{t}^{*})_{t\in\mathcal{T}}$, the following are equivalent:
	\begin{enumerate}
		\item The random array $(\rands{\rho}^*_{t},\rand c_{t}^{*})_{t\in\mathcal{T}}$ is $\s/\m$-rationalizable.
		\item There exist positive random vector $(\rand v_{t})_{t\in\mathcal{T}}$, $(\rands{\lambda_{t}})_{t\in\mathcal{T}}$ supported on or inside $\Lambda$, and $(\rands{\delta}_t)_{t\in\mathcal{T}}$ supported on or inside $\Delta$ such
		that
		\[
		\rand{v}_{t}-\rand{v}_{s}\geq\dfrac{\rands{\lambda}_{t}}{\rands{\delta}_t}\rands{\rho}^{*\prime}_{t}(\rand{c}_{t}^{*}-\rand{c}_{s}^{*})\quad\as,\quad\forall s,t\in\mathcal{T}.
		\]
	\end{enumerate}
\end{lem}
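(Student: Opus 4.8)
The plan is to reduce the stochastic statement in Lemma~\ref{lem:RandomExponentialdiscountingCstar} to the deterministic statement in Theorem~\ref{thm:Deterministic-thm_exponentialdiscount} by arguing \emph{pointwise} (i.e., realization by realization) in the underlying probability space. Both $\s/\m$-rationalizability and the inequality system in part~(ii) are statements that hold almost surely about a random object; the key observation is that, conditional on a realization $\omega$ of the data and of the latent tuple, the objects $(\rho_t^*(\omega),c_t^*(\omega))$, $u(\omega)$, $(\lambda_t(\omega),\delta_t(\omega))$ are deterministic, so Lemma~\ref{lem:Browiningparametric} and Theorem~\ref{thm:Deterministic-thm_exponentialdiscount} apply verbatim to each such realization. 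The only care needed is in handling the measure-theoretic plumbing: ensuring that the pointwise-constructed objects $(v_t)$ can be chosen to depend measurably on $\omega$, so that they constitute a bona fide random vector.

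The steps, in order, would be as follows. First, for the direction (i) $\Rightarrow$ (ii): assume $(\rands{\rho}^*_t,\rand c^*_t)_{t\in\mathcal{T}}$ is $\s/\m$-rationalizable via $(\rand u,(\rands{\lambda}_t,\rands{\delta}_t)_t)$. Fix a realization $\omega$ in the almost-sure event on which conditions (iv)--(v) of the definition hold. For that $\omega$, the deterministic tuple $(u(\omega),(\lambda_t(\omega),\delta_t(\omega))_t)$ satisfies conditions (1)--(2) of Lemma~\ref{lem:Browiningparametric} for the deterministic array $(\rho_t^*(\omega),c_t^*(\omega))_t$ — noting that condition (v) holding with probability zero of violation translates, on this full-measure event, into the complementary-slackness equality "$c_{t,j}\neq 0 \Rightarrow \delta_t \nabla u(c_t)_j = \lambda_t\rho_{t,j}$'' of Lemma~\ref{lem:Browiningparametric}(2). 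Hence $(\rho_t^*(\omega),c_t^*(\omega))_t$ is $\m$-rationalizable (deterministically), so by Theorem~\ref{thm:Deterministic-thm_exponentialdiscount} there exists a positive vector $(v_t(\omega))_t$ with $v_t(\omega)-v_s(\omega)\geq \tfrac{\lambda_t(\omega)}{\delta_t(\omega)}\rho_t^*(\omega)\tr(c_t^*(\omega)-c_s^*(\omega))$ for all $s,t$. One then needs to select these $v_t(\omega)$ measurably; the cleanest route is to exhibit an explicit formula — for instance the Afriat-type construction $v_t(\omega)=\min_{s}\{\text{sums of the weighted revealed-preference terms along chains ending at }t\}$, a finite min over finitely many continuous functions of the data and the latent parameters, hence automatically measurable in $\omega$. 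This defines the random vector $(\rand v_t)_t$, and the displayed inequality holds a.s.

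Second, for (ii) $\Rightarrow$ (i): given the random vectors $(\rand v_t)_t$, $(\rands{\lambda}_t)_t$, $(\rands{\delta}_t)_t$ with the a.s. inequality, fix $\omega$ in the full-measure event where it holds. The deterministic array $(\rho_t^*(\omega),c_t^*(\omega))_t$ together with $(v_t(\omega),\lambda_t(\omega),\delta_t(\omega))_t$ satisfies condition (2) of Theorem~\ref{thm:Deterministic-thm_exponentialdiscount}, so it is $\m$-rationalizable; by Lemma~\ref{lem:Browiningparametric} there is a concave, locally nonsatiated, continuous $u(\omega)$ with $\delta_t(\omega)\nabla u(\omega)(c_t^*(\omega))\leq \lambda_t(\omega)\rho_t^*(\omega)$ and the corresponding complementary slackness. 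Again one must choose $u(\omega)$ measurably in $\omega$: here the standard Afriat construction gives $u(\omega)(c)=\min_t\{ \text{affine functions of }c\text{ with coefficients built from }v_t(\omega),\lambda_t(\omega),\delta_t(\omega),\rho_t^*(\omega)\}$, which is jointly measurable in $(\omega,c)$ and continuous in $c$, so it defines a random utility function $\rand u$. Collecting $(\rand u,(\rands{\lambda}_t,\rands{\delta}_t)_t)$ verifies all five clauses of the $\s/\m$-rationalizability definition — clauses (iv)--(v) because the deterministic inequalities and slackness hold on a full-measure set.

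The main obstacle I anticipate is not the economic content — that is a direct transcription of Lemma~\ref{lem:Browiningparametric} and Theorem~\ref{thm:Deterministic-thm_exponentialdiscount} — but the measurable-selection step: verifying that "$\m$-rationalizable for a.e.\ $\omega$'' upgrades to "there exists a \emph{random} (i.e., jointly measurable) witness''. The safest way around it, and the one I would adopt, is to avoid abstract selection theorems entirely and instead carry the \emph{explicit} Afriat-style formulas for $v_t$ and for $u$ through both directions, since finite minima and maxima of continuous functions of the (measurable) data and latent parameters are automatically measurable; one should also note that the latent parameters $(\rands{\lambda}_t,\rands{\delta}_t)$ stay within their prescribed supports $\Lambda,\Delta$ because the constructions do not alter them, and that positivity of $(\rand v_t)_t$ is preserved (or can be arranged by adding a constant, which does not disturb the inequalities).
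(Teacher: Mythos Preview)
Your proposal is correct and essentially mirrors the paper's proof: both carry out the deterministic constructions underlying Theorem~\ref{thm:Deterministic-thm_exponentialdiscount} pointwise in $\omega$, with the paper simply writing the explicit formulas in the random setting (without separately flagging the measurability issue you rightly raise). One small correction: for (i)$\Rightarrow$(ii) the explicit measurable choice is $\rand{v}_t = \rand{u}(\rand{c}_t^*) - \min_{s} \rand{u}(\rand{c}_s^*) + \epsilon$, since $\rand u$ is already part of the data---the chain-based Afriat formula you describe is the tool for the \emph{reverse} direction (ii)$\Rightarrow$(i), where the paper builds $\rand u$ via Rockafellar's Theorem~24.8, exactly your ``min of affine functions.''
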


Lemma~\ref{lem:RandomExponentialdiscountingCstar} allows us to statistically test the $\s/\m$-rationalizability of $(\rands{\rho}^*_{t},\rand c_{t}^{*})_{t\in\mathcal{T}}$. However, as the following example demonstrates, any test based on this notion of rationalizability cannot differentiate between ``almost'' $\s/\m$-rationalizability and exact $\s/\m$-rationalizabilty (an issue first identified by \citealp{Galichon2013db}).

\begin{example}[Hyperbolic Discounting]\label{ex: hyperbolic}
Consider the case of a consumer who maximizes
\[
V_{\tau}(c)=u(c_{\tau})+\beta \sum_{j=1}^{T-\tau}d^{j}u(c_{\tau+j}),
\]
where $\beta\in(0,1]$ is the present-bias parameter. It is easy to see that if $\beta\rightarrow1$, then the consumption stream generated by this model is arbitrarily close to the $\ed$-rationalizable behavior.
\end{example}
Example~\ref{ex: hyperbolic} presents a random choice rule that is not $\s/\m$-rationalizable, but is arbitrarily close to being $\s/\m$-rationalizable. In other words, there may exist sequences of random arrays that are not $\s/\m$-rationalizable that converge to random arrays that are $\s/\m$-rationalizable.
That is why we need to extend the notion of the consistency of a data set that is characterized by $\s/\m$-rationalizability.

\begin{defn} [Approximate $\s/\m$-rationalizability]
We say that $(\rands{\rho}^*_{t},\rand c_{t}^{*})_{t\in\mathcal{T}}$ is \textbf{approximately consistent} with $\s/\m$-rationalizability if there exists a sequence of random variables $(\rand v_j\tr,\rands{\lambda}_{j}\tr,\rands{\delta}_{j}\tr)\tr\in \Real^{\abs{\mathcal T}}_{+}\times\Lambda\times\Delta$, $j=1,2,\dots$, such that
\[
\Prob{\Char{\rand v_{j,t}-\rand v_{j,s}\geq\dfrac{\rands{\lambda}_{j,t}}{\rands{\delta}_{j,t}}\rands{\rho}^{*\prime}_{t}[\rand c_{t}^{*}-\rand c_{s}^{*}]}=1}\to_{j\to +\infty}1,
\]
for all $s,t\in\mathcal{T}$.
\end{defn}

\subsection{Introducing Measurement Error}\label{subsec:edrum,meas er}
Theorem~\ref{thm:Deterministic-thm_exponentialdiscount} and Lemma~\ref{lem:RandomExponentialdiscountingCstar} provide testable implications of $\s/\m$-rationalizability. These implications depend solely on the distribution of $\rands{\lambda}=(\rands{\lambda}_t)_{t\in\mathcal{T}}$, $\rands{\delta}=(\rands{\delta}_t)_{t\in\mathcal{T}}$ and $\rand v=(\rand v_{t})_{t\in\mathcal{T}}$. The usual approach to testing $\s/\m$-rationalizability would amount to solving a (non)linear programming problem corresponding to Theorem~\ref{thm:Deterministic-thm_exponentialdiscount} at the level of individual consumers. However, this common practice does not work any more in the presence of measurement error. When \textbf{true consumption} or \textbf{true prices} are measured erroneously, we observe not $\rand c_{t}^{*}$ or $\rands{\rho}^*_t$ but rather perturbed versions of them. (See Section~\ref{sec:error in data} for a discussion of the reasons measurement error in consumption and prices arises in survey, experimental, and scanner data sets.)
\par
Define the \textbf{measurement error} $\rand{w}=(\rand w_{t})_{t\in\mathcal{T}}\in W$ as the difference between reported consumption and prices, $\rand{c}=(\rand c_t)_{t\in\mathcal{T}}$ and $\rands{\rho}=(\rands{\rho}_t)_{t\in\mathcal{T}}$; and true consumption and prices, $(\rand c_t^*)_{t\in\mathcal{T}}$ and $(\rands{\rho}_t^*)_{t\in\mathcal{T}}$. That is,
\[
\rand w_t=\left(\begin{array}{c}\rand w_t^c\\
\rand w_t^p\end{array}
\right),
\]
where $\rand w_{t}^{c}=\rand c_{t}-\rand c_{t}^{*}$ and $\rand w_{t}^{p}=\rands \rho_{t}-\rands \rho_{t}^{*}$ for all $t\in\mathcal{T}$.
\par
It is important to note that we \emph{define} the measurement error. We do not make any assumptions about how the difference between observed and true quantities arises (i.e., we allow for measurement error to be multiplicative or additive).\footnote{Formally, this makes the support $W$ depend on the support of both the observed and the true quantities. For simplicity we omit this dependency from the notation.} Moreover, we \textbf{do not} need to assume that measurement error is independent of other variables, independent within time periods, or independent across goods. 
\par
By Lemma~\ref{lem:RandomExponentialdiscountingCstar} we can immediately conclude that the observed $\rand{x}=(\rands{\rho}_{t},\rand c_{t})_{t\in\mathcal{T}}$ can be  $\s/\m$-rationalized  if and only if there exist $(\rands{\lambda}_t, \rands{\delta}_t,\rand v_t,\rand{w}_t)_{t\in\mathcal{T}}$, with $(\rands{\lambda}_t)_{t\in\mathcal{T}}$ supported on or inside $\Lambda$, and $(\rands{\delta}_t)_{t\in\mathcal{T}}$ supported on or inside $\Delta$ such that
\[
\rand{v}_t-\rand{v}_s\geq\dfrac{\rands{\lambda}_t}{\rands{\delta}_t}(\rands{\rho}_{t}-\rand w^p_t)\tr(\rand{c}_{t}-\rand{c}_{s}+\rand{w}^c_{s}-\rand{w}^c_{t})\quad\as,\forall s,t\in\mathcal{T}.
\]
However, we know that without restrictions on the distribution of measurement error, RP tests have no power. That is, there always exists a measurement error $\rand w$ such that the observed $\rand x$ is consistent with $\s/\m$-rationalizability. Hence, we require access to additional validation information about measurement error. The source of measurement error is different in different applications. That is why in this section we formulate a general restriction on the measurement error distribution that can be tailored for a given empirical application. 
\par
Recall that $\rand{x}\in X$ denotes observed quantities. Let $\rand{e}=(\rands\lambda\tr,\rands{\delta}\tr,\rand v\tr,\rand w\tr)\tr\in E|X$ denote the vector of latent random variables, supported on or inside the conditional support $E|X$. We say that a mapping $g_{M}:X\times E|X\to\Real^{d_{M}}$ is a \emph{measurement error moment}. We only require the following condition on measurement error.

\begin{assumption}[Centered Measurement Error]\label{assu: general measurement error}
    (i) The random vector $\rand{e}$ is supported on or inside the known support $E|X$. (ii) There exists a known measurement error moment $g_{M}:X\times E|X\to\Real^{d_{M}}$ such that
    \[
    \Exp{g_M(\rand{x},\rand{e})}=0.
    \]
\end{assumption}

The choice of function $g_M$ depends on the application and the assumptions the researcher is willing to make on the basis of the available knowledge about the nature of measurement error. In Section~\ref{sec:error in data} we provide examples of moment conditions in data sets that are often used in the RP literature. The objects of interest for us are measurement error in consumption, expenditure, and prices.

\section{Recoverability and Counterfactuals}\label{sec:recoverabilitycounterfactual}
\citet{varian1982nonparametric,varian1984nonparametric} exploits the
connections between empirical content and
counterfactuals. In particular, \citet{varian1982nonparametric} seems to be the first to think of nonparametric counterfactual analysis as specification testing.\footnote{Recent work building on these connections includes \citet{blundell2003nonparametric} and \citet{allen2019identification} in demand analysis, and \citet{norets2014semiparametric} in the analysis of dynamic binary choice models.} Following these ideas, our formulation of rationalizability allows us to answer questions about the recoverability of, and counterfactual predictions for, different objects of interest. 
\par
In Section~\ref{sunsec:Recoverability} we show how to recover different quantities of interest (e.g., average true consumption at a given $t=\tau$) from the $\s/\m$-rationalizable data set. In Section~\ref{subsec:Counterfactuals} we demonstrate how to make out-of-sample predictions for expected consumption in a way that is analogous to \citet{blundell2014bounding}. In the presence of measurement error, distributional information about the primitives of the model of interest  is inevitably lost. Hence, we cannot apply the traditional approach proposed by \citet{varian1982nonparametric} to recover preferences and to do counterfactual analysis on an individual basis. Instead, we use this section to pose questions about the primitives of the model at the level of the population.

\subsection{Recoverability}\label{sunsec:Recoverability}
Assume that $\rand{x}=(\rands{\rho}_{t},\rand c_{t})_{t\in\mathcal{T}}$ can be  $\s/\m$-rationalized and Assumption~\ref{assu: general measurement error} holds. Suppose that there is a finite-dimensional parameter of interest $\theta_0\in\Theta$, where $\Theta$ is a compact subset of the Euclidean space. The parameter of interest is related to the model via the user-specified moment condition
\[
\Exp{g_{R}(\rand{x},\rand{e};\theta_0)}=0\in\Real^{d_{R}}.
\]
The function $g_{R}$ can take different forms depending on the different questions the user wants to answer. We provide some examples here.
\begin{example} [Expected True Consumption and Expected True Consumption Change]  If $\theta_0$ is the expected true consumption at $t=\tau$, then $g_{R}(x,e;\theta_0)=c_{\tau}-w^c_{\tau}-\theta_0$. If $\theta_0$ is the expected difference in true consumption at $t=\tau+1$ and $t=\tau$, then  $g_{R}(x,e;\theta_0)=c_{\tau+1}-w^c_{\tau+1}-c_{\tau}+w^c_{\tau}-\theta_0$. 
\end{example}
The user may also be interested in testing the joint null hypothesis that (i) the consumer is $\s/\ed$-rationalizable and (ii) the random discount factor distribution has certain properties. 
\begin{example}[Average Random Discount Factor] The user may be interested in testing whether the average value of the random discount factor is equal to a certain fixed value, in which case $g_{R}(x,e;\theta_0)=d-\theta_0$.
\end{example}
In addition, our framework allows us to have, as a special case, latent random variables with flexible support. 
\begin{example}[Support of the Random Discount Factor] The user may be interested in whether the random time-discount factor $\rand{d}$ has a support on or inside $[\theta_{01},\theta_{02}]\subseteq(0,1]$. Then, for $\theta_0=(\theta_{01},\theta_{02})\tr$, one can define $g_{R}(x,e;\theta_0)=\Char{\theta_{01}\leq d\leq\theta_{02}}-1$.
\end{example}

\subsection{Counterfactual Out-of-Sample Predictions}\label{subsec:Counterfactuals}
We consider a counterfactual situation in which the user is given an
out-of-sample effective random price vector $\rands{\rho}^*_{T+1}$ (supported in $\Real_{++}^L$), a data set $\rand{x}=(\rands{\rho}_{t},\rand c_{t})_{t\in\mathcal{T}}$ such that Assumption~\ref{assu: general measurement error} holds, and she then asks two related questions. 
First, the user wants to know if there exists a counterfactual random consumption vector $\rand c^*_{T+1}$ such that the augmented random array $\{(\rands{\rho}^*_{t},\rand{c}^*_{t})_{t\in\mathcal{T}}, (\rands{\rho}^*_{T+1},\rand{c}^*_{T+1})\}$
is approximately $\s/\m$-rationalizable, where $\rand{c}^*_t=\rand{c}_t-\rand{w}^c_t$ and $\rands{\rho}^*_t=\rands{\rho}_t-\rand{w}^p_t$.  
\par
Second, if the answer to the first question is affirmative, then the user will be interested in constructing confidence sets for some counterfactual finite-dimensional parameter $\theta_0\in\Theta$. 
The parameter $\theta_0$ satisfies the user-specified moment condition 
\[
\Exp{g_C\left((\rands{\rho}^*_{t},\rand c^*_{t})_{t\in\mathcal{T}},\rand{c}^*_{T+1};\rands{\rho}^*_{T+1},\theta_0\right)}=0\in\Real^{d_{C}}.
\]
Both questions can be answered simultaneously with our characterization of $\s/\m$-rationalizability. Observe that the answer to the first question is negative if the random array $(\rands{\rho}^*_{t},\rand c^*_{t})_{t\in\mathcal{T}}$ is not $\s/\m$-rationalizable. In contrast, if the random array $(\rands{\rho}^*_{t},\rand c^*_{t})_{t\in\mathcal{T}}$ is $\s/\m$-rationalizable, then the counterfactual exercise is equivalent to checking that the counterfactual price/consumption distribution is simultaneously compatible with $\s/\m$-rationalizability and the user-specified moment condition. Formally, to answer both questions, we define what it means for a random array $(\rands{\rho}^*_{t},\rand c^*_{t})_{t\in\mathcal{T}}$ to be counterfactually rationalizable ($\cf/\m$-rationalizability) for a given $\rands{\rho}^*_{T+1}$, $\theta_0$, and $g_C$.

\begin{defn}[$\cf/\m$-rationalizability]\label{def:counterfactuals}
For a given $\rands{\rho}^*_{T+1}$, $g_C$, and $\theta_{0}$, a random array $(\rands{\rho}^*_{t},\rand c^*_{t})_{t\in\mathcal{T}}$ is approximately $\cf/\m$-rationalizable if there exist $\rand{c}^*_{T+1}$ such that 
\begin{enumerate}
\item The augmented random array $\{(\rands{\rho}^*_{t},\rand c^*_{t})_{t\in\mathcal{T}},(\rands{\rho}^*_{T+1},\rand{c}^*_{T+1})\}$ is approximately $\s/\m$-rationalizable;
\item $\Exp{g_C((\rands{\rho}^*_{t},\rand c^*_{t})_{t\in\mathcal{T}},\rand{c}^*_{T+1};\rands{\rho}^*_{T+1},\theta_0)}=0$.
\end{enumerate}
\end{defn}


Observe that if a random array $(\rands{\rho}^*_{t},\rand c^*_{t})_{t\in\mathcal{T}}$ is $\cf/\m$-rationalizable for a given $(\rands{\rho}^*_{T+1},\rand{c}_{T+1}^*)$ and $\theta_0$, then it is also $\s/\m$-rationalizable. However, the opposite is not always true. 
\par
We can apply Lemma~\ref{lem:RandomExponentialdiscountingCstar} to Definition~\ref{def:counterfactuals} and get an extended system of the RP inequalities coupled with the counterfactual moment conditions $g_C$. Moreover, we can define an identified set for counterfactual parameter values $\Theta_0$. Formally,
\[
\Theta_0=\left\{\theta_0\in\Theta\::\: \text{$(\rands{\rho}^*_{t},\rand c^*_{t})_{t\in\mathcal{T}}$ is approximately $\cf/\m$-rationalizable given $\rands{\rho}^*_{T+1}$, $\theta_0$, and $g_C$}\right\}.
\]
We highlight that our framework can accommodate additional support restrictions on the counterfactual objects. A classical support constraint is a target out-of-sample expenditure level (i.e., $\rands{\rho}^{*\prime}_{T+1}\rand{c}^{*}_{T+1}=1\:\as$) as in \citet{varian1982nonparametric}. We omit these constraints from our discussion to simplify exposition. 
\begin{example}[Average Varian Support Set]
We consider a moment 
\[
g_C((\rho^*_{t},c^*_{t})_{t\in\mathcal{T}},c^*_{T+1};\rho^*_{T+1},\theta_0)=c^*_{T+1}-\theta_0,
\]
with $\theta_0\in\Theta=\Real_{+}^L\setminus{\{0\}}$ as a hypothesized average-demand vector. Thus, $\Theta_0$ is the Average Varian Support Set.
Given $\rands{\rho}^*_{T+1}$ this set describes the bounds on the average demand that is compatible with the $\s/\m$-rationalizability of the random array $(\rands{\rho}^*_{t},\rand c^*_{t})_{t\in\mathcal{T}}$.
\end{example}

\begin{example}[Quantile Varian Support Set] For $\s/\rat$-rationalizability, we can consider the following moment condition:
\[ 
g_C((\rho^*_{t},c^*_{t})_{t\in\mathcal{T}},c^*_{T+1};\rho^*_{T+1},\theta)=\Char{\rho_{T+1}^{*\prime} c^*_{T+1}\leq \bar{e}_c}-\phi,
\]
where $\theta=(\bar e_c,\phi)\tr\in\Real_{++}\times[0,1]$, $\bar e_c$ is a fixed $\phi$-quantile of the counterfactual expenditure distribution. Next we can define the $\phi$-quantile Varian Support Set:
\[
\left\{c\in\Real_{+}^L\setminus\{0\}\::\:\rho_{T+1}\tr {c}=\bar e_c,\: (\bar e_c,\phi)\tr\in\Theta_0\right\}.
\]
This set describes the bounds of the counterfactual demand for a given $\rands{\rho}^*_{T+1}$ and $\phi$-quantile of $\rand{u}(\rand{c}^*_{T+1})$ that is compatible with $\s/\rat$-rationalizability.
\end{example}
\par
Some counterfactual questions (e.g., the Average Varian Support Set) lead to convex identified sets $\Theta_0$.
\begin{proposition}\label{prop:convexity}
If the parameter space $\Theta$ is convex and $g_{C}$ is such that
\[
g_C((\rho^*_{t},c^*_{t})_{t\in\mathcal{T}},c^*_{T+1};\rho^*_{T+1},\theta)=\tilde g((\rho^*_{t}, c^*_{t})_{t\in\mathcal{T}},c^*_{T+1};\rho^*_{T+1})-A((\rho^*_{t}, c^*_{t})_{t\in\mathcal{T}};\rho^*_{T+1})\tr\theta
\]
for some $\tilde g$ and $A$, then $\Theta_0$ is convex. 
\end{proposition}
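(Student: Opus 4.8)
The plan is to prove convexity of $\Theta_0$ directly from \cref{def:counterfactuals} by a mixing argument. Fix $\theta_0^{(0)},\theta_0^{(1)}\in\Theta_0$ and $\alpha\in[0,1]$, and set $\theta_0^{\alpha}=(1-\alpha)\theta_0^{(0)}+\alpha\theta_0^{(1)}$. Since $\Theta$ is convex, $\theta_0^{\alpha}\in\Theta$, so it suffices to produce a counterfactual consumption vector together with a latent certificate witnessing that $(\rands{\rho}^*_{t},\rand c^*_{t})_{t\in\mathcal{T}}$ is approximately $\cf/\m$-rationalizable at $\theta_0^{\alpha}$. By the definition of $\Theta_0$, for each $k\in\{0,1\}$ there exist a vector $\rand c^{*,(k)}_{T+1}$ and a sequence $(\rand v^{(k)}_{j},\rands{\lambda}^{(k)}_{j},\rands{\delta}^{(k)}_{j})_{j\ge1}$ certifying approximate $\s/\m$-rationalizability of the augmented array $\{(\rands{\rho}^*_{t},\rand c^*_{t})_{t\in\mathcal{T}},(\rands{\rho}^*_{T+1},\rand c^{*,(k)}_{T+1})\}$, with $\Exp{g_{C}((\rands{\rho}^*_{t},\rand c^*_{t})_{t\in\mathcal{T}},\rand c^{*,(k)}_{T+1};\rands{\rho}^*_{T+1},\theta_0^{(k)})}=0$.

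Next I would enlarge the probability space to carry a Bernoulli random variable $\rand B$ with $\Prob{\rand B=1}=\alpha$, independent of the data $(\rands{\rho}^*_{t},\rand c^*_{t})_{t\in\mathcal{T}}$, of $\rands{\rho}^*_{T+1}$, and of both families $\bigl(\rand c^{*,(k)}_{T+1},(\rand v^{(k)}_{j},\rands{\lambda}^{(k)}_{j},\rands{\delta}^{(k)}_{j})_{j}\bigr)$, each of which keeps its joint law with the data; this is a routine product-space construction. Then on the event $\{\rand B=k\}$ set $\rand c^*_{T+1}=\rand c^{*,(k)}_{T+1}$ and $(\rand v_{j},\rands{\lambda}_{j},\rands{\delta}_{j})=(\rand v^{(k)}_{j},\rands{\lambda}^{(k)}_{j},\rands{\delta}^{(k)}_{j})$. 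Because on each such event these objects coincide with a genuine certificate, $\rand c^*_{T+1}$ is supported in $\Real^{L}_{+}\setminus\{0\}$ and $(\rands{\lambda}_{j},\rands{\delta}_{j})$ is supported in $\Lambda\times\Delta$ almost surely. This is precisely the step that must be implemented by randomization rather than by an arithmetic average of the two certificates: the supports $\Lambda$ and $\Delta$ (for instance $\rands{\delta}_{t}=\rand d^{t}$ with $\rand d\in(0,1]$ under $\s/\ed$) need not be convex, so a coordinatewise average of admissible certificates need not be admissible, whereas selecting one of them state by state keeps every realization inside the prescribed support. The convex combination thus lives at the level of the parameter $\theta_0$, not at the level of the latent variables.

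It then remains to verify the two clauses of \cref{def:counterfactuals} for $\bigl(\rand c^*_{T+1},(\rand v_{j},\rands{\lambda}_{j},\rands{\delta}_{j})_{j}\bigr)$ and $\theta_0^{\alpha}$. For the rationalizability clause, fix $s,t\in\mathcal{T}\cup\{T+1\}$ and $j$: on $\{\rand B=k\}$ the stage-$j$ inequality $\rand v_{j,t}-\rand v_{j,s}\ge(\rands{\lambda}_{j,t}/\rands{\delta}_{j,t})\rands{\rho}^{*\prime}_{t}(\rand c^*_{t}-\rand c^*_{s})$ is literally the corresponding inequality for certificate $k$ applied to the augmented array with $\rand c^{*,(k)}_{T+1}$; since $\rand B$ is independent of everything else, the probability that it holds equals $(1-\alpha)$ times its certificate-$0$ probability plus $\alpha$ times its certificate-$1$ probability, and both tend to $1$ as $j\to\infty$, so the augmented array is approximately $\s/\m$-rationalizable. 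For the moment clause, use that $g_{C}$ is affine in $\theta$, that $A$ does not involve $c^*_{T+1}$, and that $\theta_0^{\alpha}$ is nonrandom: conditioning on $\rand B$ (which is independent of the data, of $\rands{\rho}^*_{T+1}$, and of the $\rand c^{*,(k)}_{T+1}$) gives $\Exp{\tilde g(\cdot,\rand c^*_{T+1};\cdot)}=(1-\alpha)\Exp{\tilde g(\cdot,\rand c^{*,(0)}_{T+1};\cdot)}+\alpha\Exp{\tilde g(\cdot,\rand c^{*,(1)}_{T+1};\cdot)}$ while $\Exp{A(\cdot)}$ is the same in all three scenarios, so that
\[
\Exp{g_{C}(\cdot;\theta_0^{\alpha})}=(1-\alpha)\Exp{g_{C}(\cdot,\rand c^{*,(0)}_{T+1};\cdot,\theta_0^{(0)})}+\alpha\Exp{g_{C}(\cdot,\rand c^{*,(1)}_{T+1};\cdot,\theta_0^{(1)})}=0 .
\]
Hence $\theta_0^{\alpha}\in\Theta_0$, and $\Theta_0$ is convex.

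The mathematical content is light; the care goes into two bookkeeping points. First, the legitimacy of the enlarged space — which can be avoided entirely by phrasing the argument as a mixture of the induced joint distributions of (data, counterfactual consumption, certificate) rather than introducing $\rand B$. Second, keeping straight that $A$ is a function of the data alone, so that $\Exp{A}$ is scenario-invariant and the affine-in-$\theta$ structure of $g_{C}$ makes the moment condition close up under mixing. The one conceptual thing to get right is that the non-convexity of $\Lambda$ and $\Delta$ is exactly why the convex combination must be realized by mixing models, not by combining certificates.
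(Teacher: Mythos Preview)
Your proof is correct and follows essentially the same mixing argument as the paper: the paper takes the convex combination $\mu_{\lambda,k}=\lambda\mu_{1,k}+(1-\lambda)\mu_{2,k}$ of the conditional latent-variable distributions in the moment characterization (Theorems~\ref{thm:DistributionalRP} and~\ref{thm:ELVISexponentialdiscounting}), whereas you realize the same mixture via an independent Bernoulli selector working directly from \cref{def:counterfactuals}. Your explicit remark that the mixture must live at the distribution level rather than by averaging certificates---because $\Lambda$ and $\Delta$ need not be convex---is a point the paper leaves implicit but is exactly why its proof mixes measures rather than latent variables.
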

Proposition~\ref{prop:convexity} imposes two restrictions on $g_C$: (i) additive separability between $\rand{c}^*_{T+1}$ and $\theta$; (ii) affinity of the moment condition in $\theta$. In Section~\ref{sec:econometric framework} we provide a framework to construct confidence sets for the counterfactual parameters by means of the test inversion. Convexity of $\Theta_0$ can substantially simplify the computation of the confidence sets since one does not need to conduct test inversion at every point of the parameter space. 
\section{Econometric Framework}\label{sec:econometric framework}
In Sections~\ref{subsec:edrum,meas er} and~\ref{subsec:Counterfactuals} we showed how testing, recoverability, and counterfactuals in RP models with measurement error can be framed in the form of moment conditions. In this section we recast the empirical content of the RP inequalities in a form amenable to statistical testing. To simplify the exposition we will focus on testing $\s/\m$-rationalizability in the presence of measurement error ($g_M$ and $g_I$ only). 
\subsection{Characterization of the Model via Moment Conditions}\label{subsec:edrum,moments}
First, we write a set of moment conditions that will summarize the empirical content of $\s/\m$-rationalizability.
Recall that $\rand x\in X$ denotes observed quantities and $\rand e=(\rands\lambda\tr,\rands \delta\tr,\rand v\tr,\rand w\tr)\tr\in E|X$ denote the vector of latent random variables. The support $E|X$ depends on the fixed supports $\Lambda$ and $\Delta$ that characterizes the particular model of interest. We use $\mathcal{P}_{X}$, $\mathcal{P}_{E,X}$, and $\mathcal{P}_{E|X}$ to denote the set of all probability measures defined over the support of $\rand x$, $(\rand e\tr,\rand x\tr)\tr$, and $\rand{e}|\rand{x}$, respectively. (Recall that the boldface font letters denote random objects.) Define the following moment functions:
\begin{align*}
g_{I,t,s}(\rand{x},\rand{e})&=\Char{\rand v_{t}-\rand v_{s}-\frac{\rands\lambda_t}{\rands{\delta}_t}(\rands{\rho}_{t}-\rand{w}_t^p)\tr[\rand c_{t}-\rand w^c_{t}-\rand c_{s}+\rand w^c_{s}]\geq 0}-1,\quad t\neq s\in\mathcal{T},\\
g(\rand{x},\rand{e})&=(g_{I}(\rand{x},\rand{e})\tr,g_{M}(\rand{x},\rand{e})\tr)\tr.
\end{align*}
We have $k=\abs{\mathcal T}^2-\abs{\mathcal T}$ and $q=d_M$ moment functions which correspond to inequality conditions ($g_{I}$) and the measurement error centering conditions ($g_{M}$), respectively.\footnote{If in addition, the user includes moments $g_{R}$ or $g_C$, then $q=d_M+d_R$ or $q=d_M+d_C$, respectively.} 
Define $\Expt[\mu\times\pi]{g(\rand{x},\rand{e})}=\int_{X}\int_{E|X}g(x,e)d\mu d\pi$, where $\mu\in\mathcal{P}_{E|X}$ and $\pi\in\mathcal{P}_{X}$.
\begin{thm}
	\label{thm:DistributionalRP}The following are equivalent:
	\begin{enumerate}
		\item A random vector $\rand x=(\rands{\rho}_{t},\rand c_{t})_{t\in\mathcal{T}}$
		is approximately $\s/\m$-rationalizable  such that Assumption
		\ref{assu: general measurement error} holds.
		\item
		\[
		\inf_{\mu\in\mathcal{P}_{E|X}}\norm{\Expt[\mu\times\pi_0]{g(\rand x,\rand e)}}=0,
		\]
		where $\pi_0\in\mathcal{P}_{X}$ is the observed distribution of $\rand x$.
	\end{enumerate}
\end{thm}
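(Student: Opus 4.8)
The plan is to establish the two implications by reducing everything to Lemma~\ref{lem:RandomExponentialdiscountingCstar} and then exploiting the elementary fact that each inequality moment $g_{I,t,s}$ is an indicator minus one, so that its expectation under any candidate conditional law of the latent vector equals the probability that the corresponding revealed-preference inequality holds, minus one.

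First I would restate (1) so that the latent vector is explicit. Writing the true quantities as $\rand c_t^*=\rand c_t-\rand w_t^c$ and $\rands\rho_t^*=\rands\rho_t-\rand w_t^p$, Lemma~\ref{lem:RandomExponentialdiscountingCstar} applied to the unobserved true array, combined with the definition of approximate $\s/\m$-rationalizability and Assumption~\ref{assu: general measurement error}, says that (1) holds if and only if there is a sequence of latent random vectors $\rand e_j=(\rands\lambda_j\tr,\rands\delta_j\tr,\rand v_j\tr,\rand w_j\tr)\tr$, each supported on or inside $E|X$ and each carried by a joint distribution of $(\rand e,\rand x)$ whose $\rand x$-marginal is $\pi_0$, such that $\Exp{g_M(\rand x,\rand e_j)}\to 0$ and, for every $t\neq s$ in $\mathcal T$, $\Prob{A_{j,t,s}}\to 1$, where $A_{j,t,s}$ is the event
\[
\rand v_{j,t}-\rand v_{j,s}\ \geq\ \dfrac{\rands\lambda_{j,t}}{\rands\delta_{j,t}}\,(\rands\rho_t-\rand w_{j,t}^p)\tr\bigl[\rand c_t-\rand w_{j,t}^c-\rand c_s+\rand w_{j,s}^c\bigr].
\]
Since $\mathcal T$ is finite there are only $k=\abs{\mathcal T}^2-\abs{\mathcal T}$ such events, so by a union bound the condition ``$\Prob{A_{j,t,s}}\to 1$ for every pair'' is equivalent to $\Prob{\bigcap_{t\neq s}A_{j,t,s}}\to 1$, reconciling the moment-by-moment statement with the simultaneous almost-sure form of Lemma~\ref{lem:RandomExponentialdiscountingCstar}.

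For $(1)\Rightarrow(2)$, take the sequence $\rand e_j$ from this reformulation and let $\mu_j\in\mathcal P_{E|X}$ be the conditional law of $\rand e_j$ given $\rand x$. By definition $g_{I,t,s}(\rand x,\rand e_j)=\Char{A_{j,t,s}}-1\in\{0,-1\}$, so $\Expt[\mu_j\times\pi_0]{g_{I,t,s}(\rand x,\rand e)}=\Prob{A_{j,t,s}}-1\to 0$ for every pair, while $\Expt[\mu_j\times\pi_0]{g_M(\rand x,\rand e)}=\Exp{g_M(\rand x,\rand e_j)}\to 0$ by construction; since $g$ has finitely many coordinates, $\norm{\Expt[\mu_j\times\pi_0]{g(\rand x,\rand e)}}\to 0$, and the infimum in (2), being nonnegative, must be zero. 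For $(2)\Rightarrow(1)$, pick a minimizing sequence $\mu_j\in\mathcal P_{E|X}$ with $\norm{\Expt[\mu_j\times\pi_0]{g(\rand x,\rand e)}}\to 0$ and let $\rand e_j$ denote the latent component under the joint law $\mu_j\times\pi_0$ (whose $\rand x$-marginal is $\pi_0$). Coordinatewise convergence yields $\Exp{g_M(\rand x,\rand e_j)}\to 0$ and $\Prob{A_{j,t,s}}-1=\Expt[\mu_j\times\pi_0]{g_{I,t,s}(\rand x,\rand e)}\to 0$ for every pair, i.e.\ exactly the conditions in the reformulation of (1).

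The step I expect to be the main obstacle is not a computation but the bookkeeping that makes the reformulation of (1) legitimate: one has to push the measurement-error decomposition through Lemma~\ref{lem:RandomExponentialdiscountingCstar} so that the split $\rands\rho_t=\rands\rho_t^*+\rand w_t^p$, $\rand c_t=\rand c_t^*+\rand w_t^c$---and hence the implied ``true'' array---is itself part of what is being chosen along the approximating sequence, while the $\rand x$-marginal is held equal to $\pi_0$ at every stage. A secondary delicate point concerns the centering block: if (1) is read as requiring $\Exp{g_M(\rand x,\rand e_j)}=0$ exactly for each $j$ rather than only in the limit, then $(2)\Rightarrow(1)$ delivers a priori only $\Exp{g_M}\to 0$, and upgrading this would use that the attainable set $\{\Expt[\mu\times\pi_0]{g(\rand x,\rand e)}:\mu\in\mathcal P_{E|X}\}$ is convex---being the $\pi_0$-average of the conditional closed convex hulls of the sets $\{g(x,e):e\in E|x\}$---together with a closedness or tightness argument for $\{\mu_j\}$. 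This last argument is the only genuinely technical one, and it is delicate precisely because $\rand v$ and $\rand w$ need not be bounded (which is why the attainable set need not be closed, consistently with Example~\ref{ex: hyperbolic}), so one would first invoke an Afriat-type normalization of $(\rand v_t,\rands\lambda_t)$ to recover the compactness needed to pass to a limit.
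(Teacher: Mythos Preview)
The paper does not supply a separate proof of Theorem~\ref{thm:DistributionalRP}; it is absent from Appendix~\ref{sec:appendix} (which proves Lemma~\ref{lem:RandomExponentialdiscountingCstar}, Proposition~\ref{prop:convexity}, and Theorems~\ref{thm:ELVISexponentialdiscounting}--\ref{thm:test} and~\ref{thm:CollectiveSufficient}), so the authors regard the result as an immediate reformulation of the definitions once Lemma~\ref{lem:RandomExponentialdiscountingCstar} and the measurement-error decomposition of Section~\ref{subsec:edrum,meas er} are in place. Your argument is precisely that reformulation made explicit, and it is correct: the key observation that each $g_{I,t,s}$ takes values in $\{-1,0\}$, so that $\Expt[\mu\times\pi_0]{g_{I,t,s}(\rand x,\rand e)}=\Prob{A_{t,s}}-1$, is exactly the bridge between the probability statements in the definition of approximate $\s/\m$-rationalizability and the moment formulation in~(ii), and the finiteness of $\mathcal T$ handles the passage between pairwise and joint events.

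On the two subtleties you flag: the paper's intended reading of~(i) is the approximate one in which the centering condition is also satisfied only along the sequence and the measurement error $\rand w_j$ is allowed to vary with $j$ as part of $\rand e_j$. This is what the paragraph immediately following the theorem means by ``our notion of rationalizability corresponds to the extended notion of an identified set in \citet{schennach2014entropic}'', and it is why $\rand w$ is packaged inside $\rand e$ and the infimum in~(ii) ranges over all of $\mathcal P_{E|X}$. Under that reading your two implications are complete as written, and the convexity/closedness upgrade in your final paragraph---while a reasonable instinct---is not needed; indeed Example~\ref{ex: hyperbolic} is there precisely to signal that the authors do \emph{not} claim the infimum is attained.
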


Theorem~\ref{thm:DistributionalRP} establishes the equivalence between (i) $\s/\m$-rationalizability with the centered measurement error condition and (ii) a system of moment conditions. In other words, the observed consumption pattern, captured by the random array $(\rands{\rho}_{t},\rand c_{t})_{t\in\mathcal{T}}$, can be $\s/\m$-rationalized under the restrictions on measurement error if and only if there exists a distribution of latent variables conditional on observables that satisfies the RP inequalities with probability 1, for the given supports $\Lambda$ and $\Delta$.
\par
Our notion of $\s/\m$-rationalizability makes clear that when one is dealing
with measurement error, no RP test can decide whether a finite sample is consistent with model $\m$. We can decide only that the data set is asymptotically consistent with the model
as the sample size goes to infinity. Moreover, even asymptotically, there is no way to differentiate between the notion of approximate $\s/\m$-rationalizability and the notion of exact $\s/\m$-rationalizability. Nonetheless, we can do traditional hypothesis testing and decide at a fixed significance level whether we reject the null hypothesis of (approximate) model $\m$ consistency under Assumption~\ref{assu: general measurement error} for a given sample. Conceptually, our notion of rationalizability corresponds to the extended notion of an identified set in \citet{schennach2014entropic}.
\par
Note that the test is not yet formally established. We have a set of latent random variables $\rand e$ distributed according to an unknown $\mu\in\mathcal{P}_{E|X}$. This problem can be solved nonparametrically using the Entropic Latent Variable Integration via Simulation (ELVIS) of \citet{schennach2014entropic}. The main advantage of the ELVIS approach is that it allows us to formulate a test that can be implemented in panel data sets suffering from measurement error of the type described only in terms of observables.

\subsection{ELVIS and Its Implications for Testing and Inference}\label{sec:elvis}
We start this section by showing how the nonparametric results of Theorem~\ref{thm:DistributionalRP} can be used to construct a set of (equivalent) parametric maximum-entropy moment conditions using \citet{schennach2014entropic}. Next, we provide a semi-analytic solution to the these conditions. Finally, we propose a procedure to test for $\s/\m$-rationalizability.
\par
Following \citet{schennach2014entropic}, we define the maximum-entropy moment as follows.
\begin{defn}[Maximum-Entropy Moment]\label{def:MEM}
    The maximum-entropy moment of the moment $g(x,\cdot)$,
	for a fixed $x$, is
	\[
	h(x;\gamma)=\dfrac{\int_{e\in E|X}g(x,e)\exp(\gamma\tr g(x,e))d\eta(e|x)}{\int_{e\in E|X}\exp(\gamma\tr g(x,e))d\eta(e|x)},
	\]
	where $\gamma\in\Real^{k+q}$ is a nuisance parameter, and $\eta\in\mathcal{P}_{E|X}$ is an arbitrary user-input distribution function supported on $E|X$ such that $\Expt[\pi_0]{\log \Expt[\eta]{\exp(\gamma\tr g(\rand x,\rand e))|\rand x}}$
	exists and is twice continuously differentiable in $\gamma$ for all
	$\gamma\in\mathbb{R}^{k+q}$.
\end{defn}
Note that 
\[
\left\{d\eta^*(\cdot|x;\gamma)=\dfrac{\exp(\gamma\tr g(x,\cdot))d\eta(\cdot|x)}{\int_{e\in E|X}\exp(\gamma\tr g(x,e))d\eta(e|x)},\:\gamma\in\Real^{k+q}\right\}
\]
is a family of exponential conditional probability measures. Thus, the maximum-entropy moment $h$ is the marginal moment of the function $g$,
at which the latent variable has been integrated out using one of the members from the above exponential family. 
The importance of the maximum-entropy moment is captured in the following result.

\begin{thm}\label{thm:ELVISexponentialdiscounting}
The following are equivalent:
	\begin{enumerate}
		\item A random array $\rand x=(\rands{\rho}_{t},\rand c_{t})_{t\in\mathcal{T}}$
		is approximately $\s/\m$-rationalizable such that Assumption
		\ref{assu: general measurement error} holds.
		\item 
		\[
		\inf_{\gamma\in\mathbb{R}^{k+q}}\norm{\Expt[\pi_0]{h(\rand x;\gamma)}}=0,
		\]
		where $\pi_0\in\boldsymbol{P}_{X}$ is the observed distribution
		of $\rand x$.
	\end{enumerate}
\end{thm}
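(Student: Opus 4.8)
The plan is to obtain the statement by composing two equivalences: the one already established in Theorem~\ref{thm:DistributionalRP}, and the Entropic Latent Variable Integration via Simulation (ELVIS) duality of \citet{schennach2014entropic}. Theorem~\ref{thm:DistributionalRP} already tells us that statement~(i) is equivalent to
\[
\inf_{\mu\in\mathcal{P}_{E|X}}\norm{\Expt[\mu\times\pi_0]{g(\rand x,\rand e)}}=0,
\]
so the only thing left to prove is that this condition holds if and only if $\inf_{\gamma\in\Real^{k+q}}\norm{\Expt[\pi_0]{h(\rand x;\gamma)}}=0$. First I would check that the primitives meet the hypotheses under which the ELVIS theorem of \citet{schennach2014entropic} applies: the moment function $g=(g_I\tr,g_M\tr)\tr$ is jointly Borel measurable in $(x,e)$ (the entries of $g_I$ are indicators of closed half-spaces minus one, and $g_M$ is assumed known and measurable); the latent support $E|X$ is a subset of a finite-dimensional Euclidean space formed by intersecting the known sets $\Lambda$, $\Delta$, a positivity orthant for $\rand v$, and the error support $W$, hence is Polish; and there exists a user-input base measure $\eta\in\mathcal{P}_{E|X}$ such that $\Expt[\pi_0]{\log\Expt[\eta]{\exp(\gamma\tr g(\rand x,\rand e))\mid\rand x}}$ is finite and twice continuously differentiable in $\gamma$, which is exactly the standing requirement attached to Definition~\ref{def:MEM}. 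It is worth noting that $g_I$ is bounded by construction (it takes values in $[-1,0]^k$), so the only integrability content sits in $g_M$ (and in $g_R$ or $g_C$ when the user adds them), and that content is precisely what the choice of $\eta$ is asked to absorb.

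With those hypotheses in place, the equivalence of the two infima is the substance of the ELVIS theorem, and I would deduce it from the following two facts. For any $\gamma$, the exponentially tilted measure $\eta^*(\cdot\mid x;\gamma)$ displayed just after Definition~\ref{def:MEM} is itself a member of $\mathcal{P}_{E|X}$ and satisfies $h(x;\gamma)=\Expt[\eta^*(\cdot\mid x;\gamma)]{g(x,\rand e)}$; hence the set of attainable values $\{\Expt[\pi_0]{h(\rand x;\gamma)}:\gamma\in\Real^{k+q}\}$ is contained in $\{\Expt[\mu\times\pi_0]{g(\rand x,\rand e)}:\mu\in\mathcal{P}_{E|X}\}$, which immediately gives $\inf_\mu\norm{\cdot}\leq\inf_\gamma\norm{\cdot}$ and, in particular, the implication ``$\inf_\gamma=0\Rightarrow\inf_\mu=0$.'' The reverse and harder direction is the ELVIS density result: the closure of the finite-dimensional exponential-tilt image coincides with the closure of the full image of $\mathcal{P}_{E|X}$. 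I would get it from the convex-duality argument in \citet{schennach2014entropic}: maximizing conditional entropy relative to $\eta$ subject to a prescribed value of $\Expt[\cdot\times\pi_0]{g}$ is a problem whose dual is finite-dimensional and whose primal maximizer is exactly a tilt $\eta^*(\cdot\mid\cdot;\gamma)$, and letting the prescribed value run over the image set shows that every point of that set lies in the closure of $\{\Expt[\pi_0]{h(\rand x;\gamma)}\}$. In particular, if $\inf_\mu=0$ then $0$ belongs to that closure, so $\inf_\gamma=0$. Combining this with Theorem~\ref{thm:DistributionalRP} then yields the stated equivalence.

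The main obstacle is exactly this second direction, i.e., showing that replacing the infinite-dimensional search over $\mu\in\mathcal{P}_{E|X}$ by the finite-dimensional search over $\gamma\in\Real^{k+q}$ does not change whether the infimum is zero. Since the measurement-error moments need not be bounded, this cannot be handed off to a compactness or weak-convergence argument; it genuinely requires the entropy-based duality of \citet{schennach2014entropic}, and the regularity conditions on $\eta$ are there precisely to make the relevant dual functional well-defined, smooth, and free of a duality gap. Once that machinery is invoked, what remains is the routine bookkeeping of identifying $h$ with the tilted moment and the two elementary infimum inequalities recorded above.
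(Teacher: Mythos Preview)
Your proposal is correct and follows exactly the paper's approach: combine Theorem~\ref{thm:DistributionalRP} with Theorem~2.1 of \citet{schennach2014entropic}. The paper's own proof is actually terser than yours---it simply cites those two results without unpacking the ELVIS duality or verifying the regularity conditions on $\eta$---so your additional detail on the exponential-tilt inclusion and the entropy-based reverse direction is a faithful expansion of what the paper leaves implicit.
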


The idea behind Theorem~\ref{thm:ELVISexponentialdiscounting} is that if there exists a distribution that satisfies the moment condition, then there must be another distribution that (i) belongs to a particular finite-dimensional exponential family and (ii) satisfies the same moment condition. Since we are only interested in the existence of the former distribution, instead of searching over the set of all possible distributions we are going to only search over this ``smaller'' exponential family.
\par
We emphasize that Theorem~\ref{thm:ELVISexponentialdiscounting} provides both \emph{necessary and sufficient} conditions for the observed data to be (approximately) $\s/\m$-rationalizable. This represents an important gain in power with respect to any of the averaging-based tests of RP models that are usually used in the presence of measurement error.
\par
High-level technical assumptions can ensure that the sequence of random latent variables that approximates model $\m$ converges to a proper random variable. Thus, this limiting random variable would ensure (i) that the infimum in Theorem~\ref{thm:ELVISexponentialdiscounting} is attained, and (ii) that the notion of approximate rationalizability collapses to exact rationalizability. However, this obscures the fact that any assumption made in that direction has no testable implications.
\par
The remarkable advantage of applying the results of \citet{schennach2014entropic} to the RP approach is that it marginalizes out the latent random variables. More importantly, we have a robust statistical framework with which to test our models in the presence of measurement error. In particular, we have not made any strong distributional assumptions about $\rands{\lambda}$, $\rands{\delta}$ or $\rand u$ (the heterogeneous tastes). The only restrictions are the concavity assumption on the utility function, and a centering condition on measurement error.\footnote{At this point, we can use as an alternative the methodology presented by \citet{ekeland2010optimal} to deal with latent variables in our moment conditions.}
\begin{rem}
	Theorem~\ref{thm:ELVISexponentialdiscounting} does not imply that the distribution of the latent variables (or their support) is point-identified. In fact, it will always be set-identified.
\end{rem}
\begin{rem}
    The maximum-entropy moment implicitly depends on the choice of the user-specified distribution $\eta$. However, $\eta$ does not have any effects on the set of values $\Expt[\pi_0]{h(\rand x;\cdot)}$ or its sample analogue can take. That is, the choice of $\eta$ will not affect the value $\inf_{\gamma\in\mathbb{R}^{k+q}}\norm{\Expt[\pi_0]{h(\rand x;\gamma)}}$ takes both asymptotically and in finite samples. The choice of $\eta$ affects \emph{only} the nuisance parameter ($\gamma$) value. See Remark 2.3 in \citet{schennach2014entropic} for further details. 
\end{rem}
\subsection{Semi-analytic Solution for the Maximum-entropy Moment}\label{subsec:elvis,mesurm}
One can directly employ the maximum-entropy moment in Theorem~\ref{thm:ELVISexponentialdiscounting} to test model $\m$. However, doing so is potentially problematic. One possible concern is the fact that the number of maximum-entropy moments corresponding to the $(g_I)$ conditions, $k=\abs{\mathcal{T}}^{2}-\abs{\mathcal{T}}$, grows quadratically with $\abs{\mathcal{T}}$. Moreover, $\gamma_0$, the nuisance parameter value at which infimum is achieved, may be set-identified when unbounded (e.g., some of the components of $\gamma_0$ may be equal to infinity\footnote{By ``equal to infinity'' we mean that the infimum is achieved along the sequence of $\gamma$ that diverges to infinity along some coordinates.}), which would therefore lead to nonstandard testing procedures.
\par
Here we show that there exists a semi-analytic solution to the optimization problem where every component of $\gamma_0$ that corresponds to the RP inequality constraints is equal to $+\infty$, and every component of $\gamma_0$ that corresponds to the measurement error centering constraint is finite and unique. Thus, for testing purposes (under the null hypothesis of model $\m$), we can minimize an objective function over a parameter space of lower dimensionality.
\begin{assumption}\label{assu:nondegenerate w} (Nondegeneracy)
    There exist two subsets of $E|X$, $E'$ and $E''$, with a positive measure, such that componentwise the measurement error moment is such that $\sup_{e\in E'}g_M(\rand{x},e)<0<\inf_{e\in E''}g_M(\rand{x},e)$ with positive probability.  
\end{assumption}
\begin{assumption}\label{assu:bounded support} (Bounded support)
	The random array  $\rand{x}=(\rands{\rho}_t,\rand{c}_t)_{t\in\mathcal{T}}$ has a bounded support.
\end{assumption}
Assumption~\ref{assu:nondegenerate w} rules out cases in which there is no measurement error and allows us to a have a unique minimizer of the objective function. It can be relaxed since our methodology still works for cases without measurement error, but in those cases, it is preferable to use the equivalent deterministic RP benchmark. Assumption~\ref{assu:bounded support} is made to simplify the analysis and can be replaced by tail restrictions on the distribution of $\rand{x}$.
\par
Note that the user-specified distribution $\eta$ should obey the same restrictions as the unknown distribution of latent $\rand{e}$. Thus, we impose the following restrictions on $\eta$:
\begin{defn}[User-specified distribution]\label{assu:eta}
Almost surely in $\rand{x}$, the user-specified distribution $\eta(\cdot|\rand{x})$ satisfies all of the following:
\begin{enumerate}
	\item The set $\tilde E|X=\{e\in E|X\::\:g_I(\rand{x},e)=0\}$ has a positive measure under $\eta(\cdot|\rand{x})$.
	\item There exist two subsets of $\tilde E|X$, $E'$ and $E''$, with a positive measure under $\eta(\cdot|\rand{x})$, such that componentwise $\sup_{e\in E'}g_M(\rand{x},e)<0<\inf_{e\in E''}g_M(\rand{x},e)$.
	\item For every finite $\gamma_M\in \Real^{q}$,
	\[
	{\int_{E|X}\norm{g_M(\rand{x},e)}^2\exp(\gamma_M\tr g_M(\rand{x},e))d\eta(e|\rand{x})}<\infty.
	\]
\end{enumerate}
\end{defn}
The first condition in Definition~\ref{assu:eta} requires that the support of $\eta$ allows the inequalities to be satisfied. The second and third conditions are regularity conditions. This is a definition and not an assumption, as we can always construct an allowable $\eta$.\footnote{\citet{schennach2014entropic} provides a generic construction that can be used here.} We are ready to present our main result.

\begin{thm}\label{thm:semianal} Given a user-specified measure $\eta$ that satisfies the three conditions in Definition~\ref{assu:eta}, the following are equivalent:
	\begin{enumerate}
		\item A random array $\rand x=(\rands{\rho}_{t},\rand c_{t})_{t\in\mathcal{T}}$ is approximately consistent with $\s/\m$-rationalizability such that Assumptions~\ref{assu: general measurement error},~\ref{assu:nondegenerate w}, and~\ref{assu:bounded support} hold.
		\item For any sequence $\{\gamma_{I,l}\}_{l=1}^{+\infty}$ that componentwise diverges to $+\infty$,
		\begin{equation}\label{eq:full}
		\lim_{l\to+\infty}\min_{\gamma_M\in\mathbb{R}^{q}}\norm{\Expt[\pi_0]{h(\rand x;(\gamma_{I,l}\tr,\gamma_M\tr)\tr)}}=0.
		\end{equation}
		The sequence of minimizers of (\ref{eq:full}), $\{\gamma_{M,l}\}$, converges to some finite $\gamma_{0,M}$ that does not depend on $\{\gamma_{I,l}\}_{l=1}^{+\infty}$.
		\item 
		\begin{equation}\label{eq:partial}
		\min_{\gamma_M\in\mathbb{R}^{q}}\norm{\Expt[\pi_0]{\tilde h_M(\rand x;\gamma_M)}}=0,
		\end{equation}
		where
		\[
		\tilde h_M(x;\gamma)=\dfrac{\int_{e\in{E}|{X}}g_M(x,e)\exp(\gamma\tr g_M(x,e))\Char{g_{I}(x,e)=0}d\eta(e|x)}{\int_{e\in{E}|{X}}\exp(\gamma\tr g_M(x,e))\Char{g_{I}(x,e)=0}d\eta(e|x)}.
		\]
		Moreover, the minimizer of (\ref{eq:partial}) is finite, and is equal to $\gamma_{0,M}$.
	\end{enumerate}
\end{thm}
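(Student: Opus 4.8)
The plan is to establish the chain of equivalences by first showing $(1)\Leftrightarrow(2)$ using Theorem~\ref{thm:ELVISexponentialdiscounting}, and then showing $(2)\Leftrightarrow(3)$ by analyzing the limiting behavior of the maximum-entropy moment $h(x;(\gamma_I\tr,\gamma_M\tr)\tr)$ as $\gamma_I$ diverges componentwise to $+\infty$. The conceptual heart of the argument is that each coordinate of $g_I$ takes values in $\{-1,0\}$ (it is an indicator minus one), so in the exponential tilting $\exp(\gamma_I\tr g_I(x,e))$, driving a coordinate of $\gamma_I$ to $+\infty$ sends the tilted mass on $\{e: g_{I}(x,e)<0\}$ relative to $\{e : g_I(x,e)=0\}$ to zero geometrically fast. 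Hence, in the limit, the conditional measure $d\eta^*(\cdot|x;\gamma)$ concentrates on the set $\tilde E|X = \{e\in E|X : g_I(x,e)=0\}$, and on that set $\exp(\gamma_I\tr g_I(x,e))=1$, so the $\gamma_I$-tilting disappears and only the $\gamma_M$-tilting by $g_M$ survives. This is precisely what turns $h$ into $\tilde h_M$.

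First I would carry out the dominated-convergence step rigorously. Writing $h(x;\gamma)$ as a ratio of integrals over $E|X$, I would split both numerator and denominator according to the partition $\{g_I=0\}$ versus $\{g_I\neq 0\}$ (the latter being a finite union of sets on which at least one coordinate equals $-1$). On $\{g_I\neq0\}$ the factor $\exp(\gamma_I\tr g_I)$ is bounded above by $e^{-\min_l \gamma_{I,l}}\to 0$; Assumption~\ref{assu:bounded support} (bounded support of $\rand x$, hence bounded $g_M$ under the moment construction) together with condition (iii) of Definition~\ref{assu:eta} provides the integrable envelope $\norm{g_M(x,e)}^2\exp(\gamma_M\tr g_M(x,e))$ needed to pass to the limit uniformly over $\gamma_M$ in compacta. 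This yields, for each fixed $x$,
\[
\lim_{l\to\infty} h\bigl(x;(\gamma_{I,l}\tr,\gamma_M\tr)\tr\bigr)
=\Bigl(0\tr,\ \tilde h_M(x;\gamma_M)\tr\Bigr)\tr,
\]
where the zero block is the $g_I$-component (on $\tilde E|X$ the moment $g_I$ vanishes identically, so its tilted average is exactly $0$), and the nonzero block equals $\tilde h_M$ because the $\gamma_I$-factor is constant on $\tilde E|X$ and cancels between numerator and denominator. Condition (i) of Definition~\ref{assu:eta} guarantees $\eta(\tilde E|X\mid x)>0$, so the limiting ratio is well-defined. Taking $\Expt[\pi_0]{\cdot}$ and using bounded convergence again (the envelope is integrable under $\pi_0$ by Assumption~\ref{assu:bounded support}) gives $\Expt[\pi_0]{h(\rand x;(\gamma_{I,l}\tr,\gamma_M\tr)\tr)}\to(0\tr,\Expt[\pi_0]{\tilde h_M(\rand x;\gamma_M)}\tr)\tr$, from which the equivalence of the norm in~\eqref{eq:full} being driven to zero and the minimization~\eqref{eq:partial} equaling zero follows, once the exchange of $\lim_{l}$ and $\min_{\gamma_M}$ is justified.

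The main obstacle — and where I would spend the bulk of the care — is the uniqueness and finiteness claim for $\gamma_{0,M}$ and the legitimacy of interchanging $\lim_{l\to\infty}$ with $\min_{\gamma_M}$. For finiteness: if the minimizer of~\eqref{eq:partial} escaped to infinity along some coordinate, then by the same indicator/tilting logic applied now to $g_M$, the tilted measure would concentrate on $\{g_M \geq 0\}$ or $\{g_M\leq 0\}$, contradicting Assumption~\ref{assu:nondegenerate w} (and its $\eta$-analogue, condition (ii) of Definition~\ref{assu:eta}), which forces $g_M$ to straddle zero with positive $\eta$-mass on $\tilde E|X$; this is the standard argument from \citet{schennach2014entropic} that an interior maximum-entropy solution exists. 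For uniqueness: the map $\gamma_M\mapsto \Expt[\pi_0]{\log\Expt[\eta]{\exp(\gamma_M\tr g_M(\rand x,\rand e))\Char{g_I(\rand x,\rand e)=0}\mid\rand x}}$ is convex, and strictly convex in the directions not annihilated by the (nondegenerate) second moment of $g_M$ on $\tilde E|X$ — Assumption~\ref{assu:nondegenerate w} rules out degeneracy — so its gradient $\Expt[\pi_0]{\tilde h_M(\rand x;\gamma_M)}$ vanishes at a unique point, call it $\gamma_{0,M}$. The interchange of limit and minimum then follows from a standard epi-convergence / uniform-convergence-on-compacta argument: the functions $\gamma_M\mapsto\norm{\Expt[\pi_0]{h(\rand x;(\gamma_{I,l}\tr,\gamma_M\tr)\tr)}}$ converge uniformly on compact sets to $\gamma_M\mapsto\norm{\Expt[\pi_0]{\tilde h_M(\rand x;\gamma_M)}}$, which is coercive (again by the straddling condition), so the minimizers $\gamma_{M,l}$ stay in a fixed compact set and any subsequential limit is a minimizer of the limit objective, hence equals $\gamma_{0,M}$ by uniqueness; in particular the limit does not depend on the chosen diverging sequence $\{\gamma_{I,l}\}$. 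Finally, stitching this to Theorem~\ref{thm:ELVISexponentialdiscounting} — whose statement gives $(1)\Leftrightarrow\bigl(\inf_{\gamma}\norm{\Expt[\pi_0]{h(\rand x;\gamma)}}=0\bigr)$ — requires one more observation: the infimum over all $\gamma\in\Real^{k+q}$ can be taken along the particular "$\gamma_I\to+\infty$, $\gamma_M$ free" route without loss, because under the null the semi-analytic construction shows the infimum is already attained (in the limit) on this route, and conversely any $\gamma$ achieving a small norm can be compared to it; this is the step that upgrades Theorem~\ref{thm:ELVISexponentialdiscounting}'s $\inf$ to the structured $\lim_l\min_{\gamma_M}$ of~\eqref{eq:full}.
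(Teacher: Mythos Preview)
Your proposal is correct and follows essentially the same route as the paper's proof: the key observation that $g_I$ takes values in $\{-1,0\}^k$, so $\exp(\gamma_{I,l}\tr g_I)\to\Char{g_I=0}$ uniformly at rate $\exp(-\min_i\gamma_{I,l,i})$; uniform convergence of $\Expt[\pi_0]{h(\rand x;(\gamma_{I,l}\tr,\gamma_M\tr)\tr)}$ to $\Expt[\pi_0]{\tilde h_M(\rand x;\gamma_M)}$ over $\gamma_M$ in compacta (via Definition~\ref{assu:eta}(iii) and Assumption~\ref{assu:bounded support}); and finiteness/uniqueness of $\gamma_{0,M}$ from Assumption~\ref{assu:nondegenerate w} together with the convexity argument of Lemma~A.1 in \citet{schennach2014entropic}. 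The only cosmetic difference is that the paper closes the loop by recognizing~\eqref{eq:partial} as itself an ELVIS problem (moment $g_M$, user measure $\tilde\eta$, support $\tilde E|X$) and invoking Theorem~\ref{thm:DistributionalRP}/Schennach directly, whereas you route through Theorem~\ref{thm:ELVISexponentialdiscounting} and argue the full infimum can be taken along the $\gamma_I\to+\infty$ path---both are valid and amount to the same thing.
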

The intuition behind Theorem~\ref{thm:semianal} is that the RP inequalities presented here restrict only the conditional support of the latent variables (including the measurement error). Hence, given the support restrictions captured by the RP inequalities, only the centering condition comes in the form of moments.\footnote{If Assumption~\ref{assu:nondegenerate w} is violated for some component $j$ of $g_M$ (e.g., $\sup_{E'}g_{M,j}(x,e)\geq0\:\as$ for all $E'\subseteq E|X$), then since $\Exp{g_{M,j}(\rand{x},\rand{e})}=0$ it has to be the case that $g_{M,j}(\rand{e},\rand{x})=0\:\as$. Thus, $g_{M,j}$ becomes a condition similar to $g_{I}$ that can be replaced by a support restriction.}
\par
Theorem~\ref{thm:semianal} substantially simplifies the conclusion of Theorem~\ref{thm:ELVISexponentialdiscounting}. First, we need to minimize the objective function over a much smaller parameter space ($\Real^q$ instead of $\Real^{k+q}$). Thus, the problem becomes computationally tractable. Second, if the data is consistent with $\s/\m$-rationalizability, then the minimizer, $\gamma_{0,M}$, has to be \emph{finite and unique}. Finally, to compute $\tilde{h}_{M}(x,\gamma_M)$ in applications, one may need to use Markov Chain Monte Carlo (MCMC) methods by sampling from $\eta$. Theorem~\ref{thm:semianal} implies that it suffices to sample from $d\tilde{\eta}(\cdot|x)=\Char{g_{I}(x,\cdot)=0}d\eta(\cdot|x)$. The straightforward way to sample from $\tilde\eta$ is to sample from $\eta$ and then reject the draw if it does not satisfy the RP inequalities captured by $\Char{g_{I}(x,\cdot)=0}$. The last part usually amounts to solving a linear program for $\rat$-rationalizability. For the case where this rejection sampling is not efficient, in Appendix~\ref{appen: coputational aspects}, we describe a new hit-and-run algorithm to sample from $\tilde{\eta}(\cdot|x)$ directly.\footnote{The classical hit-and-run algorithm is an efficient MCMC method that generates uniform draws from a convex polytope. It initiates inside the polytope and proceeds by randomizing directions and then advancing a random distance while remaining inside the set.} This approach is particularly useful for testing $\s/\ed$-rationalizability in one of our applications.
\subsection{Testing}\label{subsec:elvis,testing}
Theorem~\ref{thm:semianal} provides moment conditions that are necessary and sufficient for the data $\{\rand x_{i}\}_{i=1}^n = \{(\rands{\rho}_{t,i},\rand c_{t,i})_{t\in\mathcal{T}}\}_{i=1}^n$ (where $n$ is the sample size), to be approximately consistent with $\s/\m$-rationalizability. Now, define the following sample analogues of the maximum-entropy moment and its variance:
\begin{align*}
\hat{\tilde h}_M(\gamma) & =\frac{1}{n}\sum_{i=1}^{n}\tilde h_M(\rand x_{i},\gamma);\\
\hat{\tilde\Omega}(\gamma) & =\frac{1}{n}\sum_{i=1}^{n}\tilde h_M(\rand x_{i},\gamma)\tilde h_M(\rand x_{i},\gamma)\tr-\hat{\tilde h}_M(\gamma)\hat{\tilde h}_M(\gamma)\tr.
\end{align*}
Let $\Omega^{-}$ denote the generalized inverse of the matrix $\Omega$. The testing procedure we propose is due to \citet{schennach2014entropic}
and is based on this test statistic:
\[
\mathrm{TS}_{n}=n\inf_{\gamma\in{\Real}^{q}}\hat{\tilde h}_M(\gamma)\tr\hat{\tilde\Omega}^{-}(\gamma)\hat{\tilde h}_M(\gamma).
\]
\begin{assumption}
	\label{ass: iid}The data $\{\rand{x}_{i}\}_{i=1}^{n}$ is i.i.d.
\end{assumption}

\begin{thm}\label{thm:test} Suppose Assumptions~\ref{assu: general measurement error}, \ref{assu:nondegenerate w}, \ref{assu:bounded support}, and~\ref{ass: iid} hold. Then under the null hypothesis that the data is approximately consistent with $\s/\m$-rationalizability, it follows that
	\begin{align*}
	\lim_{n\to\infty}\Prob{\mathrm{TS}_{n}>\chi_{q,1-\alpha}^{2}}\leq\alpha,
	\end{align*}
	for every $\alpha\in(0,1)$.
	\par
	If, moreover, the minimal eigenvalue of the variance matrix $\mathbb{V}[\tilde h_M(\rand x,\gamma)]$ is uniformly, in $\gamma$, bounded away from zero and the maximal eigenvalue of $\mathbb{V}[\tilde h_M(\rand x,\gamma)]$ is uniformly, in $\gamma$, bounded from above, then, 
	under the alternative hypothesis that the data is not approximately consistent with $\s/\m$-rationalizability, it follows that
	\[
	\lim_{n\to\infty}\Prob{\mathrm{TS}_{n}>\chi_{q,1-\alpha}^{2}}=1.
	\]
\end{thm}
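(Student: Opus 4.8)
The plan is to recognize that $\mathrm{TS}_n$ is exactly the test statistic analyzed in \citet{schennach2014entropic} applied to the moment function $\tilde h_M(\rand x,\cdot)$, so that the result follows by verifying the regularity conditions of that paper in our setting and invoking Theorem~\ref{thm:semianal} to translate between the population moment condition and (approximate) $\s/\m$-rationalizability. First I would establish the size control. Under the null hypothesis, Theorem~\ref{thm:semianal} gives $\min_{\gamma_M\in\Real^q}\norm{\Expt[\pi_0]{\tilde h_M(\rand x;\gamma_M)}}=0$, and moreover the minimizer $\gamma_{0,M}$ is finite and unique. Assumptions~\ref{assu:nondegenerate w} and~\ref{assu:bounded support}, together with the integrability condition in Definition~\ref{assu:eta}(iii), ensure that $\tilde h_M$ has finite second moments in a neighborhood of $\gamma_{0,M}$ and that $\gamma\mapsto\Expt[\pi_0]{\tilde h_M(\rand x;\gamma)}$ and its sample analogue are smooth; with Assumption~\ref{ass: iid} one then applies the central limit theorem at $\gamma_{0,M}$. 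The key point is that $\hat{\tilde h}_M(\gamma_{0,M})\tr\hat{\tilde\Omega}^-(\gamma_{0,M})\hat{\tilde h}_M(\gamma_{0,M})$, rescaled by $n$, converges in distribution to a variable stochastically dominated by $\chi^2_q$ (the quadratic form in an asymptotically normal vector with a possibly singular limiting covariance, handled via the generalized inverse exactly as in \citet{schennach2014entropic}); since $\mathrm{TS}_n$ is the infimum over $\gamma$ of such quadratic forms, it is no larger, giving $\limsup_n\Prob{\mathrm{TS}_n>\chi^2_{q,1-\alpha}}\leq\alpha$.

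Second I would treat consistency under the alternative. If the data is \emph{not} approximately $\s/\m$-rationalizable, then by Theorem~\ref{thm:semianal} we have $\min_{\gamma_M\in\Real^q}\norm{\Expt[\pi_0]{\tilde h_M(\rand x;\gamma_M)}}=c>0$. Here the extra hypotheses enter: with the minimal eigenvalue of $\mathbb{V}[\tilde h_M(\rand x,\gamma)]$ bounded away from zero and the maximal eigenvalue bounded above, uniformly in $\gamma$, the population objective $\gamma\mapsto\Expt[\pi_0]{\tilde h_M(\rand x;\gamma)}\tr \mathbb{V}[\tilde h_M(\rand x,\gamma)]^{-1}\Expt[\pi_0]{\tilde h_M(\rand x;\gamma)}$ is bounded below by a positive constant proportional to $c^2$; a uniform law of large numbers (again using Assumptions~\ref{assu:bounded support} and~\ref{ass: iid} and the smoothness in $\gamma$) gives that $\tfrac1n\mathrm{TS}_n$ converges in probability to this strictly positive quantity, so $\mathrm{TS}_n\to+\infty$ in probability and $\Prob{\mathrm{TS}_n>\chi^2_{q,1-\alpha}}\to1$. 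The uniform eigenvalue bounds are precisely what prevent the infimum over $\gamma$ from driving the objective to zero through degeneracy of the weighting matrix or through $\gamma$ escaping to infinity.

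The main obstacle I anticipate is the uniform-in-$\gamma$ control needed to justify that $\inf_\gamma$ of the sample quadratic form behaves well—both that it does not undershoot the claimed $\chi^2_q$ bound asymptotically under the null (one must argue the infimum is effectively attained near the finite $\gamma_{0,M}$, using that the RP-inequality components of the full nuisance parameter have been analytically sent to $+\infty$ already, per Theorem~\ref{thm:semianal}, leaving only the well-behaved finite-dimensional $\gamma_M$), and that it stays bounded away from zero under the alternative. This requires a stochastic-equicontinuity or bracketing argument for the class $\{\tilde h_M(\cdot,\gamma):\gamma\in\Real^q\}$; boundedness of the support of $\rand x$ and the exponential-family structure of $\tilde h_M$ make this class a well-controlled parametric class, so standard empirical-process tools apply, but the book-keeping for the generalized inverse $\hat{\tilde\Omega}^-(\gamma)$ at points where $\tilde\Omega(\gamma)$ is near-singular is the delicate part. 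Since all of this is exactly the structure handled by \citet{schennach2014entropic}, I would ultimately phrase the proof as a verification of their hypotheses plus an application of Theorem~\ref{thm:semianal}, rather than reproving the asymptotics from scratch.
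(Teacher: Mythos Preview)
Your proposal is correct and takes essentially the same approach as the paper: both recognize the result as a direct application of the testing theorem in \citet{schennach2014entropic} (specifically their Theorem~F.1), with the regularity conditions verified via Assumption~\ref{assu:bounded support} and Definition~\ref{assu:eta}(iii), and with Theorem~\ref{thm:semianal} providing the link between the moment condition and approximate $\s/\m$-rationalizability. The paper's version is in fact even more terse than yours---it simply restates Schennach's Theorem~F.1, notes that bounded support of $\rand x$ together with the integrability condition on $\tilde\eta$ gives an almost-sure bound $\norm{\tilde h_M(\rand x,\gamma)}^2\leq C_1(\gamma)$ for every finite $\gamma$, and for consistency observes that under the alternative $\norm{\hat{\tilde h}_M(\gamma)}$ is bounded away from zero (or diverges) while the eigenvalue bounds control the weighting matrix---so the stochastic-equicontinuity and generalized-inverse bookkeeping you flag as obstacles are all absorbed into Schennach's result rather than handled explicitly.
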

We conclude this section by noting that we can speed up computations of the test statistic by obtaining an initial guess for the minimizer of the objective function efficiently. We make use of the fact that, although the objective function $\hat{\tilde h}_M(\cdot)\tr\hat{\tilde\Omega}^{-}(\cdot)\hat{\tilde h}_M(\cdot)$ may have several local minima, the quadratic form $\hat{\tilde h}_M(\cdot)\tr B \hat{\tilde h}_M(\cdot)$, where $B$ is any conformable positive definite matrix, has a unique global minimum and has no other local minima.\footnote{Additional details about the computational aspects of our methodology can be found in Appendix~\ref{appen: coputational aspects}.}

\subsection{Confidence Sets for Parameters of Interest}
The above testing procedure can be modified for construction of confidence sets for parameter $\theta_0$ from Section~\ref{subsec:Counterfactuals}. In particular, recall that one just needs to extend the set of the original moment conditions (the centering condition $g_M$ and the RP inequalities $g_I$) by $g_R$ or $g_C$ and add (if needed) extra RP inequalities that correspond to $(\rands{\rho}^*_{T+1}, \rand{c}^*_{T+1})$.  
As in Section~\ref{subsec:elvis,testing} we can then define $\mathrm{TS}_{n}(\theta)$ as the value of the test statistic computed for a fixed value of $\theta$. 
Under assumptions similar to those of Theorem~\ref{thm:test}, the confidence set for $\theta_0$ can be obtained by inverting $\mathrm{TS}_{n}(\theta_{0})$.
That is, the $(1-\alpha)$-confidence set for $\theta_0$ is 
\[
\{\theta_{0}\in\Theta:\mathrm{TS}_{n}(\theta_{0})\leq\chi_{q_{\mathrm{ext}},1-\alpha}^{2}\},
\]
where $\chi_{q_{\mathrm{ext}},1-\alpha}^{2}$ denotes the $(1-\alpha)$ quantile
of the $\chi^{2}$ distribution with $q_{\mathrm{ext}}$ degrees of freedom ($\chi_{q_{\mathrm{ext}}}^{2}$). The number $q_{\mathrm{ext}}$ is determined by the number of the non RP moment conditions in the extended system that are not the support restrictions.
Note that we do not pretest for $\s/\m$-rationalizability in order to construct the confidence set for $\theta_0$. If the data set is not $\s/\m$-rationalizable, then the confidence set will be empty asymptotically.

\section{Measurement Error in Different Data Sets}\label{sec:error in data}
Recall that to apply our methodology we need measurement-error restrictions that come in the form of moments $g_{M}$ (Assumption~\ref{assu: general measurement error}). These moments capture the knowledge the user has about measurement error in a particular data set. In this section we provide examples of such centering conditions in different data sets relevant for the RP literature.
\subsection{Measurement Error in Survey Data Sets}
Measurement error in surveys may arise because of errors due to the respondent, the interviewer, or the survey design itself \citep{carroll2014introduction}. Recall mistakes, social desirability, and recording errors, among other factors, may cause measurement error in surveys (see \citealp{meyer2015household}). Household surveys usually measure expenditure across different goods, but do not measure consumption or prices directly. Prices are sometimes price indexes constructed by a national statistics agency.\footnote{For instance, in data sets like the ones used in \citet{blundell2003nonparametric,adams2014consume}, and \citet{kitamura2018nonparametric}, price indexes are collected through direct observation in a given market and then merged with the expenditure survey data sets.} Consumption is generated by dividing reported consumption expenditures in surveys by the observed prices aggregated to a category of goods (e.g., \citealp{blundell2003nonparametric,adams2014consume}, and \citealp{kitamura2018nonparametric}). There is evidence that individuals, in either recall or self-reported surveys, can systematically overreport or underreport their expenditure on different categories of goods \citep{mathiowetz2002measurement, carroll2014introduction, pistaferri2015household}.
In sum, both prices and consumption may have measurement error.
\par
Consumption quantities are generated by dividing expenditures for a given good by its price index. This means that in the survey environment, measurement error in consumption is often \emph{nonclassical} because it may combine measurement error of expenditures and prices in a nonlinear way \citep{attanasio2014nonclassical}. Nonetheless, we will argue that average total expenditures is well-measured in some surveys. 
\par
\citet{abildgren2018consistency} find that mean total expenditures are usually well-measured in the Danish interview-based household budgets survey (DHS, 2015). They use administrative registry information of the same households interviewed in the survey of interest to conclude that there are no statistically significant differences between the mean total expenditure in Danish households as reported in the registry data set and in the household survey. In addition, \citet{kolsrud2017studying}, using administrative registry consumption data from Sweden, also finds that mean household expenditures match those computed on the basis of household surveys. One potential caveat to this evidence is that registry data itself may be mismeasured. However, direct evidence on household expenditures from retail data shows that total household expenditure error in registry data has zero mean \citep{baker2018measurement}.\footnote{This pattern of measurement error in expenditure is also indirectly supported by evidence on measurement error in income. For instance, \citet{angel2018differences}, using data from Austria, show that the difference in household income reported from surveys and that from registry data displays mean-reverting behavior, as high income levels are underreported and low income levels are overreported. This evidence is consistent with the literature on measurement error in earnings \citep{bound_chapter_2001}.}
\par
Based on this evidence for measurement error, the analyst can impose the following condition. For all $t\in\mathcal{T}$ it must be the case that
\begin{equation}\label{eq:centeredexpenditure}
    \Exp{\rands{\rho}_{t}\tr\rand{c}_{t}}=\Exp{\rands{\rho^*}_{t}\tr\rand{c}^*_{t}}.
\end{equation}

Equation~(\ref{eq:centeredexpenditure}) provides a measurement error moment that allows for nonclassical measurement error in consumption in the survey environment. In particular, it requires only that total expenditure measurement error be nonsystematic.    
Equation~(\ref{eq:centeredexpenditure}) implies that measurement error in consumption does not alter the mean value of total expenditures. In other words, it captures the idea that consumers, on average, may remember the total expenditure level better than the actual details. 
\par
Other surveys, such as the Spanish Continuous Expenditure Survey (1985-1997), used in \citet{beatty2011howdemanding}, \citet{adams2014consume}, and in one of our applications, have documented both over- and underreporting of income. Unfortunately, there is no validation data on expenditure itself (the object of our interest).  However, \citet{gradin2008inequality} report that for the time period of interest the evolution of the mean household income and the mean household expenditure for Spain are very similar. 
\par
Spanish, Swedish, and Danish households self-report expenditures for a number of good categories in their respective household surveys. We see no reason to believe that Spanish subjects will behave differently from Danish ones, given the similar format of the surveys. For that reason, we believe it is reasonable to assume that the mean expenditure is usually well measured in the Spanish household survey.   
\par
However, in practice, the condition in Equation~(\ref{eq:centeredexpenditure}) may not be restrictive enough (i.e., the joint hypothesis of this assumption and  $\s/\m$-rationality may not be rejected in general). We remind the reader that the measurement error moment is not the result of a modeling decision but should capture additional information that is usually obtained from validation exercises. Consequently, if we know too little about measurement error, then our data will not suffice to ensure a reasonable statistical power. A natural solution to this problem is to provide additional restrictions on measurement error by collecting new information about it. These additional restrictions will necessarily apply to specific settings or data sets. 
\par
A particular case of the condition captured by Equation (\ref{eq:centeredexpenditure}), with the additional support constraint that prices are perfectly measured (i.e., $\rands{\rho}^*_t=\rands{\rho}_t\:\as$ for all $t\in \mathcal{T}$), is:

\setcounter{assumption}{1}
\begin{subassumption}(Mean-Budget Neutrality for Survey Data Sets) \label{assu:(Measurement-error)Meanbudgetneutrality} 
For every $t\in \mathcal{T}$,
\[
\Exp{\rands{\rho}_{t}\tr\rand{w}^{c}_{t}}=0.
\]
\end{subassumption}

Assumption~\ref{assu: general measurement error}.\ref{assu:(Measurement-error)Meanbudgetneutrality} requires that measurement error in consumption be orthogonal to the effective prices. We maintain the assumption that prices are well measured in the rest of this subsection. We have three reasons for doing this. (i) This assumption is a common requirement in most of the work dedicated to the study of consumption decisions in households using survey data. In fact, we follow several works interested in modeling consumption in assuming that total expenditures and prices are measured better (i.e., with a higher signal-to-noise ratio) than the \textit{generated} consumption variable.\footnote{For examples of papers assuming that prices are perfectly measured and that measurement error is present only in consumption, see \citet{cochrane1991simple, ventura1994note, carroll2001death, ludvigson2001approximation, alanbrowning2009estimating, adams2014consume, carroll2014introduction, toda2015double}, and \citet{alan2018euler}.}  (ii) It is empirically plausible that prices are better measured than consumption.  In fact, there is evidence \citep{castillo1999medicion,guerrero2008sesgos} that the price index in Spain (which we use in one of our applications) is usually well measured with an estimated upward bias of the total price index of $0.07$ percent for the time window $1981-1991$ \citep{castillo1999medicion}, which is arguably small.\footnote{The bias is defined as the difference between the Laspeyres price index as used in Spain and the ideal Fisher price index \citep{diewert1998index}. This is usually known as the substitution or aggregation bias.} For context, the estimated bias for the US in the well-known Boskin report was $0.40$ percent for a similar time period \citep{boskin1998consumer}. In contrast, there is agreement that consumption measurement in surveys is very noisy \citep{alan2018euler}.\footnote{For instance, \citet{alan2010estimating} estimate that roughly $80$ percent of the variation in consumption growth rate is due to measurement error (this is for the PSID in the US).} (iii) We show that $\s/\m-$rationalizability is robust to measurement error when it is local (Appendix~\ref{appen: local perturb}). Hence, we will focus on the main source of measurement error that is consumption in the case of surveys. In sum, in the self-reported expenditures survey panel data sets, we argue that the \textit{main} source of measurement error is in the consumption variable. 
\par
The mean-budget neutrality assumption is still compatible with nonclassical measurement error in consumption, such as $\rand{w}^{c}_{l,t}\leq0\:\as$ ($\rand{w}^c_{l,t}\geq0\:\as$) for some good category $l$ and all decision tasks $t$. That is, this condition allows agents on average to underreport (overreport) consumption on some categories of goods as long as they on average equally overreport (underreport) consumption on other categories of goods. Mean-budget neutrality will fail if measurement error in expenditures is systematic (i.e., every consumer overreports expenditures or underreports expenditures simultaneously). 
\par
We want to highlight the fact that despite its simplicity, Assumption~\ref{assu: general measurement error}.\ref{assu:(Measurement-error)Meanbudgetneutrality} is a generalization of commonly held parametric assumptions about measurement error in the study of consumption. Here we provide some examples. 

\begin{example}[Multiplicative Measurement Error]
When one studies consumption of a single good across time (estimation of the Euler equation), one usually assumes that consumption measurement error is multiplicative \citep{alan2018euler}. Formally, $\rand{c}_t=\rand{c}_t^*\rands{\epsilon}_t$, where $\rands{\epsilon_t}$ can be assumed to be independent of, or at least orthogonal to, a set of instruments. For instance, we can assume that $\rands{\epsilon_t}$ is independent of true consumption conditional on effective prices and with mean $1$. In that case, Assumption~\ref{assu: general measurement error}.\ref{assu:(Measurement-error)Meanbudgetneutrality} holds. Alternatively, we can assume $\rands{\epsilon}_t$ to be independent of effective prices and true consumption, which implies Assumption~\ref{assu: general measurement error}.\ref{assu:(Measurement-error)Meanbudgetneutrality}. 
\end{example}
\par
\begin{example}[Additive Measurement Error] 
    Consider a case in which $\rand c_{t,l}=\rand c_{t,l}^{*}+\rands{\epsilon}_{t,l}$ for $l=1,\dots,L$; and where $\rands{\epsilon}_{t,l}\sim TN_{[-a,a]}(0,\sigma)$ for $l=1,\dots,L-1$ (from a truncated normal with variance $\sigma$ and bounds $[-a,a]$ for some positive $a>0$) such that $\rand c_{t,l}\geq0\:\as$. Note that this example is similar to the one used in \citet{varian1985non} with constraints to impose the nonnegativity of consumption. Note that  Assumption~\ref{assu: general measurement error}.\ref{assu:(Measurement-error)Meanbudgetneutrality}
    holds because measurement error is independent of prices and mean zero. 
\end{example}
\par
We conclude this section by noting that even if prices are measured correctly, there is a potential source of error coming from the fact that price indexes are the result of aggregation of commodities into categories. This aggregation implies that price indexes do not exactly reflect actual prices faced by consumers. Thus, the budget sets faced by the consumers may be different from the ones implied by price indexes.\footnote{We thank a referee for pointing this out.}
This problem is common to all demand analysis using survey data sets. Nonetheless, separability of the utility function or homotheticity within a category are examples of conditions under which commodity aggregation produces consumption and price indexes that are consistent with utility maximization \citep{jerison1994commodity, lewbel1996aggregation}. If one is willing to impose such conditions, then a rejection of utility maximization, even in the presence of commodity aggregation error, means that rationality can be rejected at the disaggregated level.\footnote{Recent work by \citet{sato2020large} provides nonparametric RP evidence in favor of weak separability and validity of several widely used price indexes for food and beverages categories using data from Japan. In addition, a survey by \citet{shumway2001does} finds evidence in favor of the conditions in \citet{lewbel1996aggregation}.} In addition, a nonrejection of the UMT means that price indexes and consumption behave  \textit{as if} they are rational. That is, the substitution patterns observed in the data after commodity aggregation and their relation with relative changes in price indexes can be summarized by the first-order condition approach.\footnote{Another possible interpretation of our analysis in survey environments is that we are testing for utility-maximizing behavior in a budgeting problem. In the budgeting problem, the consumer allocates her income into different good categories to maximize her utility at that level of aggregation. Then, the consumer maximizes utility within different categories.  This nested maximization problem is similar to mental accounting  \citep{thaler1985mental} in its modern interpretation by \citet{montgomery2019structural}.
}
\par
The aggregation error is not the only additional source of error in prices in survey data. For instance, since prices and expenditures are measured in different surveys, the observed prices in the first survey most likely are not those faced by some of the consumers in the second survey.\footnote{For instance, \citet{gaddis2016} documents that prices are sometimes collected by statistical offices in urban areas only, and that food prices can differ between urban and rural areas. However, price indexes constructed from these prices are matched with household expenditures surveys from both urban and rural areas.} Further investigation of these additional sources of error are left for future research.

\subsection{Measurement Error in Consumption or Prices in Experimental Data Sets}
Measurement error in consumption or prices in experimental data sets arise due to difficulties in eliciting the intended choices of the experimental subjects. Indeed, the experimental elicitation of choice may be subject to random variation due to (i) the subject's misperception of some elements of the task,  (ii) the level of understanding of the experimental design, and (iii)  nonsystematic mistakes in implementing intended choice. In general, there is an imperfect relation between the elicited choice and the intended choice behavior that the experiment tries to measure \citep{gillen2015experimenting}. 

We consider two sources of measurement error in the experimental environment in the context of the budget allocation task due to \citet{choi2014more} and \citet{ahn2014estimating} (which we use in our second application). The first one is the possibility of measurement error in consumption due to trembling-hand errors and the second one is measurement error in prices due to misperception.
\par
We capture the relation between the true intended choices and the measured choices with the following assumption that allows for trembling-hand errors. 
\begin{subassumption}(Trembling-Hand Errors for Experimental Data Sets) \label{assu:tremblinghand}
For all $t\in \mathcal{T}$, it must be the case that 
\[
\Exp{\rand{c}_t}=\Exp{\rand{c}^*_t}.
\]
\end{subassumption}

Assumption~\ref{assu: general measurement error}.\ref{assu:tremblinghand} requires that measurement error in consumption be nonsystematic or equivalently centered around zero. Formally, for all $t\in\mathcal{T}$ it must be that 
\[
\Exp{\rand{w}^c_t}=0.
\]
We use data from \citet{kurtz2019neural} to provide direct empirical evidence supporting Assumption~\ref{assu: general measurement error}.\ref{assu:tremblinghand}. In a setup with two goods \citet{kurtz2019neural} implement a motor task by asking their subjects to click on a visual target located on a budget line. Their experimental interface is exactly the same as in \citet{choi2014more}. \citet{kurtz2019neural} record the target coordinates and the actual coordinates that subjects choose on a screen using a mouse. The difference between the target coordinates and the actual choice is the trembling-hand error. Using this data set, we verified that the average trembling-hand error is statistically not different from zero.\footnote{Formally, we did a t-test of Assumption~\ref{assu: general measurement error}.\ref{assu:tremblinghand} across $23$ subjects for $27$ trials. The null hypothesis is not rejected at the $5$ percent significance level for all except for one good at one trial, but the null hypothesis is not rejected in any instance at the $1$ percent significance level. We did not perform a joint test because there are missing values.} We believe this is reasonable evidence in favor of Assumption~\ref{assu: general measurement error}.\ref{assu:tremblinghand} for budget allocation experiments. 
\par
Failure to account for the possibility of subjects' misperception of the experimental task may affect the elicitation of true consumer behavior. In particular, the experimental design of \citet{choi2014more} and \citet{ahn2014estimating} relies on a graphical representation of the budget hyperplane to elicit consumption choices. It is therefore a visual task and an economic task at the same time. We now consider the possibility of misperception of prices that can be thought of as measurement error in prices due to experimental design. 

\begin{subassumption}\label{assu:limitedattn} (Misperception of Prices for Experimental Data Sets)
For all $t\in \mathcal{T}$, it must be the case that 
\[
\Exp{\rands{\rho}_t}=\Exp{\rands{\rho}^*_t}.
\]
\end{subassumption}

Assumption~\ref{assu: general measurement error}.\ref{assu:limitedattn} relaxes the implicit requirement in the deterministic RP framework that subjects perceive the budget constraints without any distortion. Note that the graphical experimental device used by \citet{choi2014more} and \citet{ahn2014estimating} may make it difficult for consumers to  correctly understand the true prices. We believe that it is desirable to have an RP test of the validity of the UMT that is robust to the misperception of prices when the distortion is (i) attributable to the experimental design and (ii) nonsystematic. 
\par
We use data from \citet{kurtz2019neural} to provide empirical evidence supporting Assumption~\ref{assu: general measurement error}.\ref{assu:limitedattn}. In contrast to the case of trembling-hand errors, there are no direct measurements on price misperception. However, \citet{kurtz2019neural} collected data on visual misperception of coordinates that we can use to test Assumption~\ref{assu: general measurement error}.\ref{assu:limitedattn} indirectly. In particular, we find evidence for a special case of this assumption that is compatible with the misperceived price being $\rands{\rho}_{l,t}=\rands{\rho}_{l,t}^*/\rands{\epsilon}_{l,t}$, for all goods $l\in \{1,\cdots,L\}$, where $\rands{\epsilon}_{l,t}$ captures misperception. Assumption~\ref{assu: general measurement error}.\ref{assu:limitedattn} holds if the misperception error has mean 1 and is independent of true prices.
\par
In a setup with two goods, \citet{kurtz2019neural} implement a visual task by giving their subjects a particular numerical target $z_t^*=(z_{1,t}^*,z_{2,t}^*)$. Next, the experimenters ask their subjects to locate  this coordinate on a budget line. As a result, the point $z_t=(z_{1,t}^*\epsilon_{1,t},z_{2,t}^*\epsilon_{2,t})$ is observed, where we treat $\epsilon_{t}$ as the misperception error. Since prices can be computed from two observed points on the same budget line, misperceived prices will be $\rho_{l,t}=\rho_{l,t}^*/\epsilon_{l,t}$ for $l\in\{1,2\}$.\footnote{Formally, we assume that the perception error realization is the same when the subject observes a second point given the same budget line.} A sufficient condition for  Assumption~\ref{assu: general measurement error}.\ref{assu:limitedattn} is that the multiplicative error is independent of true prices and that the mean perception error across individuals is $1$. We cannot check the independence condition from this validation data. Nonetheless, we cannot reject the null hypothesis that the mean (multiplicative) perception error is equal to $1$ across the $25$ subjects in this experiment, for each of the $27$ trials (at the $5$ percent significance level).\footnote{We performed a t-test analogous to the case of the trembling-hand error.}  
\par
In the budget allocation tasks implemented by \citet{choi2014more} and \citet{ahn2014estimating}, the subject is forced to choose a point on the budget line. Given that in these experimental environments the total income or wealth is known, the support of measurement error $E|X$ must be such that $\rands{\rho}_t^{*\prime}\rand{c}_t^*=\rands{\rho}_t\tr\rand{c}_t\:\as$.\footnote{In our application, we want to test separately for the validity of the static UMT together with (i) trembling-hand error in consumption or (ii) misperception of prices. Hence, we will assume that prices are perfectly measured in the former case (i.e., $\rand{w}_t^p=0\:\as$), while assuming that consumption is perfectly measured in the latter case (i.e, $\rand{w}_t^c=0\:\as$).} Note that in both the trembling-hand errors and in the misperception case we have $d_M=|\mathcal{T}|\cdot L$. The number of centering conditions grows with the number of commodities and the number of decision tasks. 
\subsection{Measurement Error in Scanner Data Sets}
Even though our main focus is on survey and experimental data sets, our methodology can be used in other types of data sets with their corresponding measurement error constraints. Here we provide a quick overview of the case of scanner data sets because of its relevance for RP practitioners. 
\par
Scanner consumer panel data sets are usually of high quality; thus, measurement error concerns may be less important. However, in some cases, like in the well-known Nielsen Homescan Scanner Data Set (NHS), there is evidence of measurement error in prices due to classical misreporting but also due to imputation done in the data collection stage \citep{einav2010recording}.\footnote{Similarly, in the older Stanford Basket Scanner Data Set \citep{echenique2014testable}, there could be measurement error in prices due to unobserved coupons or discounts.}  
\par 
\citet{einav2010recording} use validation data from a retailer and compare it to a subsample of the self-reported NHS (for $2004$); they conclude that consumption is measured rather precisely with roughly $90$ percent of all records being exactly reported. On the other hand, prices are likely to be recorded with an upward bias. In fact, prices are measured precisely in only $50$ percent of records in the sample of interest. The sample mean of the logarithm of prices\footnote{The mean is taken across all records in the time window of interest.} in the NHS seems to be slightly above the same quantity for the validation data.\footnote{It seems that Nielsen generates the price in cases in which they have access to retailer-level price data. The reason for the overreporting is that this imputation process ignores the discounts that consumers may get \citep{einav2010recording}.} Statistically, the difference between the logarithm of prices in the total sample of interest in the NHS and in the validation data is not different from zero (at the $5$ percent significance level), as reported in \citet{einav2010recording}.\footnote{In the subsample of records from the NHS in which the consumers did not get a sales discount, the measurement error in the logarithm of prices seems classical (i.e., centered around zero and symmetric). In contrast, in the subsample of records in which the consumers got a sales discount, the distribution of measurement error in the logarithm of prices has a fat right tail \citep{einav2010recording}. As a result, the total measurement error in the logarithm of prices is not symmetric.} 
\par
We believe that the findings of \citet{einav2010recording} support the conclusion that consumption measurement error in the NHS can be treated as local perturbations (see Appendix~\ref{appen: local perturb}). Hence, we assume that consumption is measured precisely (i.e., $\rand{w}_t^c=0\:\as$) and impose the following centering condition for the NHS:
\begin{subassumption}
    (Centered Differences in Price Measurement Error)\label{assu:scannermeasurement} For all $t,s\in\mathcal{T}$ and all $l=1,2,\dots,L$, it must be the case that
    \[
    \Exp{\log{\rand{p}}_{l,t}^*}=\Exp{\log{\rand{p}}_{l,t}}.
    \]
\end{subassumption}

The above centering condition for measurement error allows for nonsystematic overreporting or underreporting of the logarithm of prices on average. Assumption~\ref{assu: general measurement error}.\ref{assu:scannermeasurement} implies that the number of measurement error conditions is $d_M=\abs{\mathcal{T}}\cdot L$.
\par 
The source of measurement error in the NHS is very different from the one in survey data. In the NHS, households report quantities and prices, while in the survey data set, households report expenditures. For this reason survey data sets are usually used for aggregate analysis (at the level of the category of goods), in cases where price indexes computed by the national statistical agencies are available. As a result, the main mismeasured object is consumption. In contrast, scanner data sets have rich disaggregated information on prices and quantities. But misreported or imputed prices lead to measurement error. For that reason, we impose our centering condition on the main source of measurement error in each of these cases.

\subsection{Other Forms of Measurement Error: Moment Inequalities and Instruments}
Our methodology also allows for imposing moment inequality restrictions on measurement error. Following \citet{schennach2014entropic}, we can handle conditions of the type
\[
 \Exp{g_M(\rand{x},\rand{e})}\geq 0
\]
by introducing an additional slack positive random vector $\rand{s}=(\rand{s}_j)_{j\in \{1,\cdots,d_M\}}$  such that 
\[
\Exp{g_M(\rand{x},\rand{e})-\rand{s}}= 0.
\]
Moment inequalities may be particularly useful for taking into account bounds on measurement-error averages (e.g., $\Exp{\rand{w}_t^p}\leq 0$ for all $t\in \mathcal{T}$). Imposing support constraints on measurement error (e.g., by rounding: $\abs{\rand{w}_{l,t}}\leq 1/2\:\as$) can be handled automatically by setting the support $E|X$ appropriately. 
\par
Measurement error moments can also capture exclusion/orthogonality restrictions. In other words, the analyst may have information or be willing to assume that a particular observed variable is orthogonal to measurement error in consumption, prices or expenditures. The literature of demand estimation in both the static and dynamic setups, which use moments, usually handles measurement error through exclusion restrictions.\footnote{See \citet{lewbel2009tricks} and \citet{alan2018euler}.} 
\par 
Note that Assumption~~\ref{assu: general measurement error}.\ref{assu:(Measurement-error)Meanbudgetneutrality} can be understood as an orthogonality restriction between prices and consumption measurement error (when prices are observed without error). Consider an instrumental variable $\rand{z}_t$ supported on $\Real^{L}$ that is orthogonal to the consumption measurement error; this can be expressed as:
\[
\Exp{\rand{z}_t\tr\rand{w}^c_t}=0,
\]
for all $t\in \mathcal{T}$. For Assumption~\ref{assu: general measurement error}.\ref{assu:(Measurement-error)Meanbudgetneutrality}, the variable $\rand{z}_t=\rand{p}_t\:\as$. These additional restrictions will increase the power of the test simply because they contain more information about measurement error.

\subsection{Asymptotic Power for s/ED-Rationalizability: Illustrative Example}
In previous sections we discussed different centering conditions that can be imposed on measurement error in different data sets. Since our characterization of $\s/\m$-rationality is necessary and sufficient, we have asymptotic power of one against the alternate hypothesis of inconsistency with $\s/\m$-rationalizability. Nonetheless, we still have to show that the alternative hypothesis space is nonempty. That is, we need to make sure that the restrictions on measurement error we provide are falsifiable. Here we build an illustrative example for $\s/\ed$-rationality and the centering condition captured by Assumption~\ref{assu: general measurement error}.\ref{assu:(Measurement-error)Meanbudgetneutrality} with no price measurement error. Similar examples can be built for other moments used in our applications (see Appendix~\ref{appen: counter and robustness}). We also provide simulation evidence in Appendix~\ref{appen:monte carlo,power}.
\par
Consider the random array $(\rands{\rho}_t,\rand{c}_t)_{t\in \mathcal{T}}$ such that $\mathcal{T}=\{0,1\}$, $\rands{\rho}_{0}=(1,1)\tr$, $\rands{\rho}_1=(2,2)\tr$, $\rand{c}_0=(1,1)\tr$, and $\rand{c}_1=(2,2)\tr \:\as$. This data set requires that prices and consumption be the same for all consumers in the population. Moreover, it is easy to see that in deterministic terms $\ed$-rationality fails. We will now show that $\s/\ed$-rationality also fails with appropriate centering conditions. Assume towards a contradiction that this random array is $\s/\ed$-rationalizable. Then, there have to be random variables $\rand{d}\in(0,1]$,  $\{\rand{w}^{c}_t\}_{t=0,1}$, and $\{\rand{v}_t\}_{t=0,1}$ such that
\begin{align*}
    \rand{v}_1-\rand v_0\geq \dfrac{\rho_1\tr}{\rand{d}}(&c_1-c_0)-\dfrac{\rho_1\tr}{\rand{d}}(\rand{w}_1-\rand{w}_0)\:\as,\\
    \rand{v}_0-\rand v_1\geq \rho_0\tr(&c_0-c_1)-\rho_0\tr(\rand{w}_0-\rand{w}_1)\:\as.
\end{align*}
Combining these two inequalities, we get
\[
\dfrac{\rand{d}}{2}(\rand{v}_1-\rand{v}_0)\geq \rand{v}_1-\rand{v}_0\:\as.
\]
Thus, since $1\geq\rand{d}>0\:\as$, it follows that $\rand{d}(\rand{v}_1-\rand{v}_0)\leq0\:\as$. However, this implies that $\rand{v}_0\geq\rand{v}_1\:\as$.
If Assumption~\ref{assu: general measurement error}.\ref{assu:(Measurement-error)Meanbudgetneutrality} holds, then it must be the case that:
\[
\Exp{\rho_0\rand{w}_0}=\Exp{\rho_1\rand{w}_1}=0.
\]
As a result, applying the centering condition, we obtain a contradiction:
\[
0\geq\Exp{\rand{d}(\rand{v}_1-\rand{v}_0)}\geq 4 +\Exp{\rho_1\tr(\rand{w}_1-\rand{w}_0)}=4+\Exp{\rho_1\tr\rand{w}_1}-2\Exp{\rho_0\tr\rand{w}_0}=4. 
\]
This contradiction means that the random array $(\rands{\rho}_t,\rand{c}_t)_{t\in \mathcal{T}}$ is not $\s/\ed$-rationalizable under the centering conditions we described above. 
\par
We highlight that potential lack of statistical power of some centering conditions is not a defect of our methodology. The reason is that if the quality of the data set at hand is not good enough to credibly discern whether observed behavior is consistent with a given model, then no methodology can address this issue.
\section{Empirical Application (I): Testing the Dynamic UMT with Exponential Discounting in Survey Data}\label{sec:empirical application}
In our first application, we apply our methodology to a consumer panel data set gathered from single-individual and couples' households in Spain to test for $\s/\ed$-rationalizability. This important model is under increasing scrutiny because experimental evidence tends to find that the behavior of experimental subjects is time-inconsistent.\footnote{See, for instance, \citet{andreoni2012estimating}, \citet{montielolea2014axiomatization}, and \citet{echenique2014testable}.} Nonetheless, it is important to explore to what extent this finding has external validity. 
\par
To address this issue, some researchers have turned to survey data in the form of household consumption panels. 
Most of this work has found evidence against
exponential discounting \citep{blownonparametric2017}. However, the existing literature has not yet addressed the issue of measurement error in the consumption reported by households in a way that allows us to perform traditional hypothesis testing. (Some additional problems with the existing evidence are (i) the strong parametric assumption on preferences, and (ii) homogeneity restrictions on the discount factor and preferences.)
\par 
One solution to some of the problems in the literature can be found in the work on deterministic RP by \citet{browning1989anonparametric}. In particular, Browning's work avoids making parametric assumptions about the functional form of instantaneous utility. However, this work does not take into consideration the fact that consumption can be mismeasured. In our Monte Carlo experiment, the deterministic test in \citet{browning1989anonparametric} rejects the correct null hypothesis of exponential discounting behavior in $79.4$ percent of the cases, while our methodology correctly fails to reject the null hypothesis that all households are consistent with exponential discounting at the correct $5$ percent significance level (Appendix~\ref{appen:monte carlo,browning}). 
In addition, when we applied \cite{browning1989anonparametric} deterministic methodology to our single-individual households data set, we also obtained a very low success rate.  However, this low success rate of the deterministic test for exponential discounting may be due to measurement error. In our empirical application, we found support for exponential discounting behavior in single-individual households, while at the same time, support for the negative finding in \citet{blownonparametric2017} in the case of couples' households. This fact indicates that deterministic tests may not be very informative about the behavior in a population due to measurement error. Small violations of the deterministic RP inequalities will lead to big rejection rates. Introducing measurement error into the analysis takes these small violations into account.\footnote{In Appendix~\ref{appen:extensions,collective}, we also establish that our test fails to rejects implications of the collective household consumption problem presented in \citet{adams2014consume}.}

Our empirical application also contributes to the literature on estimating the discount factor distribution in survey data sets and in a classical consumer theory environment. This has been the topic of a large body of work which, however, has reached little or no consensus.\footnote{We refer the reader to the survey by \citet{frederick2002timediscounting} for its extensive references.} This lack of consensus can be attributed in some degree to a failure to identify the parameters of interest. Here, we show that the discount factor distribution cannot be identified solely from prices, interest rates, and consumption observations in a data set that suffers from measurement error. (For details see Section~\ref{sec:recoverabilitycounterfactual}.) However, our methodology allows us to test for exponential discounting behavior even in this setting (i.e., without identifying the discount factor distribution).\footnote{In order to learn more information about the discount factor distribution, one needs additional data. One notable example is \citet{mastrobuoni2016criminal}, which uses a quasi-experiment to pin down criminals' time preferences.}

If one ignores the issues of measurement error, the Euler equation allows one to estimate the discount factor and the marginal utility either parametrically or semi-parametrically.\footnote{Examples of estimators of the Euler equation and similar models include \citet{hall1978stochastic}, \citet{hansen1982generalized}, \citet{dunn1986modeling}, \citet{gallant1989seminonparametric}, \citet{chapman1997approximating}, \citet{campbell1999force}, \citet{ai2003efficient}, \citet{chen2009land}, \citet{darolles2011nonparametric}, \citet{chen2014local}, and \citet{escanciano2016}.}
Since our objective is not to estimate but to test the exponential discounting model, we follow a different path. 

In particular, we work with the data set used in \citet{adams2014consume}: the Spanish Continuous Family Expenditure Survey (\textit{Encuesta Continua de Presupuestos Familiares}). The data set consists of the expenditures for 185 individuals and 2004 couples, as well as prices for 17 commodities (categories of goods) recorded over four time periods. Each household was interviewed for four consecutive quarters between 1985-1997. We construct a panel data set of consumption and prices by pooling household's quarterly data points.
\par
The categories of goods are: all food and nonalcoholic drinks, all clothing, cleaning, nondurable articles, household services, domestic services, public transport, long-distance travel, other transport,  petrol, leisure (four categories), other services (two categories), and food consumed outside the home. The data set also contains information on the nominal interest rate on consumer loans faced by the household in any particular quarter.\footnote{We spare the reader more details and refer them instead to \citet{adams2014consume} for further information on the data set.}
\par
Formally, we test for $\s/\ed$-rationalizability with (i) effective prices equal to the discounted spot prices, $\rands{\rho}_t=\rands{\rho}_t^{\ed}$ (defined in Table~\ref{table:Lambdarhodef}), (ii) random marginal utility of income equal to the discounted value of one unit of wealth, $\rands{\lambda}_t=1\:\as$, and (iii) $\rands{\delta}_t=\rand{d}^{t}$ where $\rand{d}$  is interpreted as the (time-invariant) random discount factor supported on or inside $(0,1]$.\footnote{When the discount factor is $0$, it is easy to see that $\ed$-rationality becomes equivalent to $\rat$-rationality. In practice, we pick the interval $[0.1,1]$ as the largest possible support for discount factor. This interval contains most reasonable values in the literature \citep{frederick2002timediscounting,montielolea2014axiomatization}.} Imposing the additional Assumptions~\ref{assu: general measurement error}.\ref{assu:(Measurement-error)Meanbudgetneutrality}, \ref{assu:nondegenerate w}, and~\ref{ass: iid}, we can apply our testing methodology developed in Section~\ref{subsec:elvis,testing}. 
Recall that Assumption~\ref{assu: general measurement error}.\ref{assu:(Measurement-error)Meanbudgetneutrality} indicates that measurement error does not alter the mean value of total expenditure, $\Exp{\rands{\rho}_{t}\tr\rand{c}_{t}}=\Exp{\rands{\rho}_{t}\tr\rand{c}_{t}^{*}}$.

\subsection{The Results}\label{subsec:emp apl,results}

\subsection*{Single-Individual Households}
We apply the deterministic methodology of \citet{browning1989anonparametric} to single-individual households. Our initial conclusion is that $84.3$ percent of the single-individual households behave inconsistently with exponential discounting (even when allowing for substantially  more heterogeneity than previous works).\footnote{We search for each individual household discount factor $d$ in the grid $\{0.1,0.15,\cdots,1\}$. See \citet{crawford2010habits,adams2014consume}, and \citet{blownonparametric2017} for discount factor ranges close to $[0.9,1]$.} Next, we revisit this conclusion using our methodology, which addresses measurement error, while allowing a heterogeneous discount factor. We find that we cannot reject exponential discounting. Formally, we find at the $5$ percent significance level that the random array $\rand{x}=(\rands{\rho}_t,\rand{c}_t)_{t\in\mathcal{T}}$ is $\s/\ed$-rationalizable with a random discount factor $\rand{d}$ \textit{supported on or inside} $[0.975,1]$  
($\mathrm{TS}_{n}=6.480$, $\text{p-value}=0.166$). We believe that the lower bound value of $0.975$, for the quarterly discount rate, is reasonable (it corresponds to a annualized discount rate of $0.9$). 
\par
Although there is no agreement in the literature about what appropriate values for the discount factor are \citep{frederick2002timediscounting}, a common benchmark in applied work is to set the discount factor according to the real interest rate in the economy.\footnote{\citet{dejong2011structural} suggests setting the discount factor value to $d=1/(1+\overline{r})$, where $\overline{r}$ is the average (across individuals) annual real interest rate.} In our case, the lower bound of the quarterly discount factor is $0.975$. This discount factor bound corresponds to an annual real interest rate of $10.7$ percent. This is roughly consistent with this benchmark for the average real interest rates observed in our sample.\footnote{Using the results from Section~\ref{sunsec:Recoverability} we also tested several candidates for the average quarterly discount factor $\theta_0=\Exp{\rand{d}}$: 
$0.995$ ($\mathrm{TS}_{n}=14.071$, $\text{p-value}=0.015$), 
$0.996$ ($\mathrm{TS}_{n}=5.105$, $\text{p-value}=0.403$), and 
$0.997$ ($\mathrm{TS}_{n}=2.967$, $\text{p-value}=0.705$). 
The smallest $\theta_0$ that is not rejected at the $5$ percent significance level, $0.996$, corresponds to the annual average discount factor of $0.984$ and the average annual real interest rate of $1.6$ percent.} 
\par
However, at the $5$ percent significance level, we cannot reject exponential discounting, when discounting is set at $1\:\as$ ($\mathrm{TS}_{n}=6.140$, $\text{p-value}=0.189$). Of course, this does not mean that this is the value of the discount factor. In fact, our sample and our knowledge about measurement error are not enough to elicit the support of the distribution of discount factors. Nonetheless, we can differentiate between behavior consistent with exponential discounting and systematic departures from it (as seen in the power analysis of our method). In sum, our findings provide evidence supporting  exponential discounting for singles under reasonable discount factors. 
\par 
Using our methodology in data sets with more time periods or with additional information about measurement error will improve the information we can obtain about discount factors. However, we will see that identifying the support of the discount factor is not essential to provide informative bounds on average demand (see Appendix~\ref{appen:averagevarianempirical}).\footnote{As robustness check we conducted the test for the discount factor supported on or inside $[0.1,1]$, $[0.5,1]$, and $[0.9,1]$. As expected, since all three intervals contain $[0.975,1]$, the null hypothesis is not rejected. The test statistic $\mathrm{TS}_{n}$ is equal to $6.476$ with $\text{p-value}=0.166$ for all three intervals.}

\subsection*{Couples' Households}
For the couples' households, the deterministic test of \citet{browning1989anonparametric} rejects the exponential discounting model for $89.8$ percent of the observations. Although this number seems large, one should keep in mind that for single-individual households the same deterministic test rejects the model in $84.3$ percent of the cases. At the same time, for $\rand{d}\in[0.1,1]$, our method does not reject the exponential discounting model for single-individual households. But we do reject the model for couples' households. In the case of couples' households, the test statistic is $\mathrm{TS}_{n}=71.015$ ($\text{p-value}<10^{-12}$).\footnote{As a robustness check, we also tested the model assuming that $\rand{d}$ is supported on or inside $[0.5,1]$, $[0.9,1]$, and $\rand{d}=1\:\as$. As expected, since these intervals are contained in $[0.1,1]$, the null hypothesis is rejected. The test statistic $\mathrm{TS}_{n}$ ($\text{p-value}$) is equal to $71.015$ ($<10^{-12}$), $70.964$ ($<10^{-12}$), and $101.579$ ($<10^{-12}$), respectively.}

\subsection{Discussion and Related Work on Testing the Exponential Discounting Model}
One possible concern about our methodology is that its power is low in survey data sets (due to a small $T$ dimension of the data) given its nonparametric nature. However, in Appendix~\ref{appen:monte carlo,power} we report, for $1000$ trials with a sample size of $2000$, a rejection rate  greater than or equal to $72$ percent (with a data generating process consistent with the collective model as in \citealp{adams2014consume}).

Our results for couples' households provide the first nonparametric evidence which is robust to measurement error and which demonstrates that not all couples' households manifests behavior consistent with exponential discounting. In Appendix~\ref{appen:extensions,collective}, we establish that a suitable extension of our methodology fails to reject the collective household consumption  problem presented in  \citet{adams2014consume}.\footnote{\citet{mazzocco07} also provides evidence in favor of the collective model using a parametric methodology.} This should convince practitioners about the importance of  modeling intrahousehold decision-making when dealing with intertemporal choice. The rejection of exponential discounting behavior for couples' households can be better understood given theoretical results that show that aggregating time-consistent preferences may lead to time-inconsistent behavior \citep{jackson2015collective}. 

The deterministic methodology of \citet{browning1989anonparametric} concludes that the fraction of households that is inconsistent with exponential discounting under the deterministic test is similar for both single individuals and couples, but our statistical test rejects in the latter case while reaching the opposite conclusion in the former case. The difference in conclusions is due to the fact that our test implicitly takes into account the severity of the violations of exponential discounting, and imposes the mean budget-neutrality assumption on the measurement error corrections.\footnote{We have also tested $\s/\ed$-rationality under Assumption~\ref{assu: general measurement error}.\ref{assu:tremblinghand}, which requires that consumption quantities measurement error is centered around zero as in \citet{varian1985non}. We strongly reject the null hypothesis in this case, providing an illustration of the importance of using empirically-backed centering conditions.} 
\par
Our main empirical finding  is robust to price measurement error. Adding an additional source of measurement error would make the rationalizability notion less demanding.\footnote{See Appendix~\ref{appen: local perturb} for results on robustness of our methodology to local perturbations in prices.} 
\section{Empirical Application (II): Static UMT in Experimental Data Sets with Trembling-Hand Errors and Misperception of Prices}\label{sec:ApplicationStaticUMT}
In this section, we use our methodology to test the static UMT in the widely known experimental data set by \citet{ahn2014estimating}. The experimental task consists of $T=50$ independent decision trials with $n=154$ subjects. Each decision is a portfolio problem. The subjects face three states of the world $\sigma\in\{1,2,3\}$. The subjects are given $100$ tokens per task and they have to choose a bundle of Arrow securities, $c_t\in \Real^{3}_{+}$, for a randomly drawn price vector $p_{t}\in \Real^{3}_{+}\setminus{\{0\}}$. The subjects are forced to choose a bundle that satisfies Walras' law such that for every decision task it must be that $p_{t}\tr c_{t}=100$. The subjects receive a payment in tokens according to the probability of each state of the world at the end of each round. At the end of the experimental task one of the rounds was selected using a uniform distribution and the tokens payment corresponding to that round is paid in dollars.\footnote{The subjects are informed that the probability of state $\sigma=2$ is $1/3$, and the joint probability of the states $\sigma\in \{1,3\}$ is $2/3$.} The exchange rate is $0.05$ dollars per token and the participation fee is $5$ dollars.
\par
This ingenious experimental device (due to \citealp{choi2014more}) has allowed the RP practitioners to collect a large number of observations per individual with high price variation. \citet{beatty2011howdemanding} highlighted the importance of rich price variation to have enough power in the experimental design to detect violations of the UMT. 
\par
The deterministic RP test for the static UMT in this data set concludes that only $12.99$ percent of the experimental subjects pass the test. At first sight, this is a striking result, because the majority of subjects seems to be inconsistent with the core consumer model in economics. We reexamine the robustness of this result to measurement error in consumption due to errors in the elicitation of the intended behavior of consumers. In our application, we found support for the static UMT in this experiment in stark contrast with the findings from the deterministic RP test.  
\par
Measurement error in the experimental environment may arise due to the nature of the design. Subjects are presented with a graphical representation of a $3$ dimensional budget hyperplane, and they must choose the consumption bundle by pointing to a point in this hyperplane using a computer mouse or the arrow keys in a keyboard.  We must note that there is a mechanical measurement error due to the resolution of the budget hyperplane which is $0.2$ tokens. More important factors, such as a lack of expertise in the decision task, could lead the consumers to make implementation mistakes when trying to choose their preferred alternative. \citet{kurtz2019neural} provide direct evidence of trembling-hand error in a budget allocation task similar to \citet{ahn2014estimating}, as well as indirect evidence of visual misperception, as we have previously discussed. Nonetheless, the actual reason why the experimental design fails to elicit the intended decision task is not our main concern. We take the stand that a desirable test of the static UMT has to be robust to possible nonsystematic mistakes arising from any experimental design. We consider both trembling-hand errors in consumption and nonsystematic misperception of prices.  
Formally, we test for $\s/\rat$-rationalizability with (i) effective prices equal to the prices at each trial $\rands{\rho}_t=\rands{\rho}_t^{\rat}$ (defined in Table~\ref{table:Lambdarhodef}), (ii) marginal utility of wealth $(\rands{\lambda}_t)_{t\in \mathcal{T}}$ supported on $\Lambda=\Real^{\abs{\mathcal{T}}}_{++}$, and (iii) random discount factor equal to $1$ ($\rands{\delta}_t=1\:\as$ for all $t\in\mathcal{T}$).
\par 
We note that there is evidence for trembling-hand errors and misperception errors in the budget allocation task we consider here (see Section~\ref{sec:error in data}). However, we fail to reject the null hypothesis of the static UMT when allowing only for price misperception. Evidently, it follows that we fail to reject the null hypothesis of the static UMT when allowing for both price misperception and trembling hand errors. In addition, we reject the null hypothesis of the static UMT when allowing only for trembling hand errors. Thus, we can conclude that the main source of error in this type of experimental tasks is most likely price misperception.

\subsection*{Misperception of Prices}
We consider the possibility of measurement error in prices arising from misperception. We investigate the case where consumers behave as if they are trying to maximize a utility function subject to a misperceived vector of prices. In this regard, we take the point-of-view of \citet{gillen2015experimenting} that points out that misperception in a low stakes experimental environments may affect the external validity of the conclusions drawn from an experimental data set. \citet{kurtz2019neural} provide indirect evidence of errors induced by the visual task in the budget allocation problem in the very closely related interface of \citet{choi2014more}. We require that consumers' average perception of prices is unbiased, namely, for all $t\in \mathcal{T}$ it must be that 
\[
\Exp{\rands{\rho}_t}=\Exp{\rands{\rho}_t^*}. 
\] 
This is captured in Assumption~\ref{assu: general measurement error}.\ref{assu:limitedattn}.
In order to isolate the effect of misperception on the observed violations of the static UMT, we assume that consumer behavior is measured perfectly ($\rand{w}_t^c=0\:\as$). In addition, due to the experimental design in \citet{ahn2014estimating}, it must be that true prices are such that $\rand{p}^{*\prime}_{t} \rand{c}^*_t=100\:\as$ (i.e., Walras' law holds). The value of the test statistic is $\mathrm{TS}_{n}=17.879$ ($\text{p-value}> 1-10^{-10}$). This is below the conservative critical value $\chi_{150,0.95}^{2}=179.581$.\footnote{We used $900$ draws in the Monte Carlo computation of the maximum-entropy moment of this problem. We chose this number on the basis of the trembling-hand error exercise in this experimental data set in the next section.}
We do not reject the null hypothesis of the static UMT in the presence of misperception of prices, when the average vector of prices is equivalent to the true vector of prices.\footnote{Assumption~\ref{assu: general measurement error}.\ref{assu:limitedattn} has empirical bite. See Appendix~\ref{appen: counterexamples GARP}.}  More importantly, this finding puts in perspective the rejection of the static UMT in experimental data sets that use the graphical representation of budget hyperplanes \citep{choi2014more,ahn2014estimating}. In particular, we find evidence that prices misperception matters. When we account for this possibility, we no longer reject the null hypothesis of the static UMT.

\subsection*{Trembling-Hand Errors in Consumption}
We say that measurement error in consumption is the result of a trembling-hand when it is centered at zero. This idea is captured in Assumption~\ref{assu: general measurement error}.\ref{assu:tremblinghand} that requires that for all $t\in \mathcal{T}$ it must be that $\Exp{\rand{w}^c_t}=0$. Also, we keep the restriction that the true prices and consumption satisfy Walras' law $\rand{p}^{*\prime}_{t} \rand{c}^*_t=100\:\as$, and that prices are measured perfectly $\rand{w}_t^p=0\:\as$. As we discussed before, we have obtained direct evidence for this centering condition using the replication data of \citet{kurtz2019neural}.    
We strongly reject the null hypothesis of the static UMT when allowing for measurement error in the elicitation of the true consumer behavior due to trembling-hand errors. The test statistic is $\mathrm{TS}_{n}=299.137$ ($\text{p-value}<10^{-11}$). This is above the conservative critical value $\chi_{150,0.95}^{2}=179.581$.\footnote{We used $2970$ draws in the Monte Carlo computation of the maximum-entropy moment of this problem. We also tried $580$ and $900$ with test statistics with values of $301.654$ and $306.882$. Which is evidence that moderate size of draws for the Monte Carlo integration step do relatively well in this setup.}

\section{Relation to the Literature}\label{sec:literature}
\citet{afriat1967construction} shows that it is sufficient to know shape constraints on the utility function (i.e., concavity) to bypass the need to know the utility function when testing for the UMT. We generalize this insight by allowing measurement error in consumption and prices. Among authors using the deterministic RP approach, the immediate antecedents to our work using the first-order approach are \citet{browning1989anonparametric}, \citet{blownonparametric2017}, and  \citet{brownquasilinear}. Important advances have been made on testing and doing counterfactual analysis under rationalizability or random utility.\footnote{Relevant examples are \citet{blundell2014bounding}, \citet{dette2016testing}, \citet{lewbel2017unobserved}, and \citet{kitamura2018nonparametric}.} However, the majority of these results assume that observed quantities are measured accurately. \citet{varian1985non} is possibly the first work to introduce the subject of measurement error into the RP approach. Varian's methodology is the closest to that of our own work; he considers precisely measured (albeit random) prices to study measurement error in consumption. Varian's work is compatible with standard statistical hypothesis testing under the strong assumptions of normality (with known variance) and additivity of consumption measurement error. In contrast, our methodology is fully nonparametric. We are able to improve upon Varian's methodology and relax its core assumptions by using a moments approach to measurement error in the RP framework. 
\par
Other papers have dealt with measurement error under different parametric assumptions about measurement error or about the heterogeneity of preferences.\footnote{\citet{gross1995testing} assumes that random consumption is generated by consumers with similar preferences. \citet{tsur1989testing} imposes a log-normal multiplicative measurement-error structure in expenditures. \citet{hjertstrand2013simple} proposes a generalization of  \citet{varian1985non}, but requires knowing the distribution of measurement error. \citet{echenique2011money} assume that measurement error in prices is a normal random variable independent across households and prices with a fixed mean and known variance.}  
\citet{deb2017revealed} consider a nonparametric model of ``price preference.'' They propose an RP test of their model that is robust to small measurement error in prices. \citet{boccardi2016predictive} considers a case of demand with error and establishes a way to account for the trade-off between the fit of the model and its predictive ability (which is a generalization of \citealp{beatty2011howdemanding}).\footnote{The fit of an RP model indicates whether the a data set is consistent with the model. The predictive ability of an RP model is a measure of how easy it is for a data set generated at random to be consistent with the model. See \citet{beatty2011howdemanding} for further details.} 

In practice, the RP theorists (e.g., \citealp{adams2014consume} and \citealp{cherchye2017household}) have dealt with measurement error by perturbing (minimally) the observed individual consumption in order to satisfy the conditions of an RP test. For instance, \citet{adams2014consume} find the additive perturbation with a minimal norm that renders the individual consumption streams compatible with the RP restrictions. Then a subjective threshold is imposed on the maximum admissible norm of the measurement error vector. If the computed norm is above the threshold, then the model is rejected. However, their methodology has one important drawback: every data set can be made to satisfy their test or, equivalently, the test has no power. In addition, the test in \citet{adams2014consume} has no size control. 

Among researchers using the RP approach, \citet{blundell2003nonparametric,blundell2014bounding} are the first to provide consumer demand bounds, under the assumption of static utility maximization in a semiparametric environment (with additive heterogeneity) in which income changes continuously. Our work differs from theirs mainly in that we allow for unrestricted heterogeneity in preferences, do not require that income be observable, and do not impose semiparametric assumptions on wealth effects to provide bounds for demand, given new prices. In addition, nonclassical measurement error is not compatible with their approach.

\section{Conclusion}\label{sec:conclusion}
We propose a new stochastic and nonparametric RP approach (suitable for an environment with measurement error in consumption or prices) that is useful to test for several consumer models that can be characterized by their first-order conditions. In particular, our work can be used (but is not limited) to test for static utility maximization \citep{afriat1967construction}, and for dynamic rationalizability with exponential discounting \citep{browning1989anonparametric}. 

\bibliographystyle{apalike2}
\phantomsection\addcontentsline{toc}{section}{\refname}\bibliography{references}
\appendix
\hypertarget{appendix}{\section{Appendix}\label{sec:appendix}}

\subsection{Proof of Lemma~\ref{lem:RandomExponentialdiscountingCstar}}\label{appen: proof Lemma2}
First we establish that ({i}) implies ({ii}). If the random array $(\rands{\rho}_{t}^*,\rand c_{t}^{*})_{t\in\mathcal{T}}$ is $\s/\m$-rationalizable, by concavity of $\rand u(\cdot)$, with probability 1 for any $s,t$ and some $\xi\in\nabla \rand{u}(\rand{c}_t^*)$
\begin{align*}
\rand{u}(\rand{c}_s^{*}) -\rand{u}(\rand{c}_t^{*})&\leq \xi\tr(\rand c_s^{*}-\rand c_t^{*}),\\
\xi&\leq\frac{\rands{\lambda}_{t}}{\rands{\delta}_{t}} \rands{\rho}_t^*.
\end{align*}
Let $\rand{N}$ be a random set of indices such that $\frac{\rands{\lambda}_{t}}{\rands{\delta}_{t}}\rands{\rho}_{ti}^*=\xi_i$ for every $i\in \rand{N}$. Hence, $\frac{\rands{\lambda}_{t}}{\rands{\delta}_{t}}\rands{\rho}_{ti}^*\geq\xi_i$ for every $i\not\in \rand{N}$ with probability $1$. As a result, $\rand{c}_{ti}^{*}$ has to be a corner solution for every $i\not\in \rand{N}$. That is, $\rand{c}_{ti}^{*}=0$. Thus, with probability $1$,
\begin{align*}
&\rand{u}(\rand{c}_s^{*}) -\rand{u}(\rand{c}_t^{*})\leq \xi\tr(\rand c_s^{*}-\rand c_t^{*})=\sum_{i\in \rand{N}}{\xi_i(\rand c_{si}^{*}-\rand c_{ti}^{*})}+\sum_{i\not\in \rand{N}}{\xi_i\rand c_{si}^{*}}=\\
&=\sum_{i\in \rand{N}}{\frac{\rands{\lambda}_{t}}{\rands{\delta}_{t}}\rands{\rho}_{ti}^*(\rand c_{si}^{*}-\rand c_{ti}^{*})}+\sum_{i\not\in \rand{N}}{\xi_i\rand c_{si}^{*}}\leq \sum_{i\in \rand{N}}{\frac{\rands{\lambda}_{t}}{\rands{\delta}_{t}}\rands{\rho}_{ti}^*(\rand c_{si}^{*}-\rand c_{ti}^{*})}+\sum_{i\not\in \rand{N}}{\frac{\rands{\lambda}_{t}}{\rands{\delta}_{t}}\rands{\rho}^*_{ti}\rand c_{si}^{*}},
\end{align*}
where the last inequality follows from $\rand c_s$ being nonnegative. As a result, with probability $1$,
\[
\forall s,t\in\mathcal{T}\::\:\rand u(\rand c_{t}^{*})-\rand u(\rand c_{s}^{*})\geq\frac{\rands{\lambda}_{t}}{\rands{\delta}_{t}}\rands{\rho}_{t}^{*\prime}[\rand c_{t}^{*}-\rand c_{s}^{*}].
\]
For any $\epsilon>0$, we let $\rand v_{t}=\rand u(\rand c^{*}_t)-\min_{s\in\mathcal{T}}\rand u(\rand{c}^*_s)+\epsilon\:\as$, for all $t\in\mathcal{T}$.  The well-defined positive random vector $(\rand v_{t})_{t\in\mathcal{T}}$ together with $(\rands{\lambda}_t,\rands{\delta}_t)_{t\in\mathcal{T}}$	satisfies the inequalities in ({ii}).

Now, we want to prove that ({ii}) implies ({i}). The result follows from Theorem 24.8 in \citet{rockafellar1970convexanalysis}. For completeness of the proof we repeat the arguments of Theorem 24.8 in \citet{rockafellar1970convexanalysis}. For a finite $m\in\Natural$, let $\mathbf{t}=\{t_i\}_{i=1}^{m}$, $t_i\in\mathcal{T}$, be a finite set of indices such that for a fixed $\hat{t}\in\mathcal{T}$, $c_{t_1}^*=c_{\hat{t}}^*$. Let $\mathcal{I}$ be the collection of all such indices (i.e., $\mathbf{t}\in \mathcal{I}$). Define
\[
\rand{u}(c^{*})=\inf_{\mathbf{t}\in \mathcal{I}}\left\{\frac{\rands{\lambda}_{t_1}}{\rands{\delta}_{t_1}}\rands{\rho}_{t_1}^{*\prime}(\rand{c}^{*}_{t_2}-\rand{c}^{*}_{t_1})+\cdots+
\frac{\rands{\lambda}_{t_m}}{\rands{\delta}_{t_m}}\rands{\rho}_{t_m}^{*\prime}(c^{*}-\rand{c}^{*}_{t_m})
\right\}.
\]
With probability $1$, the random function $\rand{u}(\cdot)$ is well-defined, concave, locally nonsatiated, and continuous, since it is a pointwise minimum of a finite set of affine functions for every $m$. Moreover, the infimum in $\mathcal{I}$ is attained and achieved at a set of indices without repetitions (this is a consequence of (ii)). Indeed, under (ii), for any finite $m$, $\{t_{i}\}_{i=1}^m$ and $\rand{c}_s^*$, $s\in\mathcal{T}$, with probability $1$,
\begin{align*}
&\frac{\rands{\lambda}_{t_1}}{\rands{\delta}_{t_1}}\rands{\rho}_{t_1}^{*\prime}(\rand{c}^{*}_{t_2}-\rand{c}^{*}_{t_1})+\cdots+
\frac{\rands{\lambda}_{t_m}}{\rands{\delta}_{t_m}}\rands{\rho}_{t_m}^{*\prime}(\rand{c}_s^{*}-\rand{c}^{*}_{t_m})+\frac{\rands{\lambda}_{s}}{\rands{\delta}_{s}}\rands{\rho}_{s}^{*\prime}(\rand{c}_{t_1}^{*}-\rand{c}^{*}_{s})\geq\\
&\rand{v}_{t_{2}}-\rand{v}_{t_{1}}+\rand{v}_{t_{3}}-\rand{v}_{t_{2}}+\cdots+\rand{v}_{s}-\rand{v}_{t_m}+\rand{v}_{t_1}-\rand{v}_{s}=0.
\end{align*}
Thus,
\[
\rand{u}(\rand{c}^{*}_s)\geq \frac{\rands{\lambda}_{s}}{\rands{\delta}_{s}}\rands{\rho}_{s}^{*\prime}(\rand{c}_{s}^{*}-\rand{c}^{*}_{t_1})>-\infty
\]
with probability $1$. (In particular, $u(c_{\hat{t}}^*)=0$.)
\par
It is left to show that, with probability $1$, $\frac{\rands{\lambda}_{t}}{\rands{\delta}_{t}}\rands{\rho}_{t}^*\in\nabla\rand{u}(\rand{c}^*_t)$ for all $t\in\mathcal{T}$.
Fix some $t\in\mathcal{T}$ and $\delta>0$. By the definition of $\rand{u}(\cdot)$, there exists some $\{t_i\}_{i=1}^{m}$ such that, with probability 1, $\rand{u}(\rand{c}^*_t)+\delta>\frac{\rands{\lambda}_{t_1}}{\rands{\delta}_{t_1}}\rands{\rho}_{t_1}^{*\prime}(\rand{c}^{*}_{t_2}-\rand{c}^{*}_{t_1})+\cdots+
\frac{\rands{\lambda}_{t_m}}{\rands{\delta}_{t_m}}\rands{\rho}_{t_m}^{*\prime}(\rand{c}^{*}_t-\rand{c}^{*}_{t_m})\geq \rand{u}(\rand{c}^*_t)$. Again, by the definition of $\rand{u}(\cdot)$, for any $c^*$
\[
\frac{\rands{\lambda}_{t_1}}{\rands{\delta}_{t_1}}\rands{\rho}_{t_1}^{*\prime}(\rand{c}^{*}_{t_2}-\rand{c}^{*}_{t_1})+\cdots+
\frac{\rands{\lambda}_{t_m}}{\rands{\delta}_{t_m}}\rands{\rho}_{t_m}^{*\prime}(\rand{c}^{*}_t-\rand{c}^{*}_{t_m})+\frac{\rands{\lambda}_{t}}{\rands{\delta}_{t}}\rands{\rho}_{t}^{*\prime}({c}^{*}-\rand{c}^{*}_{t})\geq \rand{u}(c^*).
\]
Hence,
\[
\rand{u}(\rand{c}^*_t)+\delta+\frac{\rands{\lambda}_{t}}{\rands{\delta}_{t}}\rands{\rho}_{t}^{*\prime}({c}^{*}-\rand{c}^{*}_{t})>\rand{u}(c^*).
\]
Since the choice of $\delta$, $t$ and $c^*$ was arbitrary, $\frac{\rands{\lambda}_{t}}{\rands{\delta}_{t}}\rands{\rho}_{t}^*\in\nabla\rand{u}(\rand{c}^*_t)$ for all $t\in\mathcal{T}$.

\subsection{Proof of Proposition~\ref{prop:convexity}}

Take any $\theta_1\in\Theta_0$, $\theta_2\in\Theta_0$, and $\lambda\in[0,1]$ (if $\Theta_0$ is empty, then the conclusion of the proposition follows trivially). Since $\theta_i\in\Theta_0$, $i=1,2$, by Theorems~\ref{thm:DistributionalRP} and~\ref{thm:ELVISexponentialdiscounting} there exist $\{\mu_{i,k}\}_{k=1}^{\infty}$, $i=1,2$, such that 
\[
\lim_{k\to\infty}\norm{\Expt[\mu_{i,k}\times\pi_0]{g(\rand{x},\rand{\tilde{e}};\rho_{T+1},\theta_{i})}}=0
\]
for both $i=1,2$. Consider $\theta_{\lambda}=\lambda\theta_1+(1-\lambda)\theta_2$ and $\mu_{\lambda,k}=\lambda\mu_{1,k}+(1-\lambda)\mu_{2,k}$. Note that
\begin{align*}
  &\Expt[\mu_{\lambda,k}\times\pi_0]{g_C(\rand{x},\rand{c}_{T+1};\rho_{T+1},\theta_{\lambda})}=\Expt[\mu_{\lambda,k}\times\pi_0]{\tilde g(\rand{x},\rand{c}_{T+1};\rho_{T+1})}-\Expt[\pi_0]{A(\rand{x};\rho_{T+1})\tr\theta_{\lambda}}=\\
  &\lambda\Expt[\mu_{1,k}\times\pi_0]{\tilde g(\rand{x},\rand{c}_{T+1};\rho_{T+1})}-\lambda\Expt[\pi_0]{A(\rand{x};\rho_{T+1})\tr\theta_{1}}+\\
  &(1-\lambda)\Expt[\mu_{2,k}\times\pi_0]{\tilde g(\rand{x},\rand{c}_{T+1};\rho_{T+1})}-(1-\lambda)\Expt[\pi_0]{A(\rand{x};\rho_{T+1})\tr\theta_{2}}\\
  &\lambda\Expt[\mu_{1,k}\times\pi_0]{g_C(\rand{x},\rand{c}_{T+1};\rho_{T+1},\theta_{1})}+(1-\lambda)\Expt[\mu_{2,k}\times\pi_0]{g_C(\rand{x},\rand{c}_{T+1};\rho_{T+1},\theta_{2})}.
\end{align*}
Hence, by the triangular inequality,
\begin{align*}
  &\lim_{k\to\infty}\norm{\Expt[\mu_{\lambda,k}\times\pi_0]{g_C(\rand{x},\rand{c}_{T+1};\rho_{T+1},\theta_{\lambda})}}\leq
  \lambda\lim_{k\to\infty}\norm{\Expt[\mu_{1,k}\times\pi_0]{g_C(\rand{x},\rand{c}_{T+1};\rho_{T+1},\theta_{1})}}+\\
  &(1-\lambda)\lim_{k\to\infty}\norm{\Expt[\mu_{2,k}\times\pi_0]{g_C(\rand{x},\rand{c}_{T+1};\rho_{T+1},\theta_{2})}}=\lambda\cdot 0 + (1-\lambda)\cdot 0=0.
\end{align*}
Thus, since $g_{I,O}$ and $g_M$ do not depend on $\theta$, by the triangular inequality,
\begin{align*}
  &\lim_{k\to\infty}\norm{\Expt[\mu_{\lambda,k}\times\pi_0]{g(\rand{x},\rand{\tilde{e}};\rho_{T+1},\theta_{\lambda})}}\leq
  \lambda\lim_{k\to\infty}\norm{\Expt[\mu_{1,k}\times\pi_0]{g(\rand{x},\rand{\tilde{e}};\rho_{T+1},\theta_1)}}+\\
  &(1-\lambda)\lim_{k\to\infty}\norm{\Expt[\mu_{2,k}\times\pi_0]{g(\rand{x},\rand{\tilde{e}};\rho_{T+1},\theta_2)}}=0.
\end{align*}
The later means that $\theta_{\lambda}\in\Theta_0$. The fact that the choice of $\theta_1$, $\theta_2$, and $\lambda$ was arbitrary implies that $\Theta_0$ is convex.

\subsection{Proof of Theorem~\ref{thm:ELVISexponentialdiscounting}}\label{appen: proof theorem 3}

	The result is a direct application of Theorem~\ref{thm:DistributionalRP}, and Theorem 2.1 in \citet{schennach2014entropic}. For completeness of the proof we present Theorem 2.1 in \citet{schennach2014entropic} using our notation below.

\begin{thmnonumber}[Theorem 2.1, \citealp{schennach2014entropic}]
Assume that the marginal distribution of $\rand{x}$ is supported on some set $X\subseteq\Real^{d_x}$, while the distribution of $\rand e$ conditional on $\rand x=x$ is supported on or inside the set $E \subseteq \Real^{d_e}$ for any $x\in X$. Let $h$, $g$ and $\eta$ satisfy Definition~\ref{def:MEM}. Then
\begin{align*}
		\inf_{\mu\in\mathcal{P}_{E|X}}\norm{\Expt[\mu\times\pi_0]{g(\rand x,\rand e)}}=0 \iff \inf_{\gamma\in\mathbb{R}^{k+q}}\norm{\Expt[\pi_0]{h(\rand x;\gamma)}}=0,
\end{align*}
		where $\pi_0\in\boldsymbol{P}_{\mathcal{X}}$ is the observed distribution
		of $\rand x$.
\end{thmnonumber}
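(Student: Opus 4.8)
The statement to be proved is the ELVIS equivalence of \citet{schennach2014entropic}; granting it, Theorem~\ref{thm:ELVISexponentialdiscounting} follows at once, since Theorem~\ref{thm:DistributionalRP} already rewrites approximate $\s/\m$-rationalizability under Assumption~\ref{assu: general measurement error} as $\inf_{\mu\in\mathcal{P}_{E|X}}\norm{\Expt[\mu\times\pi_0]{g(\rand x,\rand e)}}=0$, and the equivalence then replaces the infimum over conditional laws $\mu$ by an infimum over the finite-dimensional $\gamma$ (after fixing any admissible $\eta$, which exists by the generic construction referenced around Definition~\ref{assu:eta}). So the plan is to prove the equivalence itself, handling the two implications separately.

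The direction $\inf_{\gamma}\norm{\Expt[\pi_0]{h(\rand x;\gamma)}}=0\Rightarrow\inf_{\mu}\norm{\Expt[\mu\times\pi_0]{g(\rand x,\rand e)}}=0$ is routine. For every $\gamma$ the exponential tilt $d\eta^{*}(\cdot\mid x;\gamma)$ displayed above is itself an element of $\mathcal{P}_{E|X}$, and by the definition of the maximum-entropy moment $\int_{E|X}g(x,e)\,d\eta^{*}(e\mid x;\gamma)=h(x;\gamma)$, so $\Expt[\pi_0]{h(\rand x;\gamma)}=\Expt[\mu_\gamma\times\pi_0]{g(\rand x,\rand e)}$ for the kernel $\mu_\gamma$ with this conditional law. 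Hence the attainable values of $\Expt[\pi_0]{h(\rand x;\cdot)}$ form a subset of those of $\Expt[\mu\times\pi_0]{g(\rand x,\rand e)}$ over $\mu\in\mathcal{P}_{E|X}$; the first infimum is therefore no larger than the second, and vanishing of the second forces vanishing of the first.

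The converse, $\inf_{\mu}\norm{\Expt[\mu\times\pi_0]{g(\rand x,\rand e)}}=0\Rightarrow\inf_{\gamma}\norm{\Expt[\pi_0]{h(\rand x;\gamma)}}=0$, is the substantive one, and I would prove it through maximum-entropy duality. Define the cumulant functional $\Lambda(\gamma)=\Expt[\pi_0]{\log\Expt[\eta]{\exp(\gamma\tr g(\rand x,\rand e))\mid\rand x}}$; by the regularity built into Definition~\ref{def:MEM} it is finite and twice continuously differentiable on $\Real^{k+q}$, it is convex by H\"older's inequality applied inside the conditional expectation, and differentiation under the integral gives $\nabla\Lambda(\gamma)=\Expt[\pi_0]{h(\rand x;\gamma)}$; thus $\inf_{\gamma}\norm{\Expt[\pi_0]{h(\rand x;\gamma)}}=0$ is equivalent to $0\in\overline{\nabla\Lambda(\Real^{k+q})}$. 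I would then identify the convex conjugate $\Lambda^{*}$ with the constrained minimum-relative-entropy functional $\Lambda^{*}(m)=\inf\bigl\{\Expt[\pi_0]{\operatorname{KL}\!\left(\mu(\cdot\mid\rand x)\,\|\,\eta(\cdot\mid\rand x)\right)}:\mu\in\mathcal{P}_{E|X},\ \Expt[\mu\times\pi_0]{g(\rand x,\rand e)}=m\bigr\}$, the Csisz\'ar $I$-projection identity underlying ELVIS. Given the hypothesis, a sequence $\mu_n$ with moment tending to $0$, suitably mollified toward $\eta$ so that the $\pi_0$-averaged relative entropy stays finite, shows $0\in\overline{\operatorname{dom}\Lambda^{*}}$; and by the convex-analytic fact that for a convex function finite and differentiable on all of $\Real^{k+q}$ the closure of $\nabla\Lambda(\Real^{k+q})$ contains the relative interior of $\operatorname{dom}\Lambda^{*}$ and hence equals $\overline{\operatorname{dom}\Lambda^{*}}$, one concludes $0\in\overline{\nabla\Lambda(\Real^{k+q})}$.

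The main obstacle is this converse, specifically the step turning ``some conditional law attains the zero average moment in the limit'' into ``some member of the exponential family does.'' The delicate ingredients are: the minimizing $\gamma$ need not exist and may diverge coordinatewise, so the statement has to be phrased with infima equal to zero and one must argue \emph{density} of $\nabla\Lambda(\Real^{k+q})$ in $\operatorname{dom}\Lambda^{*}$ rather than surjectivity; the moment constraint is not a single equation but a $\pi_0$-average of conditional moments, so the $I$-projection must be performed fibrewise in $x$ with one common multiplier $\gamma$, which is precisely what the twice-continuous-differentiability in $\gamma$ required by Definition~\ref{def:MEM} (together with the support conditions later imposed on $\eta$) secures; and one needs enough integrability of $\exp(\gamma\tr g)$ under $\eta$ to justify both the conjugacy argument and the differentiation under the integral. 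These are the technical contributions of \citet{schennach2014entropic}, so in practice I would, as the paper does, invoke Theorem~2.1 of \citet{schennach2014entropic} as a black box after verifying that the present $g$, $h$, and $\eta$ satisfy Definition~\ref{def:MEM}.
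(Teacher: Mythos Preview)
Your proposal is correct and, in fact, goes further than the paper. The paper does not prove this statement at all: it is quoted verbatim as Theorem~2.1 of \citet{schennach2014entropic}, and the proof of Theorem~\ref{thm:ELVISexponentialdiscounting} consists of the single sentence that it ``is a direct application of Theorem~\ref{thm:DistributionalRP}, and Theorem~2.1 in \citet{schennach2014entropic}.'' Your final paragraph already anticipates this.

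What you add beyond the paper is a sketch of Schennach's own argument: the easy direction via the observation that each exponential tilt $\eta^{*}(\cdot\mid x;\gamma)$ is itself a member of $\mathcal{P}_{E|X}$, and the substantive direction via the convex cumulant functional $\Lambda(\gamma)$, its gradient identity $\nabla\Lambda(\gamma)=\Expt[\pi_0]{h(\rand x;\gamma)}$, and the $I$-projection duality $\Lambda^{*}(m)=\inf\{\text{averaged KL}:\ \Expt[\mu\times\pi_0]{g}=m\}$ combined with the convex-analytic fact that $\overline{\nabla\Lambda(\Real^{k+q})}=\overline{\operatorname{dom}\Lambda^{*}}$. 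This is indeed the architecture of Schennach's Lemma~A.1 and proof of Theorem~2.1, so your sketch is accurate; the paper simply chooses not to reproduce it.
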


\subsection{Proof of Theorem~\ref{thm:semianal}}\label{appen: proof theorem 4}
Recall that the first $k=\abs{\mathcal{T}}^2-\abs{\mathcal{T}}$ moments correspond to the inequality conditions, and the last $q$ moments correspond to the measurement error centering conditions. Let $\gamma_I=(\gamma_j)_{j=1,\dots,k}$, $g_I=(g_j)_{j=1,\dots,k}$, $\gamma_M=(\gamma_j)_{j=k+1,\dots,k+q}$, and $g_M=(g_j)_{j=k+1,\dots,k+q}$ be sub-vectors of $\gamma$ and $g$ that correspond to inequality and the measurement error centering conditions, respectively.

Step 1. Take a sequence $\{\gamma_{I,l}\}_{l=1}^{+\infty}$ such that every component of $\gamma_{I,l}$ diverges to $+\infty$. Note that since $g_I$ takes values in $\{-1,0\}^{k}$,
\[
\sup_{x,e}\abs{\exp(\gamma_{I,l}\tr g_{I}(x,e))-\Char{g_{I}(x,e)=0}}\leq \exp(-\min_{i=1,\dots,k}\gamma_{I,l,i})\to_{l\to+\infty}0,
\]
where $\gamma_{I,l,i}$ is the $i$-th component of $\gamma_{I,l}$.
Hence, for any function $f\in L^1(\eta(\cdot|x))$
\begin{align*}
&\norm{\int f(e)\exp(\gamma_{I,l}\tr g_{I}(x,e))d\eta(e|x)-\int f(e)\Char{g_{I}(x,e)=0}d\eta(e|x)}\leq\\
&\leq \exp(-\min_{i=1,\dots,k}\gamma_{I,l,i})\int{\norm{f(e)}d\eta(e|x)}\to_{l\to+\infty}0.
\end{align*}
Hence, the sequence of measures $\exp(\gamma_{I,l}\tr g_{I}(x,\cdot))d\eta(\cdot|x)$ converges to the measure 
\[
\Char{g_{I}(x,\cdot)=0}d\eta(\cdot|x)
\]
in total variation. The later measure is well-defined and nontrivial since we assume that $\tilde E|X=\{e:\Char{g_{I}(x,e)=0}\}$ has a positive measure under $\eta(\cdot|x)$. Let $d\tilde{\eta}(\cdot|x)$ denote $\Char{g_{I}(x,\cdot)=0}d\eta(\cdot|x)$.
\par
Step 2. Consider the moment conditions under $d\tilde{\eta}(\cdot|x)$
\[
\tilde h_M(x;\gamma)=\dfrac{\int_{e\in{E}|{X}}g_M(x,e)\exp(\gamma\tr g_M(x,e))d\tilde\eta(e|x)}{\int_{e\in{E}|{X}}\exp(\gamma\tr g_M(x,e))d\tilde\eta(e|x)}.
\]
Definition~\ref{assu:eta}.(iii) together with Assumption~\ref{assu:bounded support} and Step 1 imply that for any compact set $\Gamma\in\Real^{q}$, uniformly in $\gamma_M\in \Gamma$
\begin{align*}
\norm{\Expt[\pi_0]{h(\rand x;(\gamma_{I,l}\tr,\gamma_M\tr)\tr)}}=\norm{\Expt[\pi_0]{\tilde h_M(\rand{x};\gamma_M)}}+o(1).
\end{align*}
Thus, by continuity of $h_M$ in $\gamma_M$, when $l$ goes to infinity, we can work with the reduced optimization problem:
\begin{equation}\label{eq1}
\inf_{\gamma\in\Real^{q}}\norm{\Expt[\pi_0]{\tilde h_M(\rand{x};\gamma)}}.
\end{equation}
\par
Step 3. Note that (\ref{eq1}) is equivalent to the optimization problem in Theorem~\ref{thm:ELVISexponentialdiscounting}. Hence, infimum in (\ref{eq1}) is equal to $0$ if and only if the data is approximately consistent with model $\m$.
\par
We assumed that every component of $g_M$ takes both positive and negative values on some nonzero measure subsets of $\tilde E|X$ (Assumption~\ref{assu:nondegenerate w}). Hence, following the proof of Theorem 2.1 and Lemma A.1 in \citet{schennach2014entropic}, we can conclude that if infimum in (\ref{eq1}) is equal to $0$, then it is achieved at some finite and unique $\gamma_{0,M}$. Otherwise, $\norm{\gamma_M}$ diverges to infinity.

\subsection{Proof of Theorem~\ref{thm:test}}\label{appen: proof theorem 5}

The result is a direct application of Theorem F.1 in \citet{schennach2014entropic}. For completeness of the proof we present the version of it that is applicable to our setting below.
\begin{thmnonumber}[Theorem F.1, \citealp{schennach2014entropic}] Let data be i.i.d.. If  (i) the set
	\[
	\Gamma=\{\gamma\in\Real^{q}:\Exp{\norm{\tilde h_M(\rand{x},\gamma)}}\leq C\}
	\]
	is nonempty for some $C<\infty$; (ii) $\Exp{\norm{\tilde h_M(\rand{x},\gamma)}^2}<\infty$ for all $\gamma\in\Gamma$, then
	\[
	\lim_{n\to\infty}\Prob{\mathrm{TS}_{n}>\chi^2_{q,\alpha}}\leq \alpha.
	\]
\end{thmnonumber}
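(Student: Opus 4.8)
The plan is to derive both halves of Theorem~\ref{thm:test} from Theorem~\ref{thm:semianal} together with the asymptotic statistics of \citet{schennach2014entropic} (Theorem F.1, quoted above). For the size statement I would first invoke Theorem~\ref{thm:semianal}: under the null hypothesis that the data is approximately consistent with $\s/\m$-rationalizability (and Assumptions~\ref{assu: general measurement error},~\ref{assu:nondegenerate w},~\ref{assu:bounded support}), the reduced moment condition $\min_{\gamma\in\Real^{q}}\norm{\Expt[\pi_0]{\tilde h_M(\rand x;\gamma)}}=0$ holds, with minimizer the unique finite $\gamma_{0,M}$. It then remains only to verify the two regularity hypotheses of Theorem F.1, namely that $\Gamma=\{\gamma\in\Real^{q}:\Exp{\norm{\tilde h_M(\rand x,\gamma)}}\le C\}$ is nonempty for some finite $C$ and that $\Exp{\norm{\tilde h_M(\rand x,\gamma)}^{2}}<\infty$ on $\Gamma$. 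Both follow from Assumption~\ref{assu:bounded support}, which makes $g_M(\rand x,\cdot)$ and hence $\tilde h_M(\rand x,\gamma)$ uniformly bounded, together with Definition~\ref{assu:eta} (in particular part~(iii) and the positive-measure conditions), which guarantees that the normalizing integral defining $\tilde h_M$ is finite and bounded away from zero so that $\tilde h_M$ is well defined on all of $\Real^{q}$; Assumption~\ref{ass: iid} supplies the i.i.d.\ hypothesis. Applying Theorem F.1 then yields $\lim_{n\to\infty}\Prob{\mathrm{TS}_{n}>\chi_{q,1-\alpha}^{2}}\le\alpha$.

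For the power statement I would show $\mathrm{TS}_{n}\to+\infty$ in probability under the alternative. Here Theorem~\ref{thm:semianal} gives $c:=\inf_{\gamma\in\Real^{q}}\norm{\Expt[\pi_0]{\tilde h_M(\rand x;\gamma)}}>0$. If the maximal eigenvalue of $\mathbb{V}[\tilde h_M(\rand x,\gamma)]$ is bounded above by $\bar\lambda<\infty$ uniformly in $\gamma$, then for every $\gamma$, $\Exp{\tilde h_M(\rand x,\gamma)}\tr\mathbb{V}[\tilde h_M(\rand x,\gamma)]^{-}\Exp{\tilde h_M(\rand x,\gamma)}\ge\norm{\Exp{\tilde h_M(\rand x,\gamma)}}^{2}/\bar\lambda\ge c^{2}/\bar\lambda>0$, so the population objective is bounded below by a strictly positive constant. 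The uniform lower bound on the minimal eigenvalue of $\mathbb{V}[\tilde h_M(\rand x,\gamma)]$ ensures that $\hat{\tilde\Omega}(\gamma)$ is, with probability tending to one, invertible with $\hat{\tilde\Omega}^{-}(\gamma)$ converging to $\mathbb{V}[\tilde h_M(\rand x,\gamma)]^{-1}$; combined with a uniform law of large numbers for $\hat{\tilde h}_M(\gamma)$ and $\hat{\tilde\Omega}(\gamma)$—justified by the uniform boundedness from Assumption~\ref{assu:bounded support}—this gives $\inf_{\gamma}\hat{\tilde h}_M(\gamma)\tr\hat{\tilde\Omega}^{-}(\gamma)\hat{\tilde h}_M(\gamma)\to_{p}\inf_{\gamma}\Exp{\tilde h_M(\rand x,\gamma)}\tr\mathbb{V}[\tilde h_M(\rand x,\gamma)]^{-1}\Exp{\tilde h_M(\rand x,\gamma)}\ge c^{2}/\bar\lambda$. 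Since $\mathrm{TS}_{n}$ equals $n$ times this sample infimum, $\mathrm{TS}_{n}\to+\infty$ in probability, so $\Prob{\mathrm{TS}_{n}>\chi_{q,1-\alpha}^{2}}\to1$.

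The main obstacle is the uniform-in-$\gamma$ control in the power argument, since the nuisance space $\Real^{q}$ is noncompact and a naive compactness argument is unavailable. I would handle this exactly as in the proof of Theorem~\ref{thm:semianal} and Lemma A.1 of \citet{schennach2014entropic}: as $\norm{\gamma}\to\infty$, the exponential tilting in $\tilde h_M(\rand x,\gamma)$ pushes the conditional law toward the faces of the convex hull of the support of $g_M(\rand x,\cdot)$ restricted to $\tilde E|X$, which by Assumption~\ref{assu:nondegenerate w} keeps $\norm{\Exp{\tilde h_M(\rand x,\gamma)}}$—and hence the quadratic objective—bounded away from zero; consequently the infimum over $\Real^{q}$ can be replaced by an infimum over a large compact ball, on which the uniform LLN and the continuity of the generalized inverse apply. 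With that reduction in place the size half is comparatively routine once the two integrability conditions of Theorem F.1 are checked, so the bulk of the work lies in the uniform asymptotics behind the power claim.
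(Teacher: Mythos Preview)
Your proposal is correct and follows essentially the same route as the paper. The paper's argument is very terse: for size it simply observes that Assumption~\ref{assu:bounded support} together with Definition~\ref{assu:eta}(iii) yields an almost-sure bound $\norm{\tilde h_M(\rand x,\gamma)}^{2}\le C_{1}(\gamma)$, from which conditions (i) and (ii) of the quoted Theorem~F.1 follow by taking $C=\sup_{\gamma\in\Gamma}C_{1}(\gamma)$ over any compact $\Gamma$; for power it says only that under the alternative $\norm{\hat{\tilde h}_M(\gamma)}$ is bounded away from zero or diverges, and that the eigenvalue bounds on $\tilde\Omega(\gamma)$ then force $\mathrm{TS}_n\to\infty$. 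Your write-up expands this---in particular you make explicit the reduction of the infimum over $\Real^{q}$ to a compact set via the exponential-tilting argument of Lemma~A.1 in \citet{schennach2014entropic}, and you spell out the uniform LLN step---but these are elaborations of the same skeleton rather than a different strategy.
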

An i.i.d. sample is assumed. To show the validity of conditions (i) and (ii) note that since $\rand{x}$ has a bounded support (by Assumption~\ref{assu:bounded support}) and $\tilde\eta$ satisfies conditions of Definition~\ref{assu:eta}.(iii), for any finite $\gamma$ there exist finite positive constant $C_1(\gamma)$ such that almost surely in $\rand{x}$
\[
\norm{\tilde h_{M}(\rand{x},\gamma)}^2\leq C_1(\gamma).
\]
Hence, for any nonempty compact set $\Gamma$ one can take $C=\sup_{\gamma\in\Gamma}C_1(\gamma)$. Together with Assumption~\ref{assu:bounded support}, the later implies condition (ii). Similarly, one can use $C$ to bound $\Exp{\norm{\tilde h_M(\rand{x},\gamma)}}$.
\par
Under the alternative hypothesis, $\norm{\hat{\tilde{h}}_M(\gamma)}$ either converges to a positive constant or diverges to infinity. Thus, since eigenvalues of $\tilde\Omega(\gamma)$ are bounded away from zero and are bounded from above the test is consistent.

\subsection{Proof of Theorem~\ref{thm:CollectiveSufficient}}\label{appen: proof theorem 8}

By Theorem~\ref{thm:CollectiveDeterministicCharacterization} we have that the following inequalities hold almost surely:

\[
\rand v_{t,A}-\rand v_{s,A}\geq\frac{1}{\rand d_{A}^{t}}[\rands{\rho}_{t,I}\tr(\rand c_{t,I}^{*}-\rand c_{t,B}^{*}-\rand c_{s,I}^{*}+\rand c_{s,B}^{*})+\frac{\rand p_{t,H}-\rand p_{t,B}}{\prod_{j=1}^{t}(1+\rand r_{j})}'(\rand c_{t,H}^{*}-\rand c_{s,H}^{*})\quad\forall t,s\in\mathcal{T},
\]

\[
\rand v_{t,B}-\rand v_{s,B}\geq\frac{1}{\rand d_{B}^{t}}[\rands{\rho}_{t,I}\tr(\rand c_{t,B}^{*}-\rand c_{s,B}^{*})+\frac{\rand p_{t,B}}{\prod_{j=1}^{t}(1+\rand r_{j})}'(\rand c_{t,H}^{*}-\rand c_{s,H}^{*})\quad\forall t,s\in\mathcal{T}.
\]

Then we multiply the first inequality by $\rand d_{A}^{t}$, this random variable is positive almost surely, so it does not alter the inequalities. We do the same for the second inequality, and multiply it by $\rand d_{B}^{t}$. Then we add-up the two inequalities, to obtain:
\[
\rand d_{A}^{t}(\rand v_{t,A}-\rand v_{s,A})+\rand d_{B}^{t}(\rand v_{t,B}-\rand v_{s,B})\geq\rands{\rho}_{t,I}\tr(\rand c_{t}^{*}-\rand c_{s}^{*})\quad\forall t,s\in\mathcal{T}.
\]

\section{Monte Carlo Experiments}\label{appen:monte carlo}
In this section we study the behavior of our test in two Monte Carlo experiments. In the first one, we provide evidence for overrejection of the exponential discounting model by the deterministic test of \citet{browning1989anonparametric} and correct coverage by our test. In the second experiment, we provide evidence for the power (consistency) of our test against some fixed alternatives. Finally, we conduct some robustness checks of our Markov Chain Monte Carlo (MCMC) integration.

\subsection{Overrejection of Exponential Discounting for Browning's Deterministic Test}\label{appen:monte carlo,browning}

The objective of the Monte Carlo simulation exercise is to test the performance of the methodological procedure developed in this paper against the deterministic benchmark. We are going to provide evidence that a data set generated by a random exponential discounter, when contaminated with measurement error, will be erroneously rejected by deterministic methodologies at the individual level for a sizable fraction of the sample \citep{browning1989anonparametric,blownonparametric2017}. However, our test will not reject it.
\par
We choose our simulation configuration setup to match those of the household characteristics in our application. The Monte Carlo exercise will deal with a moderate size data set of $n=2000$ individuals to show that it works in a data set of the roughly the same size as in our application. The time period is $\mathcal{T}=\{0,1,2,3\}$, and we consider $L=17$ goods. We use the same discounted prices $\{\rho_{i,t}\}_{i=1}^{n}$ as the ones given in \citet{adams2014consume}.\footnote{We use the observed price matrix and sample from it uniformly with repetition at each Monte Carlo experiment.} These are the prices faced by the single-individual/couples households in our application. We consider consumers with the constant elasticity of substitution (CES) instantaneous utility 
\[
u(c_{t})=\sum_{l=1}^{L}\frac{c_{t,l}^{1-\sigma}}{1-\sigma},
\]
where $\rands{\sigma}\sim U[1/15,100]$ is heterogeneous across individuals.
\par
Following \citet{browning1989anonparametric}, the true consumption rule for each consumer and each realized $d$ is given by
\begin{equation*}
c_{t,l}^{*}=\left(\frac{1}{d^{t}}\rho_{t,l}\right)^{-1/\sigma},
\end{equation*}
for all $l=1,\cdots,L$ and $t\in\mathcal{T}$. For the discount factor we considered two different data generating processes (DGPs). For the first DGP $\rand d\sim U[0.8,1]$ (DGP1) and for the second one $\rand{d}=1 \:\as$ (DGP2).  
\par
We perturb the generated consumption with $\rands{\epsilon}_{t,l}\sim \mathrm{i.i.d.}\:U[0.97,1.03]$, which implies that $\Exp{\rands{\epsilon}_{t,l}}=1$. That is, observed consumption is equal to true consumption times the multiplicative perturbation $\rand{c}_{t,l}=\rand{c}_{t,l}^{*}\rands{\epsilon}_{t,l}$.
We define measurement error in consumption as $\rand{w}^c_{t,l}=\rand{c}_{t,l}-\rand{c}_{t,l}^{*}$ and fix $\rand{w}^p_t=0\:\as$. Note that the implied random measurement error $\rand w^c_{t,l}$ is mean-zero by construction and satisfies Assumption~\ref{assu: general measurement error}.\ref{assu:(Measurement-error)Meanbudgetneutrality}. The random vector perturbations $\rands{\epsilon}_{t,l}$ captures incorrect consumption reporting or recording, and can be as high as $1.03$ times the true consumption. This means that relative measurement error is around $3$ percent. This procedure produces a data set $(\rands{\rho}_{t,i},\rand{c}_{t,i})_{i=1,t\in\mathcal{T}}^{i=n}$.
\par
We replicate the experiment $m=1000$ times for both DGPs. The results are presented in Table~\ref{table:browning}. For the deterministic test in \citet{browning1989anonparametric} we use a grid search over $d$ on $[0.1,1]$ with a grid step $0.05$. Searching over a smaller set (e.g., $[0.8,1]$) will only weakly increase the rejection rate of the deterministic test. For DGP1 the deterministic test rejects the exponential discounting model in $53$ percent of the cases on average across experiments. For DGP 2 the average rejection rate across experiments is $79.4$ percent. 
\par
We use our methodology to test for $\s/\ed$-rationality for both DGPs assuming that the support of $\rand{d}$ is known. Assuming bigger support for $\rand{d}$ will only weakly decrease the rejection rate of our test. In other words, the design of our experiment favors the deterministic test. Nevertheless, our methodology cannot reject the correct null hypothesis that all households are consistent with $\s/\ed$-rationality at the $5$ percent significance level. The rejection rate for each DGP1 and DGP2 is $1.2$ percent. As expected, both rejection rates are less than $5$ percent. 
\begin{longtable}{lcccc}
\bottomrule
\midrule\midrule
\endfoot
\caption{Rejection Rates: $\ed$-rationalizability. Number of replications $m=1000$. \label{table:browning}}\\
\toprule
&\multicolumn{2}{c}{rejection rate ($\%$)}\\
&Deterministic test &Our methodology \\
\midrule
DGP1 & $53$&$1.2$\\
DGP2& $79.4$&$1.2$
\end{longtable}

\subsection{Power Analysis}\label{appen:monte carlo,power}
We choose our simulation configuration setup to match Section~\ref{appen:monte carlo,browning}. However, the consumer units are assumed to be couples whose behavior is described by the collective model with exponential discounting described in \citet{adams2014consume}. The individuals in the household are indexed by $A$ and $B$. The random discount factors are $\rand{d}_A$ and $\rand{d}_B$. Individuals face different prices for good $l$ at time period $t$ given their bargaining power $\rands{\mu}_{t,l}$. We observe the sum of these two prices $\rands{\rho}_{t,l}$. That is, $\rands{\rho}_{t,l,A}=\rands{\mu}_{t,l}\rands{\rho}_{t,l}$, and $\rands{\rho}_{t,l,B}=(1-\rands{\mu}_{t,l})\rands{\rho}_{t,l}$. Note that the bargaining power is good and time specific. The random price vectors $\rands{\rho}_{t}$ were drawn from the data set in \citet{adams2014consume} as described in Section~\ref{appen:monte carlo,browning}. 
\par
Similar to the experimental design in Section~\ref{appen:monte carlo,browning}, the consumption rule for each individual and realized $d_j$, $j\in\{A,B\}$ is given by
\begin{equation*}
c_{t,l,j}^{*}=\left(\frac{1}{d^{t}_j}\rho_{t,l,j}\right)^{-1/\sigma_{l,j}},\quad l=1,\cdots,L;\: t\in\mathcal{T},
\end{equation*}
where $\sigma_{l,j}$ is a realization of $\rands{\sigma}_{l,j}\sim\mathrm{i.i.d.}\: U[1/15,100]$, $j\in\{A,B\}$. Then the household consumption data is the sum of individual consumption: $\rand{c}^*_t=\rand{c}^*_A+\rand{c}^*_B\:\as$. The generating process for measurement error coincides with the one presented in Section~\ref{appen:monte carlo,browning}. As a result we generate $(\rands{\rho}_t,\rand{c}_t)_{t\in\mathcal{T}}$ and test whether this data is consistent with $\s/\ed$-rationality.
\par
We consider two different DGPs for the distribution of $\rand{d}_j$, $j\in\{A,B\}$, and $\rands{\mu}_{t,l}$.
\par
\noindent \emph{DGP3}. $\rand{d}_A\sim U[0.1,1]$, $\rand{d}_B\sim U[0.99,1]$, and $\rands{\mu}_{t,l}=1/2\:\as$. Under this DGP household members face the same prices but may have different discount factors.
\par
\noindent \emph{DGP 4}. $\rand{d}_A\sim U[0.1,1]$, $\rand{d}_B\sim U[0.99,1]$, and $\rands{\mu}_{t,l}\sim\mathrm{i.i.d.}\: U[1/3,2/3]$. Under this DGP household members face different prices and may have different discount factors.
\par
We conducted the experiments with each DGP $m=1000$ times for two sample sizes, $n=2000$ and $n=3000$. The supports of the discount factors were assumed unknown and contained inside $[0.1,1]$ interval.  The results are presented in Table~\ref{table:powercouplesresults}. 
\begin{longtable}{lcccc}
\bottomrule
\midrule\midrule
\endfoot
\caption{Rejection Rates: Collective Model. Number of replications $m=1000$. \label{table:powercouplesresults}}\\
\toprule
& prices &discount factors &\multicolumn{2}{c}{rejection rate ($\%$)} \\
&&&$n=2000$&$n=3000$\\
\midrule
DGP3 & same& different  & $32$&$69.1$\\
DGP4& different& different  & $72$&$96.9$
\end{longtable}

For DGP3 with equal bargaining power (same prices) and heterogeneous discount factors, the rejection rate is $32$ percent for the sample size of $n=2000$ and increases to $69.1$ percent for $n=3000$. For DGP4 with asymmetric bargaining power (different prices) and heterogeneous discount factors the rejection rate is even bigger and is equal to $72$ and $96.9$ percent for $n=2000$ and $n=3000$, respectively.
\par
We highlight that DGP3 is compatible with hyperbolic discounting. It is easy to see that consumption behavior of the collective model with symmetric bargaining (i.e., same prices) satisfies the Afriat inequalities for hyperbolic discounting in \citet{blow2013nevermind}. 

\subsection{Robustness of MCMC integration.}
Our testing procedure requires some user-specified parameters: the distribution $\eta$ and the length of the MCMC chain. As mentioned in Section~\ref{sec:elvis}, the choice of $\eta$ has no effect on the value of the test statistic both asymptotically and in finite samples. In other words the difference in values of the test statistics computed using two different $\eta$'s can only be driven by numerical precision of the MCMC integration step and the optimization algorithm used. Thus, we focus on the performance of procedure for different MCMC chain length.
\par
The results in Sections~\ref{appen:monte carlo,browning} and~\ref{appen:monte carlo,power} were obtained using the chain length equal to $\mathrm{cl}=10000$. We decrease the chain length to $\mathrm{cl}=5000$ and for the sample size $n=2000$ we additionally experiment with DGP2, DGP3, and DGP4. The remaining elements of the simulations remain the same as before.  
\par
Table~\ref{table:powercouplesresultschainlength} shows that halving the chain length from $1000$ to $5000$ changes very little the rejection rates of the three DGPs of interest. This is of course desirable as lack of robustness would suggest that the MCMC chain has not converged. This provides reassurance that our choice of chain length $10000$ is appropriate.
\begin{longtable}{lcc}
\bottomrule
\midrule\midrule
\endfoot
\caption{Rejection Rates: ED and Collective Models. Sample size $n=2000$. \label{table:powercouplesresultschainlength}}\\
\toprule
&\multicolumn{2}{c}{rejection rate ($\%$)} \\
&$\mathrm{cl}=10000$ & $\mathrm{cl}=5000$ \\
\midrule
DGP2&  $1.2$ & $2.5$\\
DGP3 &  $32$ & $34.9$\\
DGP4&  $72$ & $71.8$
\end{longtable}

\section{Computational Aspects}\label{appen: coputational aspects}
In this appendix we discuss the computational aspects of our procedure. In Appendix~\ref{appen: pseudo-algorithm} we provide a general pseudo-algorithm to implement our procedure. Appendix~\ref{appen: integration} describes the MCMC procedure used for latent variable integration. Appendix~\ref{appen: hitandrun} provides a description of the ``hit-and-run'' algorithm we used in the construction of the MCMC chain. We provide the specification for $\eta$ and the optimization routines used in our applications and simulations in Appendix~\ref{appen: inputs}. 

\subsection{Pseudo-Algorithm}\label{appen: pseudo-algorithm}
This pseudo-algorithm is based on Schennach's algorithm provided in GAUSS as a supplement to \citet{schennach2014entropic}. The actual implementation of the algorithm has been vectorized and parallelized. 

\begin{algorithmic}[1]

\STATE  \textbf{Step $0$} (Setting parameters)
\begin{itemize}
\item Fix $\mathcal{T}=\{0,\cdots,T\}$, consumer experiments, and $\mathcal{L}=\{1,\cdots,L\}$, set of goods.
\item Fix $g_I$ and $g_M$.
\item Fix $\Lambda$, the support of $(\rands{\lambda}_{t})_{t\in\mathcal{T}}$, and $\Delta$, the support of $(\rands{\delta}_{t})_{t\in\mathcal{T}}$.
\item Fix $\eta\in\mathcal{P}_{E|X}$ (See Appendix~\ref{appen: inputs} for details)
\item Fix $x=(x_i)_{i=1,\dots,n}$, where $x_i=(\rho_{i,t},c_{i,t})_{t\in\mathcal{T}}$ is $i$-th observation and $n$ is the sample size.
\end{itemize}
\STATE  \textbf{end Step $0$.}
\STATE  \textbf{Step $1$}(Integration: Evaluation of the objective function at a given $\gamma\in\Real^{\abs{\mathcal{T}}}$)
\begin{itemize}
\item Set $i=1$.
\end{itemize}
\STATE  \textbf{While $i\leq n$}
\begin{itemize}
\item Define the measure 
\[
\tilde{\eta}(\cdot|x_{i})=\eta(\cdot|x_{i})\Char{g_I(x_i,\cdot)=0}.
\]
\item Integrate latent variables using $\tilde{\eta}(\cdot|x_{i})$ to obtain $\tilde{h}_{M}(x_i,\gamma)$ (See Appendix~\ref{appen: hitandrun} for implementation details).
\item Set $i=i+1$.
\end{itemize}
\STATE \textbf{end While. }
\begin{itemize}
\item Compute 
\[
\hat{\tilde{h}}_{M}(\gamma)=\frac{1}{n}\sum_{i=1}^{n}\tilde{h}_{M}(x_i,\gamma)
\]
and  
\[
\hat{\tilde{\Omega}}(\gamma)=\frac{1}{n}\sum_{i=1}^{n}\tilde{h}_{M}(x_i,\gamma)\tilde{h}_{M}(x_i,\gamma)\tr-\hat{\tilde{h}}_{M}(\gamma)\hat{\tilde{h}}_{M}(\gamma)\tr.
\]
\item Compute $\mathrm{ObjFun}(\gamma)=n\hat{\tilde{h}}_{M}(\gamma)\tr\hat{\tilde{\Omega}}(\gamma)^{-}\hat{\tilde{h}}_{M}(\gamma)$.
\end{itemize}
\STATE  \textbf{end Step $1$.}

\STATE \textbf{Step $3$ }(Optimization Step)
\begin{itemize}
\item Compute $\mathrm{TS}_{n}=\min_{\gamma}{\mathrm{ObjFun}(\gamma)}$.
\end{itemize}
\STATE  \textbf{end Step $3$.}
\end{algorithmic}

\subsection{Latent Variable Integration}\label{appen: integration}
Evaluation of the objective function requires integrating latent variables. We use MCMC methods. For completeness we provide the algorithm for MCMC integration to get $\tilde{h}_M(x_i,\gamma)$.
\begin{algorithmic}[1]
\STATE  \textbf{Inputs}
\begin{itemize}
    \item Fix $\mathrm{cl}$ -- total MCMC chain length; $\mathrm{nburn}$ -- number of ``burned'' chain elements;
    \item Fix  $\eta$, $\gamma$, $x_i$, and the first element of the chain $e_{-\mathrm{nburn}}$ that satisfies the constraints.
    \item Set $r=-\mathrm{nburn}+1$ and $\tilde{h}(x_i,\gamma)_{M}(\gamma)=0$.
\end{itemize}
\STATE  \textbf{While $r\leq \mathrm{nsims}$}
\begin{itemize}
\item Draw $e_{\mathrm{jump}}=((v_{t})_{t\in\mathcal{T}},(\lambda_{t})_{t\in\mathcal{T}},(\delta_{t})_{t\in\mathcal{T}},w^{c},w^{\rho})$
proportional to \newline $\tilde{\eta}(\cdot|x_{i})=\eta(\cdot|x_{i})\Char{g_I(x_i,\cdot)=0}$. 
\item Draw $\alpha$ from $U[0,1]$.
\item Set $e_{r}$ equal to $e_{\mathrm{jump}}$ if $[g_{M}(x_i,e_{\mathrm{jump}})-g_{M}(x_i,e_{r-1})]\tr\gamma>\log (\rands{\alpha})$ and to $e_{r}$ otherwise. 
\STATE \textbf{if $r>0$}
\item Compute $\tilde{h}_{M}(x_i,\gamma)=\tilde{h}_{M}(x_i,\gamma)+g_{M}(x_i,e_{r}))/\mathrm{cl}$
\STATE \textbf{end if}
\item Set $r=r+1$
\end{itemize}
\STATE \textbf{end While. }
\end{algorithmic}

To compute the chain, one always can use ``rejection sampling'': at every step check whether a candidate element of the chain satisfies the inequalities (support constraints). Since our constraints have a simple form, we propose to use a version of the ``hit-and-run'' algorithm that we describe below. 

\subsection{``Hit-and-run'' Algorithm}\label{appen: hitandrun}
Since we use the algorithm presented below in our application with survey data and for concreteness we focus on $\s/\ed$-rationalizability with consumption measurement error. Note that instead of working with measurement errors in consumption, we can equivalently work with true unobserved consumption $c^*_t$. Thus, the latent variables, $e=(d,(c_t^{*\prime},v_t)_{t\in\mathcal{T}})$, have to satisfy the following set of constraints:
\begin{align*}
    &v_t-v_s\geq \dfrac{\rho\tr_t}{d^t}(c^*_t-c^*_s),\\
    &v_t,\:c^*_t\geq0,\\
    &1\geq d\geq \theta_0,
\end{align*}
for all $s,t\in\mathcal{T}$.
\par
The idea behind the ``hit-and-run'' algorithm is (i) to pick some initial point $e^0$ that satisfies the support constraints;\footnote{In our application and simulations we computed initial point by minimizing the norm of $g_M$ subject to the Afriat and sign constraints per observation.} (ii) to construct a candidate point by moving along a random direction within the constrained set on a randomly chosen distance; (iii) to use a user-specified Monte-Carlo acceptance rule to assign to $e^1$ either the initial point $e^0$ or the candidate point; (iv) to apply steps (ii) and (iii) to $e^1$ to construct $e^2$; (v) to repeat until the length of the chain reaches user chosen number.
\par
Take some arbitrary $e^r$ that satisfies the constraints. Let $\xi$ be a direction vector (not necessary unit vector). Thus, the candidate vector is
\[
e^{r+1}=e^{r}+\alpha\xi,
\]
where $\alpha\geq 0$ determines the scale of the perturbation $\alpha\xi$.
\par
\noindent\emph{Sign Constraints.} We start with sign constraints on consumption: $c^*_{t,l}\geq 0$ for all $l$ and $t$. Let $K_c$ be a set of indexes that correspond to $c^*_{t,l}$ in $e^r$. Hence, the constraints take the form
\[
\alpha\xi_{k}\geq -e^{r}_k,\quad\forall k\in K_{c} 
\]
Define $K_{+}=\{k\in K_{c}\::\:\xi_k>0\}$, $K_{-}=\{k\in K_{c}\::\:\xi_k<0\}$, and $K_{0}=\{k\in K_{c}\::\:\xi_k=0\}$. Then, the sign constraints are
\begin{align*}
    \alpha\geq-\dfrac{e^{r}_k}{\xi_k}, \quad \forall k\in K_{+},\\
    \alpha\leq-\dfrac{e^{r}_k}{\xi_k}, \quad \forall k\in K_{-}.
\end{align*}
Note that the constraints that correspond to $k\in K_{0}$ are always satisfied since $e^{r}_k\geq 0$ (i.e., satisfies the constraints). Thus, the sign constraints can be simplified to 
\begin{equation}\label{eq: hitandrun 1}
    \min_{k\in K_{-}}-\dfrac{e^{r}_k}{\xi_k}\geq\alpha\geq\max_{k\in K_{+}}-\dfrac{e^{r}_k}{\xi_k},
\end{equation}
where $\min_{k\in \emptyset}=+\infty$, and $\max_{k\in \emptyset}=-\infty$.
\par
\noindent\emph{Afriat Constraints.} Next, we consider the Afriat inequalities. Let $e^r(v,t)$ and $e^r(c,t)$ be the components of $e^r$ that correspond to $v_t$ and $c_t$, respectively. Assume that $d$ is fixed. Then, the Afriat inequalities are 
\[
e^r(v,t)-e^r(v,s)\geq \dfrac{\rho\tr_t}{d^t} (e^r(c,t)-e^r(c,s)), \forall t,s,
\]
Thus, since $d$ is fixed (the component of $\xi$ that corresponds to $d$ is zero), after plugging in $e^{r+1}$ we get
\[
e^r(v,t)-e^r(v,s)+\alpha (\xi(v,t)-\xi(v,s))\geq \rho_t\tr (e^r(c,t)-e^r(c,s))/d^t+\alpha\rho_t\tr (\xi(c,t)-\xi(c,s))/d^t, \forall t,s,
\]
where $\xi(v,t)$ and $\xi(c,t)$ are the components of $\xi$ that correspond to $v_t$ and $c_t$, respectively.
Hence,
\[
\alpha P_t\tr (\xi(t)-\xi(s))\leq -P_t\tr (e^r(t)-e^r(s)),\forall t,s,
\]
where $P_t=(-1,\rho_t\tr/d^t)\tr$, $\xi(t)=(\xi(v,t),\xi(c,t)\tr)\tr$, and $e^r(t)=(e^r(v,t),e^r(c,t)\tr)\tr$. Similarly to box constraints, we can define 
$T^A_{+}=\{(t,s)\::\:P_t\tr (\xi(t)-\xi(s))>0\}$, $T^A_{-}=\{(t,s)\::\:P_t\tr (\xi(t)-\xi(s))<0\}$, and $T^A_{0}=\{(t,s)\::\:P_t\tr (\xi(t)-\xi(s))=0\}$. Thus, the Afriat constraints are
\begin{equation}\label{eq: hitandrun 2}
\min_{(t,s)\in T^A_{+}}-\dfrac{P_t\tr (e^r(t)-e^r(s))}{P_t\tr (\xi(t)-\xi(s))}\geq \alpha\geq\max_{(t,s)\in T^A_{-}}-\dfrac{P_t\tr (e^r(t)-e^r(s))}{P_t\tr (\xi(t)-\xi(s))}.
\end{equation}
In other words we characterized possible perturbations of $c_t$ and $v_t$ given $d$ that are allowed under Afriat constraints.
Next we want to characterize the set for $d$ given consumption and utility numbers. Note that by assumption $d\in[\theta_0,1]$ and that
\[
d^t(v_t-v_s)\geq \rho_t(c_t-c_s).
\]
Define $T^A_{d,+}=\{(t,s)\::\:(v_t-v_s)>0\}$, $T^A_{d,-}=\{(t,s)\::\:(v_t-v_s)<0\}$, and $T^A_{d,0}=\{(t,s)\::\:(v_t-v_s)=0\}$. Hence, the Afriat ineqaulities are equivalent to
\begin{equation}\label{eq: hitandrun 3}
{\scriptstyle \min\left\{\min_{(t,s)\in T^A_{d,-}}\left(\max\left\{\dfrac{\rho_t(c_t-c_s)}{(v_t-v_s)},0\right\}\right)^{1/t},1\right\}\geq\: d\:\geq\max\left\{\max_{(t,s)\in T^A_{d,+}}\left(\max\left\{\dfrac{\rho_t(c_t-c_s)}{(v_t-v_s)},0\right\}\right)^{1/t},\theta_0\right\}.}
\end{equation}
\par
Inequalities (\ref{eq: hitandrun 1})-(\ref{eq: hitandrun 3}) give sharp restrictions on $\alpha$ that would guarantee that the next draw $e^{r+1}$ satisfies the constraints. Below we provide algorithms how to generate (i) new consumption and utility numbers given prices and discount factor, and (ii) discount factor given prices, consumption, and utility numbers. 

\begin{algorithmic}[1]
\STATE  \textbf{Generating new consumption vector and utility numbers}
\begin{itemize}
    \item Fix the discount factor and prices.
    \item Draw a random direction vector $\xi$ from a uniform distribution on the $[\abs{\mathcal{T}}+\abs{\mathcal{T}}\cdot L]$-dimensional unit sphere.
    \item Compute the interval $A$ using (\ref{eq: hitandrun 1}) and (\ref{eq: hitandrun 2}).
    \item Draw $\alpha$ uniformly from $A$.
    \item Generate new consumption vectors and utility numbers using $\xi$ and $\alpha$.
\end{itemize}
\STATE  \textbf{Generating new discount factor}
\begin{itemize}
    \item Fix prices, consumption, and utility numbers.
    \item Uniformly draw $d$ from the interval that satisfies (\ref{eq: hitandrun 3}).
\end{itemize}
\end{algorithmic}
Thus, we can propose the two approaches to sample from the cone characterized by the Afriat and the sign constraints. If one decides to keep the same $d$ for generating the chain, then one can initially draw several independent draws of $d$, and for every realization of $d$ generate its own chain. The second approach can be thought of as ``double-hit-and-run'': first generate new consumption and utility numbers, and then generate new discount factor using these new consumption and utility numbers. In our application we use double-hit-and-run approach. 
\subsection{User-specified \texorpdfstring{$\tilde\eta$}{Distribution}}\label{appen: inputs}
In this section we specify a particular choice of $\eta$ used in our applications and simulations.
\par
When integrating measurement error, instead of drawing measurement error (e.g., $\rand{w}_t^c$), we draw unobserved true variable (e.g., $\rand{c}_t^*$) and then constructed the measurement error by taking the difference between observed mismeasured and latent true variables (e.g., $\rand{w}_t^c=\rand{c}_t-\rand{c}^*_t$). Note that working with true variables allows us to easily generate measurement errors that imply correct signs for true variables (e.g., $\rand{c}_t^*\geq 0$). In particular, in our applications and simulations we impose sign constraints directly in the sampling stage (Step $2$ in Section~\ref{appen: integration}).  
\par
For our first application (survey data) to build $\tilde{\eta}$ we used the ``hit-and-run'' algorithm described in Appendix~\ref{appen: hitandrun} to produce draws of $\rand{e}$. In particular, $\rand{c}^*=(\rand{c}_t^*)_{t\in \mathcal{T}}$ is such that (i) it satisfies the Afriat-like inequalities and sign constraints, (ii) the user specified distribution over $\rand{w}^c=(\rand{w}_t^*)_{t\in \mathcal{T}}$ is 
\[
d\tilde{\eta}(w^c|x)\propto \exp(-\norm{g_M(x,e)}^2),
\]
where $g_{M}(x,e)=(\rho_t\tr w^c_t)_{t\in \mathcal{T}}$. To achieve this, we use the standard Metropolis Hastings algorithm in each step of the ``hit-and-run'' algorithm to get the draws from the desired distribution. Note that by construction this distribution has the correct support $E|X$.
\par 
The $\tilde{\eta}$ distribution can be adapted to accommodate other moments such as those in our extensions and counterfactual analysis by using the appropriate moment conditions. If the moment conditions, which are not support constraints, include other random variables in $\rand{e}$, the distribution $\tilde{\eta}$ will have to be defined on them, and not only on $\rand{w}^c$ like in our first application. 
\par
For our second application (experimental data) we use a different strategy since (i) the panel is long ($T=50$), (ii) the centering conditions do not depend on $v_t$ and $\delta_t$, and (iii) the static UMT has a simplified characterization in terms of Generalized Axiom of Revealed Preferences (GARP).\footnote{One can replace the Afriat inequalities by the GARP inequalities since GARP is equivalent to $\rat$-rationalizability in this case.} Hence, we can simplify our problem by considering a reduced latent random vector that consists only of true consumption or true prices. We then choose $\tilde{\eta}$ to be a uniform distribution over consumption or prices that satisfy GARP and that produce an expenditure level equal to $1$.\footnote{We impose nonnegativity constraints on consumption and positivity constraints on prices. The requirement to produce expenditure level equal to $1$ makes the support of consumption and price bounded.}  We can do this with the support constraints that we consider in this application. The key for a good computational performance of this step is to check for GARP consistency in an efficient way for each candidate draw of prices or consumption. For this purpose, we use a recursive algorithm to check GARP using an implementation of the deep-first search algorithm with recursive tabu search (see \citealp{Rboelaert2014}). 
\par
In both applications it is trivial to verify that these choices of $\tilde{\eta}$ satisfy the conditions stated in Definition~\ref{def:MEM}.

\subsection{Optimization}
We optimize the objective function specified in the pseudo-algorithm using Bobyqa procedure as implemented in the NLopt library following \citet{powell2009bobyqa}. As an initial guess for the optimization we use the outcome of applying BlackBox Differential Evolution Algorithm to minimize our objective function. Bobyqa performs derivative-free optimization using iteratively constructed quadratic approximations of the objective. We observe that in our simulations this combination of optimizers perform the best in terms of accuracy and speed among similar NLopt alternatives. 
\par
For the second application because the number of moments is larger we use as an initial guess for the optimization the outcome of two-step GMM estimator. Since the objective function of the two-step GMM estimator has a unique minimizer and is locally convex around it, we use Bobqya here as well. Bobqya works well in convex problems as documented in \citet{rios2013derivative}.
Following \citet{schennach2014entropic}, we additionally verified our results using Neldermead.\footnote{See \citet{sasaki2015contraction} for an alternative optimization technique.} 
\par
Another alternative to find good initial values is taking advantage of a convex problem related to our problem. As shown in \citet{schennach2014entropic}, the moment condition in Theorem~\ref{thm:ELVISexponentialdiscounting} is a first-order condition of the following convex optimization problem (Lemma~A.1 in \citealp{schennach2014entropic}):
\[
\min_{\gamma\in\Real^{q}}\Expt[\pi_0]{\ln\Expt[\tilde{\eta}]{\exp(\gamma\tr g_M(\rand{x},\rand{e}))|\rand{x}}}.
\]
Moreover, the norm $\norm{\Expt[\pi_0]{h(\rand x;\cdot)}}$ has a unique global minimum, is convex in the neighborhood of the minimizer if this minimizer is finite, and has no other local minima. Hence, computationally the problem is convenient. 

\section{Analytical Power Results. Robustness to Local Perturbations}\label{appen: counter and robustness}
In this appendix we provide examples of DGPs that will fail to pass our test (Sections~\ref{appen: counterexamples ED} and~\ref{appen: counterexamples GARP}). In Section~\ref{appen: local perturb} we show robustness of UMTs that we consider to local perturbations in observed quantities or prices.

\subsection{s/ED-Rationalizability, Mean-budget Neutrality, Price and Consumption Measurement Error}\label{appen: counterexamples ED}
In this section we construct the data set that can not be $\s/\ed$-rationalized by measurement error in consumption and time invariant measurement error in prices if the centering condition comes in the form 
\[
\Exp{\rands{\alpha}\rands{\rho}_t\tr\rand{c}_t}=\Exp{\rands{\alpha}\rands{\rho}_t^{*\prime}\rand{c}^{*}_t},\quad\forall t\in\mathcal{T},
\]
where $\rands{\alpha}\in(0,1]$, represents individual specific weights. We consider the environment with $2$ time periods and $2$ goods.  We assume that the price measurement error comes in the following form:
\begin{align*}
    \rands{\rho}^{*}_t&=\widetilde{\rand{W}}\rands{\rho}_t,
\end{align*}
where 
\[
\widetilde{\rand{W}}=\left(\begin{array}{cc}
     \widetilde{\rand{w}}_{1}^p&0  \\
     0&\widetilde{\rand{w}}_{2}^p 
\end{array}\right)
\]
is the matrix of time invariant multiplicative price measurement errors.

The above centering condition covers variety of measurement error. For instance, if $\rands{\alpha}=1\:\as$, then we have the centering condition used in our application. The random weight $\rands{\alpha}$ is allowed to be correlated with all observables and measurement error.
\par
Take $\{\rand{c}_t\}_{t=0,1}$ and $\{\rands{\rho}_t\}_{t=0,1}$ such that
\begin{align*}
\rands{\rho}_{0,1}&=\rands{\rho}_{0,2}=1\:\as,\quad\rands{\rho}_{1,1}=\rands{\rho}_{1,2}=2\:\as,\\
\rand{c}_{0,1}&=\rand{c}_{0,2}=1\:\as,\quad\rand{c}_{1,1}=\rand{c}_{1,2}=2\:\as.
\end{align*}
Denote $\rho_0=(1,1)\tr$, $\rho_1=2(1,1)\tr$, $c_0=(1,1)\tr$, and $c_1=2(1,1)\tr$.
By way of contradiction suppose that there exist $\rand{d}\in(0,1]$, $\{\rand{c}^{*}_t,\rands{\rho}_t^{*}\}_{t=0,1}$, $\rands{\alpha}\in(0,1]$, nonnegative $\{\rand v_t\}_{t=0,1}$ such that the Afriat inequalities and the centering conditions are satisfied:
\begin{align*}
    \rand{v}_1-\rand v_0\geq \dfrac{\rands{\rho}_1^{*\prime}}{\rand{d}}&(\rand{c}^*_1-\rand{c}^*_0)\:\as,\quad
    \rand{v}_0-\rand v_1\geq \rands{\rho}^{*\prime}_0(\rand{c}^*_0-\rand{c}^*_1)\:\as,\\
    \Exp{\rands{\alpha}\rands{\rho}_0\tr\rand{c}_0}=&\Exp{\rands{\alpha}\rands{\rho}_0^{*\prime}\rand{c}^{*}_0},\quad
    \Exp{\rands{\alpha}\rands{\rho}_1\tr\rand{c}_1}=\Exp{\rands{\alpha}\rands{\rho}_1^{*\prime}\rand{c}^{*}_1},\\
    &\rands{\rho}^{*}_0=\widetilde{\rand{W}}{\rho}_0,\quad
    \rands{\rho}^{*}_1=\widetilde{\rand{W}}{\rho}_1.
\end{align*}
\par
Note that since $\rand{d}>0$ with probability $1$, the inequalities can be rewritten as
\begin{align*}
    \dfrac{\rand{d}}{2}\left(\rand{v}_1-\rand v_0\right)&\geq \rho_0\tr\widetilde{\rand{W}}(\rand{c}^*_1-\rand{c}^*_0)\:\as,\\
    \rand{v}_0-\rand v_1&\geq \rho_0\tr\widetilde{\rand{W}}(\rand{c}^*_0-\rand{c}^*_1)\:\as.
\end{align*}
Hence,
\[
\dfrac{\rand{d}}{2}\left(\rand{v}_1-\rand v_0\right)\geq\rand{v}_1-\rand v_0.
\]
Thus, since $\rand{d}\in(0,1]$, we can deduce that $\rand{v}_1-\rand v_0\leq 0\:\as$ If we multiply the first Afriat inequality by strictly positive $\rands{\alpha}$ and take expectations from both sides, we get
\begin{align*}
0\geq&\Exp{\rands{\alpha}\dfrac{\rand{d}}{2}\left(\rand{v}_1-\rand v_0\right)}\geq \Exp{\rands{\alpha}\rho_0\tr\rand{W}\rand{c}^*_1}-\Exp{\rands{\alpha}\rho_0\tr\rand{W}\rand{c}^*_0}=\\
&\dfrac{\Exp{\rands{\alpha}\rho_1\tr\rand{W}\rand{c}^*_1}}{2}-\Exp{\rands{\alpha}\rands{\rho}^{*\prime}_0\rand{c}^*_0}=\dfrac{\Exp{\rands{\alpha}\rands{\rho}^{*\prime}_1\rand{c}^*_1}}{2}-\Exp{\rands{\alpha}\rands{\rho}^{*\prime}_0\rand{c}^*_0}=\\
&\dfrac{\Exp{\rands{\alpha}\rands{\rho}^{\prime}_1\rand{c}_1}}{2}-\Exp{\rands{\alpha}\rands{\rho}^{\prime}_0\rand{c}_0}=4\Exp{\rands{\alpha}}-2\Exp{\rands{\alpha}}=2\Exp{\rands\alpha}>0,
\end{align*}
where the equalities come from from the centering conditions and the fact that $\rho_1=2\rho_0$, and the last inequality is implied by $\Exp{\rands{\alpha}}>0$. The above contradiction implies that the constructed data set will never pass our test. 
\par
There are at least two implications of the example constructed in this section. First, we can further restrict $\rands{\alpha}$ by assuming that $\rands{\alpha}=1\:\as$. Thus, the above example demonstrates that without price measurement error the centering condition $\Exp{\rands{\rho}_t\tr\rand{w}^c_t}=0$, $t\in\mathcal{T}$ has empirical content. Second, note that the trembling-hand centering condition ($\Exp{\rand{w}^c_t}=0$, $t\in\mathcal{T}$) implies that in our example $\Exp{\rands{\rho}_t\tr\rand{w}^c_t}=0$, $t\in\mathcal{T}$,
since $\rands{\rho}_t$ has a degenerate distribution. Hence, the trembling-hand centering condition has empirical content as well.
\par
We conclude this section by noting that our example can be used to construct an example with time invariant consumption measurement error and time varying price measurement error because the mean budget neutrality condition and the Afriat inequalities are ``symmetric'' in prices and consumption.

\subsection{GARP and Trembling-Hand Measurement Error in Consumption or Prices}\label{appen: counterexamples GARP}
In the experimental data we use individuals are forced to pick points on the budget lines. That is, $\rands{\rho}^*_t\rand{c}^*_{t}=\rands{\rho}_t\rand{c}_{t}\:\as$ for all $t\in\mathcal{T}$. In this section we construct an example for the GARP with trembling-hand error in consumption. Consider $2$ goods and $2$ time periods environment with deterministic prices.
\begin{align*}
    p_0=(1,2)\tr,\quad p_1=(2,1)\tr.
\end{align*}
The observed consumption vectors are random and satisfy

\begin{tabularx}{\textwidth}{XX}
{\begin{align*}
&\rand{c}_0=\begin{cases}
(\epsilon,1-\epsilon/2)\tr,&\text{ with probability }1/2,\\
(\epsilon,3/4-\epsilon/2)\tr,&\text{ with probability }1/2,
\end{cases}
\end{align*}} 
& 
{\begin{align*}
&\rand{c}_1=\begin{cases}
(1-\epsilon/2,\epsilon)\tr,&\text{ with probability }1/2,\\
(3/4-\epsilon/2,\epsilon)\tr,&\text{ with probability }1/2,
\end{cases}
\end{align*}} 
\end{tabularx}
where $0<\epsilon<1/8$. Hence,
\begin{align*}
    \Exp{\rand{c}_0}=(\epsilon,7/8-\epsilon/2)\tr,\quad \Exp{\rand{c}_1}=(7/8-\epsilon/2,\epsilon)\tr.
\end{align*}
Next, note that observed disposable income $\rand{y}_t$ is a binary random variable:

\begin{tabularx}{\textwidth}{XX}
{\begin{align*}
&\rand{y}_0=\begin{cases}
2,&\text{ with probability }1/2,\\
3/2,&\text{ with probability }1/2,
\end{cases}
\end{align*}} 
& 
{\begin{align*}
&\rand{y}_1=\begin{cases}
2,&\text{ with probability }1/2,\\
3/2&\text{ with probability }1/2.
\end{cases}
\end{align*}} 
\end{tabularx}
\par
\noindent{First time period.} Since mismeasured consumption has to belong to the true budget line ($\rand{p}_t\tr \rand{c}_t=\rand{p}_t\rand{c}^*_t\:\as$ for all $t$) and on average has to agree with the observed consumption, we can conclude that
\begin{align*}
    \Prob{\rand{c}^*_{01}+2\rand{c}^*_{02}=3/2|\rand{y}_0=3/2}&=1,\quad \Prob{\rand{c}^*_{01}+2\rand{c}^*_{02}=2|\rand{y}_0=2}=1,\\
    \Exp{\rand{c}^*_{01}}=&\epsilon,\quad\Exp{\rand{c}^*_{02}}=7/8-\epsilon/2.
\end{align*}
Note that 
\begin{align*}
\Exp{\rand{c}^*_{01}}&=\Exp{\rand{c}^*_{01}|\rand{y}_0=3/2}\Prob{\rand{y}_0=3/2}+\Exp{\rand{c}^*_{01}|\rand{y}_0=2}\Prob{\rand{y}_0=2}\\
&=\dfrac{\Exp{\rand{c}^*_{01}|\rand{y}_0=3/2}}{2}+\dfrac{\Exp{\rand{c}^*_{01}|\rand{y}_0=2}}{2}.
\end{align*}

Hence, since $\rand{c}_t^*\geq 0\:\as$, $\Exp{\rand{c}^*_{01}|\rand{y}_0}\leq2\epsilon\:\as$.
Thus, given that $\rand{p}_t\tr \rand{c}_t=\rand{p}_t\rand{c}^*_t\:\as$ for all $t$ we get that $\rand{c}^*_0\in[0,2\epsilon]\times[3/4-\epsilon,1]$ with positive probability. Similarly, $\rand{c}^*_1\in[3/4-\epsilon,1]\times[0,2\epsilon]$ with positive probability. Thus, since $3/2-4\epsilon>1$ ($\epsilon<1/8$) it means that $\rand{c}^*_{0}$ is also available at $t=1$ with positive probability. Similarly, $\rand{c}^*_1$ is available when $t=0$ with positive probability. The latter violates GARP with positive probability (GARP has to be satisfied with probability $1$). Thus, there is no trembling-hand measurement error that keeps consumption on the same budget and is consistent with GARP. 

\par
We conclude this section by noting that GARP conditions are symmetric in terms of price and consumption vectors. Thus, after relabeling (swapping prices with consumption) the above DGP also will not pass our test if one assumes that there is only mean-zero measurement error in prices.

\subsection{Robustness to Local Perturbations}\label{appen: local perturb}
In this section we show that in many situations our approach is also robust to small measurement errors in observed quantities or prices. Suppose that we fix a model (i.e., the support restrictions on $\rands{\delta}_t$ and $\rands{\lambda}_t$, and the definition of $\rands{\rho}_t^*$). Define the measure of inequality slackness 
\begin{align*}
    \rands{\varepsilon}^*_{t,s}=\rands{\xi}_{t,s}-\rands{\rho}^*_t(\rand{c}^*_t-\rand{c}^*_s)\:\as
\end{align*}
where
\[
\rands{\xi}_{t,s}=\dfrac{\rands{\delta}_t}{\rands{\lambda}_t}(\rand{v}_t-\rand{v}_s).
\]
Suppose that there exist $\{\rand{v}_t,\rands{\delta}_t,\rands{\lambda}_t\}_{t\in\mathcal{T}}$ such that $\rands{\xi}_{t,s}\geq 0$. Next we perturb the true consumption and prices in order to see to what extent the RP inequalities are still valid. Note that the observed potentially mismeasured data $\{\rands{\rho}_t,\rand{c}_t\}_{t\in\mathcal{T}}$ satisfies the constraints with the same $\rands{\xi}_{t,s}$ if
\begin{align*}
    \rands{\varepsilon}^*_{t,s}-\left[\rand{w}^{p\prime}_t(\rand{c}_t-\rand{c}_s)+\rands{\rho}\tr_{t}(\rand{w}^c_t-\rand{w}^c_s)+\rand{w}^{p\prime}_t(\rand{w}^c_s-\rand{w}^c_t)\right]\geq0\:\as
\end{align*}
for all $t$ and $s$. Define
\begin{align*}
    \rands{\alpha}^p&=\max_{t}\norm{\rand{w}^{p}_t},
    \rands{\alpha}^c=\max_{t}\norm{\rand{w}^{c}_t},\\
    \rands{\beta}^p&=\max_{t}\norm{\rands{\rho}_t},\:
    \rands{\beta}^c=\max_{t}\norm{\rand{c}_t}.
\end{align*}
Then by the triangular inequality, the following inequality provides a sufficient restriction on the maximal perturbations of consumption and prices that will not refute the correctly specified model. 
\begin{align*}
    \rands{\varepsilon}^*_{t,s}\geq 6 \max\{\rands{\alpha}^p\rands{\beta}^c,\rands{\alpha}^c\rands{\beta}^p,\rands{\alpha}^p\rands{\alpha}^c\}\:\as.
\end{align*}
In other words, if the UMT leads to an ``interior solution'' (i.e., $\rands{\varepsilon}^*_{t,s}>0$ for all $t\neq s$), then small measurement errors without any centering restrictions will not affect the conclusions based on our testing procedure.

\section{Extensions of s/ED-Rationalizability}\label{appen: extensions of ED }
In this appendix we show that our methodology can cover two important extensions of $\ed$-rationalizability discussed in the main text: (i) $\ed$-rationalizability with income uncertainty (Appendix~\ref{appen: income uncert}); (ii) the collective model of \citet{adams2014consume} (Appendix~\ref{appen:extensions,collective}). 

\subsection{Income Uncertainty}\label{appen: income uncert}
In this section we consider a model of dynamic utility maximization with exponential discounting and income uncertainty. We start with the analysis of the deterministic model and then extend it to stochastic environments. 

\begin{defn}[Dynamic UMT with income uncertainty, $\ed$-IU-rationalizability] 
A deterministic array $(p_t,r_t,c_t)_{t\in \mathcal{T}}$ is $\ed$-rationalizable in the presence of income uncertainty ($\ed$-IU rationalizable) if: (i) There exists a concave, locally nonsatiated, and continuous function $u$. (ii) There exists  a
random income stream $\rand{y}=(\rand{y}_t)_{t\in\mathcal{T}}$. (iii) There exists an array of consumption and saving (policy) functions $\left(c_{t}(\cdot)\right)_{t\in\mathcal{T}}$ and $\left(s_{t}(\cdot)\right)_{t\in\mathcal{T}}$ such that $c_t:\Real_{+}^{\abs{\mathcal{T}}}\to\Real_{+}^{L}\setminus\{0\}$ and $s_t:\Real_{+}^{\abs{\mathcal{T}}}\to\Real_{+}$ for all $t\in \mathcal{T}$. In addition, we restrict these functions to depend only on the income history. That is, for all $t$, $c_t(y')=c_{t}(y)$ and $s_t(y')=s_{t}(y)$ for all $y$ and $y'$ such that $y'_{\tau}=y_{\tau}$ for all $\tau\leq t$.
(iv) The consumption and saving policy functions maximize the expected flow of instantaneous utilities given the budget constraints and history of incomes captured by information $I_t$:
\begin{align*}
    \max_{\left\{c_{\tau}(\cdot),s_{\tau}(\cdot)\right\}_{\tau=t,\dots,T}}\Exp{\sum_{\tau=t}^{T}\delta^{\tau-t}u(c_{\tau}(\rand{y}))\Big|I_t}
\end{align*}
subject to
\[
p_{\tau}^{\tr}c_{\tau}(\rand{y})+s_t(\rand{y})=\rand{y}_{\tau}+(1+r_{\tau})s_{\tau-1}(\rand{y})\:\as,
\]
for all $\tau=t,t+1,\dots,T$.
(v) The consumption stream $(c_t)_{t\in \mathcal{T}}$ at every time period $t$ is equal to the consumption policy function evaluated at a realization $y=(y_t)_{t\in \mathcal{T}}$ of the random income stream (i.e., $c_t= c_t(y)$). (vi) There is initial level of savings $s_0$. 
\end{defn}
$\ed$-IU-rationalizability extends $\ed$-rationalizability in an important direction: in accommodates possible uncertainty in the future income. Since the future income is unobserved, instead of a fixed vector of future consumption and savings, the agent has to come up with the whole consumption and saving functions in order to be ready for all possible realizations of the income stream.
\par
In the case of income uncertainty we can still use the first-order conditions approach with an important modification. Instead of considering a support constraint on the space of marginal utility of wealth, we restrict its law of motion.  
\begin{lem}[FOC for $\ed$-IU-rationalizability] 
A deterministic array $(p_t,r_t,c_t)_{t\in \mathcal{T}}$ is $\ed$-IU rationalizable if and only if there exists a concave, locally nonsatiated, and continuous function $u$, a discount factor $d\in(0,1]$,  and a positive \emph{random} vector $(\rands{\lambda}_t)_{t\in \mathcal{T}}$ such that:
\begin{enumerate}
    \item $\Exp{\rands{\lambda}_{t+1}|I_t}=\rands{\lambda}_t\:\as$, where $I_t$ the information ($\sigma$-algebra) generated by $(\rands{\lambda}_\tau)_{\tau\leq t}$.
    \item $d^{t} \nabla u(c_t)\leq \rands{\lambda}_t \rho_t\:\as$. If $c_{t,j}\neq0$, then $d^{t}\nabla u(c_{t})_j=\rands{\lambda}_{t}\rho_{t,j}\:\as$, where $c_{t,j}$, $\nabla u(c_{t})_j$, and $\rho_{t,j}$ are the $j$-th components of $c_{t}$, $\nabla u(c_{t})$, and $\rho_{t}$, respectively, and $\rho_t=p_t/\prod_{\tau=1}^{t}(1+r_\tau)$.
\end{enumerate}

\end{lem}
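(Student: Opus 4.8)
The plan is to read $\ed$-IU-rationalizability as the optimality conditions of a concave \emph{stochastic} program and to let the intertemporal (Euler) condition for the single available asset produce the martingale; the argument parallels the proof of Lemma~\ref{lem:RandomExponentialdiscountingCstar}, with the deterministic normalization of the Lagrange multipliers replaced by a martingale.

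\textbf{Necessity.} Suppose $(p_t,r_t,c_t)_{t\in\mathcal{T}}$ is $\ed$-IU-rationalizable, with witnesses $u$, $d$, a random income stream $\rand{y}$, consumption and saving policies $(c_t(\cdot),s_t(\cdot))_{t\in\mathcal{T}}$, and $s_0$. By time consistency of exponential discounting the policies are optimal for the time-$0$ problem, i.e.\ they solve the concave program $\max\Exp{\sum_{t}d^{t}u(c_t(\rand{y}))}$ over income-history-adapted $(c_t(\cdot),s_t(\cdot))$ subject to the flow constraints and to $c_t\geq0$, $s_t\geq0$; the constraints are affine in the policies, $u$ is concave, and the period-$t$ constraint is measurable with respect to $\mathcal{F}_t:=\sigma(\rand{y}_1,\dots,\rand{y}_t)$ because the policies depend only on income history. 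Hence the Karush--Kuhn--Tucker (KKT) conditions are necessary: there are $\mathcal{F}_t$-measurable multipliers $\rands{\pi}_t$ for the flow constraints with (a) stationarity in $c_t(\cdot)$: $d^{t}\nabla u(c_t(\rand{y}))\leq\rands{\pi}_t p_t$ \as, with $d^{t}\nabla u(c_t(\rand{y}))_j=\rands{\pi}_t p_{t,j}$ \as\ whenever $c_{t,j}(\rand{y})>0$ (the inequality from $c_t\geq0$); (b) $\rands{\pi}_t>0$ \as, forced by local nonsatiation; and (c) stationarity in $s_t(\cdot)$: since $s_t$ enters the period-$t$ constraint with coefficient $+1$ and the period-$(t+1)$ constraint with the deterministic coefficient $-(1+r_{t+1})$, one gets $\rands{\pi}_t=(1+r_{t+1})\Exp{\rands{\pi}_{t+1}\mid\mathcal{F}_t}$ in the interior of $s_t\geq0$. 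Put $\rands{\lambda}_t:=\rands{\pi}_t\prod_{k=1}^{t}(1+r_k)$, a positive random vector. Then (c) reads $\rands{\lambda}_t=\Exp{\rands{\lambda}_{t+1}\mid\mathcal{F}_t}$, and since $I_t=\sigma(\rands{\lambda}_k:k\leq t)\subseteq\mathcal{F}_t$ the tower property yields $\Exp{\rands{\lambda}_{t+1}\mid I_t}=\Exp{\Exp{\rands{\lambda}_{t+1}\mid\mathcal{F}_t}\mid I_t}=\Exp{\rands{\lambda}_t\mid I_t}=\rands{\lambda}_t$, which is the first (martingale) condition; and (a) evaluated along the realized income path, with $\rands{\pi}_t p_t=\rands{\lambda}_t p_t/\prod_{k=1}^{t}(1+r_k)=\rands{\lambda}_t\rho_t$, is the second (first-order) condition.

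\textbf{Sufficiency.} Conversely, given $u$, $d$, a strictly positive martingale $(\rands{\lambda}_t)_{t\in\mathcal{T}}$ relative to its own filtration $I_t$, and the two stated conditions, I would construct an explicit rationalization on the probability space carrying $(\rands{\lambda}_t)$: take the observed values as \emph{constant} consumption policies $c_t(\cdot)\equiv c_t$, set $\rands{\pi}_t:=\rands{\lambda}_t/\prod_{k=1}^{t}(1+r_k)$, choose strictly positive interim saving policies $s_t(\cdot)$ ($t<T$) that are functions of $(\rands{\lambda}_1,\dots,\rands{\lambda}_t)$ generating the filtration $I_t$, put $s_T\equiv0$, fix $s_0$, and \emph{define} the income stream by $\rand{y}_t:=p_t\tr c_t+s_t(\rand{y})-(1+r_t)s_{t-1}(\rand{y})$, with the free parameters chosen so that $\rand{y}_t\geq0$ and $\sigma(\rand{y}_1,\dots,\rand{y}_t)=I_t$. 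The flow constraints then hold by construction, the $c_t$-stationarity of the resulting concave program is exactly the second condition, and its $s_t$-stationarity $\rands{\pi}_t=(1+r_{t+1})\Exp{\rands{\pi}_{t+1}\mid I_t}$ is exactly the first condition after multiplying through by $\prod_{k=1}^{t+1}(1+r_k)$. Since the program is concave, feasibility together with these KKT conditions implies that the policies are optimal, so the observed array is $\ed$-IU-rationalizable.

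\textbf{Main obstacle.} The delicate step is the savings stationarity in the necessity part: when the no-borrowing constraint $s_t\geq0$ binds one obtains only the supermartingale inequality $\rands{\lambda}_t\geq\Exp{\rands{\lambda}_{t+1}\mid\mathcal{F}_t}$, and upgrading this to a genuine martingale requires showing that the observed consumption also admits a rationalization with strictly positive interim savings along the realized path. This exploits the freedom to choose both the income stream and $s_0$, but it needs care, since merely enlarging initial wealth relaxes the feasible set; the compensating reduction of terminal income---which leaves the present value of resources, and hence the optimality of the observed consumption, unchanged---must be built in. The remaining ingredients (necessity and sufficiency of KKT for a concave program, positivity of multipliers from local nonsatiation, and the measurability bookkeeping) are routine.
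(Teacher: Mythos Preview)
Your approach is essentially the same as the paper's: both derive the conditions from the Lagrangian of the stochastic program, obtaining the consumption FOC from stationarity in $c_\tau(\cdot)$ and the martingale property from stationarity in $s_\tau(\cdot)$, then invoke concavity for sufficiency. The paper's proof is considerably terser---it writes down the Lagrangian (with the scaling $\prod_{j=1}^\tau(1+r_j)$ already built into the denominator of the multiplier, which is equivalent to your reparametrization $\rands{\lambda}_t=\rands{\pi}_t\prod_k(1+r_k)$), differentiates, and stops. Your proposal is in fact more complete on two points the paper glosses over: (i) the paper does not address the supermartingale issue you flag when $s_t\geq0$ binds---it simply differentiates as if savings were interior; and (ii) the paper silently identifies $I_t$ with the income filtration in the optimization, whereas you correctly note that the statement defines $I_t$ via the $\rands{\lambda}_\tau$ themselves and use the tower property to pass from the larger to the smaller filtration. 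Your explicit sufficiency construction is also absent from the paper, which just appeals to ``concavity implies FOCs are sufficient.''
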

\begin{proof}
At every time period $t$ the agent is maximizing the expected flow of instantaneous utilities given the budget constraints and history of incomes captured by $I_t$:
\begin{align*}
    \max_{\left\{c_{\tau}(\cdot),s_{\tau}(\cdot)\right\}_{\tau=t,\dots,T}}\Exp{\sum_{\tau=t}^{T}\delta^{\tau-t}u(c_{\tau}(\rand{y}))\Big|I_t}
\end{align*}
subject to
\[
p_{\tau}^{\tr}c_{\tau}(\rand{y})+s_t(\rand{y})=\rand{y}_{\tau}+(1+r_{\tau})s_{\tau-1}(\rand{y})\:\as,
\]
for all $\tau=t,t+1,\dots,T$.

The Lagrangian function of the above problem takes the form
\begin{align*}
    \Exp{\sum_{\tau=t}^{T}\delta^{\tau-t}u(c_{\tau}(\rand{y}))\Big|I_t}-\sum_{\tau=t}^{T}\Exp{\dfrac{\lambda_{\tau}(\rand{y})}{\delta^{t}\prod_{j=1}^\tau(1+r_j)}\left[p_{\tau}^{\tr}c_{\tau}(\rand{y})+s_t(\rand{y})-\rand{y}_{\tau}-(1+r_{\tau})s_{\tau-1}(\rand{y})\right]\Big|I_{t}},
\end{align*}
where $\{\lambda_{\tau}(\cdot)\}_{\tau=t,\dots,T}$ are lagrange multipliers. The denominator $\delta^{t}\prod_{j=1}^\tau(1+r_j)$ is needed for scaling of $\lambda_{\tau}(\cdot)$. If the instantenious utility function is concave, then, since the constraints are convex, the first-order condition will provide necessary and sufficient conditions for $c_t$ and $s_t$ to be optimal.   
The first-order condition with respect to $c_{\tau}$ is
\[
\Exp{\left[\delta^{\tau}\nabla u(c_{\tau}(\rand{y}))-\lambda_{\tau}(\rand{y})\rho_{\tau}\right]\tr v_{c,\tau}(\rand{y})\Big|I_{t}}=0,
\]
for all $t\in\mathcal{T}$, $\tau=t,\dots,T$, and functions $v_{c,\tau}$, where $\rho_{\tau}=p_{\tau}/\prod_{j=1}^\tau(1+r_j)$. Note that, since for any $j=1,\dots,L$ the first order condition with respect to $c_\tau$ is satisfied with $v_c(\cdot)=(\Char{i=j})_{i=1,\dots,L}$, we have that the first order condition with respect to $c_\tau$ is satisfied if and only if   
\[
\delta^{t}\nabla u(c_{t}(\rand{y}))=\lambda_{t}(\rand{y})\rho_{t}\:\as,
\]
for all $t\in\mathcal{T}$. 
\par
Next, consider the first order condition with respect to $s_{\tau}$:
\[
\Exp{\left[\lambda_{\tau+1}(\rand{y})-\lambda_{\tau}(\rand{y})\right]v_{s,\tau}(\rand{y})\Big|I_{t}}=0
\]
for all $t\in\mathcal{T}$, $\tau=t,\dots,T$, and functions $v_{s,\tau}$. Because of the law of iterated expectations the later is equivalent to 
\[
\Exp{\lambda_{t+1}(\rand{y})-\lambda_{t}(\rand{y})\Big|I_{t}}=0
\]
for all $t\in\mathcal{T}$, since $v_{s,\tau}(\cdot)$ only depends on the history up to moment $\tau$. 
\end{proof}
The first corollary of the lemma above is that without imposing any restriction on income shocks, at the population level, it is impossible to discern whether an array $(p_t,r_t,c_t)_{t\in \mathcal{T}}$ is $\ed$-IU-rationalizable or $\rat$-rationalizable. The reason is that the only implication at the individual level of $\ed$-IU-rationalizability is that the marginal utility of income is positive.\footnote{This observation seems to have been noticed first by \citet{adams2014consume}.} However, if we assume that the latent income shocks are i.i.d., and there is no aggregate shocks, in addition to assuming that preferences are i.i.d. and stable in the time window of interest, we can still have testable implications of the model. This statistical version of the $\ed$-IU model is defined next. 
\par

\begin{defn} [$\s/\ed$-IU-rationalizability]
	A random array $(\rand{p}_t^*,\rand{r}_t^*,\rand c_{t}^{*})_{t\in\mathcal{T}}$
	is $\s/\ed$-IU-rationalizable if there exists a tuple $(\rand{u},(\rands{\lambda}_t,\rand{d})_{t\in\mathcal{T}})$ such that
	\begin{enumerate}
	    \item $\rand u$ is a random, concave, locally nonsatiated, and continuous utility function;
	    \item $\rand{d}$ is a random variable supported on $(0,1]$ interpreted as the time discount factor;
	    \item $(\rands{\lambda_t})_{t\in\mathcal{T}}$ is a positive random vector, interpreted as the marginal utility of income, such that for $t=0,\cdots,|\mathcal{T}|-1$:
	    \[
	    \Exp{\rands{\lambda}_{t+1}\Big|(\rands{\lambda_\tau})_{\tau\leq t},(\rands{\rho}_{\tau}^*)_{\tau\leq t},\rand{u},\rand{d},}=\rands{\lambda}_t\:\as,
	    \]
	    where $\rands{\rho}^*_t=\rand{p}_t^*/\prod_{\tau=1}^t(1+\rand{r}_\tau)$, $t\in\mathcal{T}$;
	    \item $\rand{d}^{t}\nabla \rands{u}(\rand c_{t}^{*})\leq\rands{\lambda_t}\rands{\rho}^*_{t}\:\as$ for all $t\in\mathcal{T}$;
	    \item For every $j=1,\dots,L$ and $t\in\mathcal{T}$, it must be that $\Prob{\rand{c}^*_{t,j}\neq 0,\rands{d}^{t}\nabla \rands{u}(\rand c_{t}^{*})_j<\rands{\lambda_t}\rands{\rho}^*_{t,j}}=0$, where $c_{t,j}^*$, $\rho^*_{t,j}$, and $\nabla {u}(c_{t}^{*})_j$ denote the $j$-th components of $c_t^*$, $\rho^*_t$, and $\nabla{u}(c_{t}^{*})$, respectively.
	\end{enumerate}
\end{defn}
\par
In words, at the beginning of the time-window of interest, a consumer draws a utility function, and a discount factor that are going to remain fixed in time. Every time, given the realized prices, utility, and discount factor the consumer draws a new marginal utility of income, and chooses consumption according to her first-order conditions. The marginal utility of income is a martingale with respect to the known information, which includes the realizations of utility, discount factor, and discounted prices. With this definition at hand we can write the following lemma.
\begin{lem}\label{lem:s-ED-IU-Afriat} 
For a given random array $(\rands{\rho}^*_{t},\rand c_{t}^{*})_{t\in\mathcal{T}}$, the following are equivalent:
	\begin{enumerate}
		\item The random array $(\rands{\rho}^*_{t},\rand c_{t}^{*})_{t\in\mathcal{T}}$ is $\s/\ed\text{-}\mathrm{IU}$-rationalizable.
		\item There exist positive random vectors $(\rand v_{t})_{t\in\mathcal{T}}$ and $(\rands{\lambda_{t}})_{t\in\mathcal{T}}$, and $\rand{d}$ supported on or inside $(0,1]$ such that
		\[
		\rand{v}_{t}-\rand{v}_{s}\geq\dfrac{\rands{\lambda}_{t}}{\rand{d}^t}\rands{\rho}^{*\prime}_{t}(\rand{c}_{t}^{*}-\rand{c}_{s}^{*})\quad\as,\quad\forall s,t\in\mathcal{T};
		\]
		and such that for $t=0,\cdots,|\mathcal{T}|-1$:
		\[
		 \Exp{\rands{\lambda}_{t+1}\Big|(\rands{\lambda_\tau})_{\tau\leq t},(\rands{\rho}_{\tau}^*)_{\tau \leq t},\{\rand{v}_\tau\}_{\tau \leq t},\rand{d}}=\rands{\lambda}_t\:\as.
		\]
	\end{enumerate}
\end{lem}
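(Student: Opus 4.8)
The plan is to run the argument behind Lemma~\ref{lem:RandomExponentialdiscountingCstar} essentially verbatim, with $\rands{\delta}_t$ replaced throughout by $\rand{d}^{t}$, and then to verify separately that the martingale restriction on $(\rands{\lambda}_t)_{t\in\mathcal{T}}$ survives the translation in both directions. Observe at the outset that $\rand{d}\in(0,1]$ forces $\rand{d}^{t}\in(0,1]$ for every $t$, so the discount-factor part of the characterization is automatic and the ratio appearing in the Afriat-type inequalities is simply $\rands{\lambda}_t/\rand{d}^{t}$; note also that items (4)--(5) of $\s/\ed$-IU-rationalizability are the same first-order and complementary-slackness conditions already handled in Lemma~\ref{lem:Browiningparametric} and Lemma~\ref{lem:RandomExponentialdiscountingCstar}.

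For (i)$\Rightarrow$(ii): starting from a witness $(\rand{u},(\rands{\lambda}_t,\rand{d})_{t\in\mathcal{T}})$, concavity together with $\rand{d}^{t}\nabla\rand{u}(\rand{c}^*_t)\leq\rands{\lambda}_t\rands{\rho}^*_t$ and the corner condition gives, a.s. and for all $s,t$, $\rand{u}(\rand{c}^*_t)-\rand{u}(\rand{c}^*_s)\geq\frac{\rands{\lambda}_t}{\rand{d}^{t}}\rands{\rho}^{*\prime}_t(\rand{c}^*_t-\rand{c}^*_s)$, exactly as in the proof of Lemma~\ref{lem:RandomExponentialdiscountingCstar}. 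Take $\rand{v}_t=\rand{u}(\rand{c}^*_t)$ after, if necessary, adding to $\rand{u}$ a nonnegative constant measurable at time $0$ so that the $\rand{v}_t$ are positive; this nonanticipating normalization replaces the $\min$-over-$\mathcal{T}$ normalization used in Lemma~\ref{lem:RandomExponentialdiscountingCstar}. Since each $\rand{v}_\tau$ is then a fixed measurable function of $\rand{u}$ and the time-$\tau$ observables, the $\sigma$-algebra generated by $(\rands{\lambda}_\tau)_{\tau\leq t},(\rands{\rho}^*_\tau)_{\tau\leq t},\{\rand{v}_\tau\}_{\tau\leq t},\rand{d}$ is contained in the one generated by $(\rands{\lambda}_\tau)_{\tau\leq t},(\rands{\rho}^*_\tau)_{\tau\leq t},\rand{u},\rand{d}$, and the tower property turns item (3) of the definition into the identity demanded in (ii).

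For (ii)$\Rightarrow$(i): given $(\rand{v}_t,\rands{\lambda}_t,\rand{d})_{t\in\mathcal{T}}$ satisfying the inequalities of (ii), reconstruct $\rand{u}$ by the chain-infimum formula of Theorem 24.8 in \citet{rockafellar1970convexanalysis}, exactly as in the proof of Lemma~\ref{lem:RandomExponentialdiscountingCstar} but with $\rands{\lambda}_t/\rand{d}^{t}$ in place of $\rands{\lambda}_t/\rands{\delta}_t$; this yields an a.s. concave, locally nonsatiated, continuous $\rand{u}$ with $\frac{\rands{\lambda}_t}{\rand{d}^{t}}\rands{\rho}^*_t\in\nabla\rand{u}(\rand{c}^*_t)$ for every $t$, which gives items (1), (2), (4), (5). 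Item (3), the martingale property conditional on $\rand{u}$, must then be recovered from the martingale property conditional on $\{\rand{v}_\tau\}_{\tau\leq t}$.

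The step I expect to be the main obstacle is exactly this last one: the Rockafellar-reconstructed $\rand{u}$ is built from the entire horizon $(\rand{v}_\tau,\rands{\rho}^*_\tau,\rand{c}^*_\tau)_{\tau\in\mathcal{T}}$, so conditioning on $\rand{u}$ generally brings in information about periods $\tau>t$ that is not present in $\{\rand{v}_\tau\}_{\tau\leq t}$, and a plain tower-property argument runs the wrong way. The resolution will be to show this extra information is immaterial for the identity in question — heuristically, because the Afriat inequalities restrict only the conditional support of the latent variables, not their conditional law, so that the conditional distribution of $\rands{\lambda}_{t+1}$ given the time-$t$ observables is unchanged by also conditioning on $\rand{u}$. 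Making that precise (for instance by fixing a single genuinely nonanticipating filtration and checking that, under the a.s. first-order conditions, the events generated by $\rand{u}$ and by $\{\rand{v}_\tau\}_{\tau\leq t}$ enter the relevant conditional expectation identically) is the crux; everything else is a line-by-line transcription of the proofs of Lemmas~\ref{lem:Browiningparametric} and~\ref{lem:RandomExponentialdiscountingCstar}.
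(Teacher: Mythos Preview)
The paper does not actually prove this lemma: it states only that ``the proof of Lemma~\ref{lem:s-ED-IU-Afriat} is omitted as it follows trivially from our previous results,'' meaning Lemma~\ref{lem:RandomExponentialdiscountingCstar}. Your proposal is therefore more detailed than anything the paper offers. On the Afriat-inequality part your plan is exactly the intended one --- a line-by-line transcription of the proof of Lemma~\ref{lem:RandomExponentialdiscountingCstar} with $\rands{\delta}_t$ replaced by $\rand{d}^{t}$ --- and there is nothing to add.

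Where you go beyond the paper is in the martingale step, and the subtlety you flag is real and not addressed by the paper. In (i)$\Rightarrow$(ii) your tower-property argument presumes that $\rand{v}_\tau=\rand{u}(\rand{c}^*_\tau)$ is measurable with respect to $\sigma((\rands{\lambda}_s)_{s\leq t},(\rands{\rho}^*_s)_{s\leq t},\rand{u},\rand{d})$; this is automatic only if the first-order conditions pin down $\rand{c}^*_\tau$ as a function of $(\rand{u},\rand{d},\rands{\lambda}_\tau,\rands{\rho}^*_\tau)$, which fails without strict concavity. In (ii)$\Rightarrow$(i) the issue you identify --- that the Rockafellar-constructed $\rand{u}$ depends on the full horizon, so conditioning on $\rand{u}$ may enlarge the time-$t$ information set and the tower property runs the wrong way --- is a genuine gap that the paper's one-line dismissal does not close. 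Your heuristic (``the Afriat inequalities restrict only the conditional support, not the conditional law'') is suggestive but not yet a proof; making it precise would require either a nonanticipating construction of $\rand{u}$ or an explicit argument that the extra conditioning is redundant for the martingale identity. The paper simply asserts the result and moves on, so your proposal is already more careful than the authors' treatment, and the obstacle you isolate is the right one.
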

The proof of Lemma~\ref{lem:s-ED-IU-Afriat} is omitted as it follows trivially from our previous results. 

\subsubsection*{Econometric Framework}
The additional restrictions implied by the income uncertainty in classical $\ed$-rationalizability can be captured as a set of conditional moment conditions that restrict the latent distribution of the marginal utility of income. Note that our framework so far has only dealt with unconditional moments and support constraints. Fortunately, the ELVIS framework can deal with conditional moments too (Theorem 4.1 in \citealp{schennach2014entropic}). The main intuition behind this extension is that a finite set of conditional moments can be written as a (possibly infinite) collection of unconditional moments. 
\par
For simplicity of exposition and for practical purposes, instead of the martingale condition from Lemma~\ref{lem:s-ED-IU-Afriat}, we will use its simplest implication:
\[
\Exp{\rands{\lambda}_{t+1}-\rands{\lambda}_{t}}=0
\]
for all $t\in\mathcal{T}\setminus\{T\}$. Moreover, in order to be able to take expectations of marginal utility of income in the cross-section of individuals, we need to impose a normalization condition such that the marginal utility of income of all individuals is in the same units. A natural normalization (without loss of generality) in the form of a support constraint is
\[
\rands{\lambda}_0=1\:\as.
\]

Recall, that in the benchmark case of perfect-foresight ($\s/\ed$-rationalizability), the marginal utility of income is normalized to $1$ for every time period (i.e., $\rands{\lambda}_t=1\:\as$). In the case of the static UMT ($\rat$-rationalizability) $\rands{\lambda}_t$ is only restricted to be positive. The $\s/\ed$-IU-rationalizability provides a framework that is less restrictive than $\s/\ed$-rationalizability but is more restrictive than $\rat$-rationalizability.

\subsubsection*{Empirical Results}
In our first application we rejected the null hypothesis of $\s/\ed$-rationalizability with perfect-foresight for the case of couples' households (at the $5$ percent significance level). However, we fail to reject the implication of $\s/\ed$-IU-rationalizability captured by the above moment conditions for couples' households at the $5$ percent significance level with a discount factor set at $\rand{d}=1\:\as$. We find that $\mathrm{TS}_n=9.047$ (p-value$=0.249$) is below the $95$ percent quantile of the $\chi^2_{7}$ ($14.07$).
\par
We have not tested all necessary and sufficient conditions for $\s/\ed$-IU-rationalizability. But the evidence we provide suggests a possible explanation of the rejection of the perfect-foresight $\s/\ed$-rationalizability. In short, it may be that couples' households face more income uncertainty than singles. Hence, not taking income uncertainty into account could be a reason why we reject the dynamic UMT in the couples' households case. Indeed, \citet{browning2010uncertainty} points out that risk sharing may be a
benefit from marriage.  Further exploration of this explanation is beyond the scope of this paper.   

\subsection{Collective Exponential Discounting Model}\label{appen:extensions,collective}
The important contribution of \citet{adams2014consume} studies a dynamic collective consumer problem to model the behavior of couple's households. The collective model considers a case in which the household maximizes a utilitarian sum of individual utilities of each member of the couple over a vector of consumption of private and (household) public goods, given the individuals' relative power within the household (Pareto weights). Each individual member of the household is an exponential discounter but the observed consumption is a result of the collective decision making process, and may not be time-consistent. 
We formulate a test for the collective model using our methodology. We fail to reject the null hypothesis of consistency of the data set with the dynamic collective model assuming that the random discount factor is supported on $[0.975,1]$ (this support is the one used in \citet{adams2014consume}). 
\par
Consider a household that consists of two individuals labeled by $A$ and $B$. Partition the vector of goods into publicly consumed goods indexed by $H$ and privately consumed goods indexed by $I$. That is, $c_{t}=(c_{t,I}\tr,c_{t,H}\tr)\tr$ and $p_{t}=(p_{t,I}\tr,p_{t,H}\tr)\tr$. Let $c_{t,A}$ and $c_{t,B}$ be the consumption of the privately consumed goods of individuals $A$ and $B$, respectively ($c_{t,I}=c_{t,A}+c_{t,B}$). Then the
collective household problem with exponential discounting corresponds to the maximization of
\[
V_{\tau}(c)=\omega_{A}u_{A}(c_{\tau,A},c_{\tau,H})+\omega_{B}u_{B}(c_{\tau,B},c_{\tau,H})+\sum_{j=1}^{T-\tau}[d_{A}^{j}\omega_{A}u_{A}(c_{\tau+j,A},c_{\tau+j,H})+d_{B}^{j}\omega_{B}u_{B}(c_{\tau+j,B},c_{\tau+j,H})],
\]
subject to this linear intratemporal budget constraint:
\begin{align*}
p_{\tau,I}\tr c_{\tau,I}+p_{\tau,H}\tr c_{\tau,H}+s_{t}-y_{t}-(1+r_{t})s_{t-1}=0,
\end{align*}
where $\omega_{A},\omega_{B}>0$ are Pareto weights that remain constant across
time and represent the bargaining power of each household member. Individual utility functions, $u_{A}$ and $u_{B}$, are assumed to be continuous, locally nonsatiated and concave. The individual discount factors are similarly denoted by $d_A$ and $d_B$.
The rest of the elements are the same as in our main model.
\par
The quantities $c_{t,A},c_{t,B}$ are assumed to be unobservable
to the econometrician. We observe only $c_{t}$. \citet{adams2014consume} propose one solution to the collective household problem above. They assume full efficiency in the sense that there are personalized Lindahl prices for the publicly consumed goods $p_{t,H}$ that perfectly decentralize the above problem. The Lindahl prices are $p_{t,A}\in\Real_{++}^{L_{H}}$ for household member $A$ and the analogous $p_{t,B}$ such that $p_{t,A}+p_{t,B}=p_{t,H}$. \citet{adams2014consume} established the result which is the analog of Theorem~\ref{thm:Deterministic-thm_exponentialdiscount}. Similar to the case of the single-individual household, define  $\rho_{t,h}={p_{t,h}}/{\prod_{j=1}^{t}(1+r_{j})}$ for
$h\in\{I,H,A,B\}$.
\begin{thm}[\citealp{adams2014consume}]\label{thm:CollectiveDeterministicCharacterization}
An array $(\rho_{t},c_{t})_{t\in\mathcal{T}}$ can be generated by a collective household exponential discounting model with full efficiency
if and only if there exist $d_{A},d_{B}\in(0,1]$; strictly positive vectors $(v_{t,A})_{t\in\mathcal{T}}$, $(v_{t,B})_{t\in\mathcal{T}}$; individual private consumption quantities $(c_{t,A},c_{t,B})_{t\in\mathcal{T}}$ (with $c_{t,A}+c_{t,B}=c_{t,I}$); and personalized Lindahl prices $(p_{t,A},p_{t,B})_{t\in\mathcal{T}}$ (with $p_{t,A}+p_{t,B}=p_{t,H}$) such that for all $s,t\in\mathcal{T}$:
\begin{align*}
&v_{t,A}-v_{s,A}\geq d_{A}^{-t}\left[\rho\tr_{t,I}(c_{t,A}-c_{s,A})+\rho\tr_{t,A}(c_{t,H}-c_{s,H})\right],\\
&v_{t,B}-v_{s,B}\geq d_{B}^{-t}\left[\rho\tr_{t,I}(c_{t,B}-c_{s,B})+\rho\tr_{t,B}(c_{t,H}-c_{s,H})\right].
\end{align*}
\end{thm}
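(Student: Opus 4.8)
The plan is to reduce the two‑person collective problem to two \emph{independent} single‑person exponential‑discounting programs via Lindahl decentralization, and then apply Theorem~\ref{thm:Deterministic-thm_exponentialdiscount} (the $\ed$ case) separately to each household member. The structural fact to exploit is that the household objective $V_\tau$ is the $\omega$‑weighted \emph{sum} of two streams that are additively separable across members; once the publicly consumed goods are re‑priced at personalized prices $p_{t,A},p_{t,B}$ with $p_{t,A}+p_{t,B}=p_{t,H}$, member $A$'s observed stream $(c_{t,A},c_{t,H})_{t\in\mathcal{T}}$ behaves exactly like the choice of a single $\ed$‑consumer who faces the effective prices $\rho_{t,I}$ on the private goods and $\rho_{t,A}$ on the public goods, with $\rho_{t,h}=p_{t,h}/\prod_{j=1}^{t}(1+r_j)$; and symmetrically for $B$.

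For the ``only if'' direction, suppose $(\rho_t,c_t)_{t\in\mathcal{T}}$ is generated by a collective $\ed$ model with full efficiency. By the definition of full efficiency there are Lindahl prices with $p_{t,A}+p_{t,B}=p_{t,H}$ and a split $c_{t,I}=c_{t,A}+c_{t,B}$ such that $(c_{t,A},c_{t,H})_{t\in\mathcal{T}}$ maximizes member $A$'s exponential‑discounting objective over the intertemporal budget set obtained from $A$'s flow constraints by eliminating savings exactly as in Section~\ref{sec:determin}, and likewise for $B$. Hence the array consisting of the effective prices $(\rho_{t,I},\rho_{t,A})$ and the bundle $(c_{t,A},c_{t,H})$ is $\ed$‑rationalizable in the sense of the Dynamic UMT definition. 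Applying Theorem~\ref{thm:Deterministic-thm_exponentialdiscount} to member $A$ produces a strictly positive vector $(v_{t,A})_{t\in\mathcal{T}}$ and $d_A\in(0,1]$ with $v_{t,A}-v_{s,A}\geq d_A^{-t}[\rho_{t,I}\tr(c_{t,A}-c_{s,A})+\rho_{t,A}\tr(c_{t,H}-c_{s,H})]$ for all $s,t\in\mathcal{T}$; the identical argument for $B$ yields the second family of inequalities.

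For the converse, suppose we are given the split $c_{t,I}=c_{t,A}+c_{t,B}$, the Lindahl prices $p_{t,A}+p_{t,B}=p_{t,H}$, the strictly positive vectors $(v_{t,A})_{t\in\mathcal{T}},(v_{t,B})_{t\in\mathcal{T}}$, and $d_A,d_B\in(0,1]$ satisfying the two inequality systems. By the reverse implication of Theorem~\ref{thm:Deterministic-thm_exponentialdiscount} (equivalently, Lemma~\ref{lem:Browiningparametric} together with the Rockafellar construction used in the proof of Lemma~\ref{lem:RandomExponentialdiscountingCstar}) there is a concave, locally nonsatiated, continuous $u_A$, an income stream $(y_{t,A})_{t\in\mathcal{T}}$, and an initial asset level $a_{0,A}\geq0$ such that $(c_{t,A},c_{t,H})_{t\in\mathcal{T}}$ solves $A$'s $\ed$‑program with discount factor $d_A$ and the effective prices above; symmetrically one obtains $u_B,(y_{t,B})_{t\in\mathcal{T}},a_{0,B}$. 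Pick any positive Pareto weights $\omega_A,\omega_B$ and set $y_t=y_{t,A}+y_{t,B}$, $a_0=a_{0,A}+a_{0,B}$, and $s_t=s_{t,A}+s_{t,B}$. Because $p_{t,A}+p_{t,B}=p_{t,H}$, summing the two members' flow constraints reproduces the household flow constraint, and because $V_\tau$ is the $\omega$‑weighted sum of the two separable streams, optimality of each member's plan in its own Lindahl‑priced budget set implies optimality of $(c_{t,I},c_{t,H})_{t\in\mathcal{T}}$ in the household program. This exhibits a collective $\ed$ model with full efficiency generating $(\rho_t,c_t)_{t\in\mathcal{T}}$.

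The only non‑routine step is the decentralization itself---that ``full efficiency'' is precisely the existence of Lindahl prices under which the household program splits into two single‑agent $\ed$‑programs. This is the second‑welfare‑theorem argument of \citet{adams2014consume} specialized to this class of dynamic problems, and I would cite it rather than reprove it; two points deserve care when invoking it. First, one needs the personalized prices $p_{t,A},p_{t,B}$ on the public goods to be choosable strictly positive, so that $(\rho_{t,I},\rho_{t,A})$ is a legitimate effective‑price vector for Theorem~\ref{thm:Deterministic-thm_exponentialdiscount}; this follows from strict positivity of the market prices $p_{t,H}$ together with local nonsatiation. Second, the savings/asset bookkeeping must be consistent across the split---the individual saving policies must add up to the household's---which is immediate from the linearity of the flow constraints. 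Everything else is a two‑fold application of the single‑agent $\ed$ characterization already recorded in Theorem~\ref{thm:Deterministic-thm_exponentialdiscount}, so the decentralization lemma is the main obstacle and is the part borrowed from \citet{adams2014consume}.
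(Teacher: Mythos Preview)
The paper does not supply its own proof of this theorem: it is stated with the attribution ``[\citealp{adams2014consume}]'' and the surrounding text explicitly says that \citet{adams2014consume} ``established the result which is the analog of Theorem~\ref{thm:Deterministic-thm_exponentialdiscount}.'' There is therefore nothing in the paper to compare your argument against; the authors simply import the characterization and then use it (in the proof of Theorem~\ref{thm:CollectiveSufficient}) as a black box.

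Your sketch is the natural one and matches the spirit of the cited source: decentralize via Lindahl prices so that each member faces a single-agent $\ed$ problem, then invoke Theorem~\ref{thm:Deterministic-thm_exponentialdiscount} twice. Two small caveats worth tightening if you were to write this out in full. First, in the ``if'' direction you say ``pick any positive Pareto weights $\omega_A,\omega_B$''; this is fine because the individual Afriat constructions yield utilities whose first-order conditions already have the Lagrange multiplier normalized to one (the $\ed$ case of Lemma~\ref{lem:Browiningparametric}), so $\omega_A=\omega_B=1$ makes the private-good FOCs and the summed public-good FOC line up with the household problem---but you should state that verification rather than leave it implicit. Second, strict positivity of the personalized prices $p_{t,A},p_{t,B}$ does not follow from local nonsatiation alone; in the ``if'' direction it is part of the hypothesis, and in the ``only if'' direction it is part of what ``full efficiency'' delivers in \citet{adams2014consume}, so the cleanest move is exactly what you do at the end: cite the decentralization lemma rather than reprove it.
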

With this result in hand, we can establish our finding in a very straightforward manner. We let $\rands{\rho}_{t}$ and $\rand{c}^*_{t}$ be the random vectors of deflated prices and true consumption. Finally, we define $\rand d_{A}$ and $\rand d_{B}$ as the random discount factors for household members $A$ and $B$, respectively. Also, $\rand u_{A},\rand u_{B}$ and $\rands{\omega}_{A},\rands{\omega}_{B}$ denote the random utility functions and random Pareto weights for each household member. We keep here the assumption about the data-generating process that we maintained for the case of $\s/\ed$-rationalizability, namely, we assume that the preferences and Pareto weights remain stable for each household after being drawn from the joint distribution of $(\rand u_{A},\rand d_{A},\rands{\omega}_{A})$ and $(\rand u_{B},\rand d_{B},\rands{\omega}_{B})$ at the first time period. Now we can establish and prove a stochastic analogue to the result in \citet{adams2014consume}.
\begin{thm}
	\label{thm:CollectiveSufficient} If a random array $(\rands{\rho}_{t},\rand c_{t}^{*})_{t\in\mathcal{T}}$ is generated by a collective household with random exponential discounting under full efficiency, then there exist random variables $\rand d_{A},\rand d_{B}$ which are both supported on or inside $[\theta_0,1]$, and strictly positive random vectors $(\rand v_{t,A})_{t\in\mathcal{T}}$, $(\rand v_{t,B})_{t\in\mathcal{T}}$ that satisfy
	\[
	\rand d_{A}^{t}(\rand v_{t,A}-\rand v_{s,A})+\rand d_{B}^{t}(\rand v_{t,B}-\rand v_{s,B})\geq\rands{\rho}_{t}\tr(\rand c_{t}^{*}-\rand c_{s}^{*})\:\as\quad\forall t,s\in\mathcal{T}.
	\]
\end{thm}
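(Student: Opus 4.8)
The plan is to reduce the stochastic claim to the deterministic characterization in Theorem~\ref{thm:CollectiveDeterministicCharacterization}, applied path by path, and then to eliminate the latent private quantities and Lindahl prices by scaling and adding the two individual inequality systems. Because the assumed data-generating process draws $(\rand u_A,\rand d_A,\rands{\omega}_A)$ and $(\rand u_B,\rand d_B,\rands{\omega}_B)$ once at the initial period and keeps them fixed, conditional on those primitives each household's observed stream $(\rands{\rho}_t,\rand c_t^*)_{t\in\mathcal{T}}$ is, with probability one, a deterministic array generated by a collective exponential-discounting household under full efficiency. Hence Theorem~\ref{thm:CollectiveDeterministicCharacterization} applies pathwise: almost surely there exist $\rand d_A,\rand d_B\in[\theta_0,1]$ (the support bound on the discount factors being inherited from the DGP), strictly positive vectors $(\rand v_{t,A})_{t\in\mathcal{T}}$ and $(\rand v_{t,B})_{t\in\mathcal{T}}$, private consumptions $(\rand c_{t,A}^*,\rand c_{t,B}^*)_{t\in\mathcal{T}}$ with $\rand c_{t,A}^*+\rand c_{t,B}^*=\rand c_{t,I}^*$, and personalized Lindahl prices $(\rand p_{t,A},\rand p_{t,B})_{t\in\mathcal{T}}$ with $\rand p_{t,A}+\rand p_{t,B}=\rand p_{t,H}$, satisfying the two displayed inequality systems a.s.

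Next I would carry out the aggregation. For each pair $(t,s)$, multiply the member-$A$ inequality by $\rand d_A^{t}>0$ a.s. and the member-$B$ inequality by $\rand d_B^{t}>0$ a.s.; strict positivity of these factors preserves the inequality direction. Adding the two resulting inequalities, the right-hand side becomes $\rands{\rho}_{t,I}\tr(\rand c_{t,A}^*+\rand c_{t,B}^*-\rand c_{s,A}^*-\rand c_{s,B}^*)+(\rands{\rho}_{t,A}+\rands{\rho}_{t,B})\tr(\rand c_{t,H}^*-\rand c_{s,H}^*)$, which, using $\rand c_{t,A}^*+\rand c_{t,B}^*=\rand c_{t,I}^*$ and $\rands{\rho}_{t,A}+\rands{\rho}_{t,B}=\rands{\rho}_{t,H}$ (the latter from $\rand p_{t,A}+\rand p_{t,B}=\rand p_{t,H}$ after dividing through by $\prod_{j=1}^{t}(1+\rand r_j)$), collapses exactly to $\rands{\rho}_{t,I}\tr(\rand c_{t,I}^*-\rand c_{s,I}^*)+\rands{\rho}_{t,H}\tr(\rand c_{t,H}^*-\rand c_{s,H}^*)=\rands{\rho}_t\tr(\rand c_t^*-\rand c_s^*)$. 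This gives the asserted inequality a.s.\ for all $t,s\in\mathcal{T}$, with the very same $(\rand v_{t,A})_{t\in\mathcal{T}}$, $(\rand v_{t,B})_{t\in\mathcal{T}}$, $\rand d_A$, $\rand d_B$; strict positivity of the utility-number vectors is inherited directly from Theorem~\ref{thm:CollectiveDeterministicCharacterization}, and the latent coordinates $(\rand c_{t,A}^*,\rand c_{t,B}^*,\rand p_{t,A},\rand p_{t,B})$ are now simply discarded.

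The one subtle point — the main obstacle — is measurability: Theorem~\ref{thm:CollectiveDeterministicCharacterization} only asserts existence for each realization, so I must argue that $(\rand v_{t,A},\rand v_{t,B},\rand d_A,\rand d_B)$ can be taken to be genuine random vectors and not merely a pointwise family. I would dispatch this with a measurable-selection argument: for each realization of the observable data $(\rands{\rho}_t,\rand c_t^*,\rand r_t)_{t\in\mathcal{T}}$ the set of tuples $(v_A,v_B,d_A,d_B,c_A,c_B,p_A,p_B)$ obeying the finite system of inequalities together with the adding-up, positivity, and support constraints is a nonempty closed subset of a Euclidean space that depends jointly measurably on the data, so the Kuratowski--Ryll-Nardzewski (or Jankov--von Neumann) selection theorem supplies a measurable selector, whose relevant components are the desired random vectors. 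With this in hand the rest of the argument is the elementary scaling-and-adding computation above, so no further difficulty arises.
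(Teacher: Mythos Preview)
Your proposal is correct and follows essentially the same route as the paper: invoke Theorem~\ref{thm:CollectiveDeterministicCharacterization} pathwise, multiply the member-$A$ and member-$B$ inequalities by $\rand d_A^{t}$ and $\rand d_B^{t}$ respectively, add, and use the adding-up identities $\rand c_{t,A}^*+\rand c_{t,B}^*=\rand c_{t,I}^*$ and $\rands{\rho}_{t,A}+\rands{\rho}_{t,B}=\rands{\rho}_{t,H}$ to collapse the right-hand side. The only difference is that you supply a measurable-selection argument for the existence of $(\rand v_{t,A},\rand v_{t,B},\rand d_A,\rand d_B)$ as genuine random vectors, a point the paper leaves implicit.
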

Theorem~\ref{thm:CollectiveSufficient} does not provide sufficient conditions for collective rationalizability. We can provide a stochastic analogue of Theorem~\ref{thm:CollectiveDeterministicCharacterization}, but our choice has several advantages:  (i) one does not need to specify which goods are consumed privately or publicly; (ii) the inequality restrictions in Theorem~\ref{thm:CollectiveSufficient} do not depend on the unobservable Lindahl prices and private consumption vectors, which simplifies implementation; and (iii) we can maintain Assumption~\ref{assu:(Measurement-error)Meanbudgetneutrality} in
a very natural form.  We also assume that prices are measured precisely. Assuming that $\rand{d}_A$ and $\rand{d}_B$ are supported on or inside $[0.975,1]$ we find that $\mathrm{TS}_n=0.018$ ($\text{p-value}>1-10^{-4}$), which is below the $95$ percent quantile of the $\chi^2_{4}$ ($9.49$). Thus, we fail to reject the null hypothesis that the couples' household data set is consistent with the collective exponential discounting model under the assumptions of full efficiency, common support for preferences, and the collective mean budget constraint. The test statistic value for the explored $\theta_0=0.975$ for the collective model is below that of the exponential discounting model for the sample of couples' households. 

\section{Empirical Application (I) Extended: Average Varian Support Set for Budget Shares\label{appen:averagevarianempirical}}
Here we compute bounds on counterfactual average budget shares. Since the null hypothesis of $\s/\ed$-rationalizability cannot be rejected for the case of single-individual households in our first application, we can compute out-of-sample forecasts of average budget shares when the price changes. We focus on one of the categories, petrol. The motivation for this exercise is twofold. First, we want to showcase that our methodology can deliver informative bounds of quantities of interest. Second, fossil fuels prices are usually highly variable. This means that understanding the effects of such price changes on the problem of budget allocation in households is empirically relevant. This question may also be important for some potential users of this methodology, like a regulator trying to impose a green-tax.  We study the following counterfactual question: What would be the average budget share of petrol in time $T+1$, if the price of petrol $p_{T+1,\mathrm{pet}}$ is  $\kappa\cdot100$ percent higher than $p_{T,\mathrm{pet}}$ (i.e., $p_{T+1,\mathrm{pet}}=(1+\kappa)p_{T,\mathrm{pet}}$)? 
\par 
For simplicity, we consider $\kappa$ that takes values in  $\{0,0.01,\cdots,0.10\}$ (i.e., at most a $10$ percent price increase). We set the (random) interest rate faced by the single-individual households $\rand{r}_{T+1}=0.06 \:\as$ (roughly $1$ percent increase over the average interest rate) and the support of the random discount factor to $[0.975,1]$.\footnote{We computed our counterfactual sets with the support of the random discount factor equal to $[0.1,1]$ and $\rand{d}=1\:\as$. The results are similar to those with $[0.975,1]$ and available upon request.} The counterfactual moment is
\[
g_C((\rands{\rho}^*_{t},\rand c^*_{t})_{t\in\mathcal{T}},\rand{c}^*_{T+1};\rands{\rho}^*_{T+1},\theta_j)=\frac{\rands{\rho}^*_{T+1,\mathrm{pet}} \rand{c}^*_{T+1,\mathrm{pet}}}{\rands{\rho}_{T+1}^{*\prime} \rand{c}^*_{T+1}}-\theta_{\mathrm{pet}},
\]
where $\theta_{\mathrm{pet}}\in[0,1]$ is the average budget share of petrol. Note that interest rate cancels out such that the budget share depends only on spot prices. For $\ed$-rationalizability we do not need to specify the expenditure level at $T+1$, as the model endogenously predicts an expenditure level for a new price. This is because $\ed$-rationalizability generalizes quasilinear rationalizabilty by adding discounting. However, adding discounting does not affect the model's budget-free nature \citep{gauthier2018}.
\par 
The $95$ percent bounds for the average counterfactual petrol share are depicted in Figure~\ref{fig:petrol_0.975_bounds}. Note that conditions of the Proposition~\ref{prop:convexity} are satisfied. Hence, the sets of interest are connected, which means it is enough to depict the minimal and maximal shares that are not rejected by our test.\footnote{We searched for average budget share $\theta_{\mathrm{pet}}$ in the grid $\{0.00,0.005,\cdots,1\}$.}
As expected, demand for petrol is decreasing in the price of petrol. The maximal and the minimal drops in shares associated with $10$ percent price increase are $1.25$ to $1.1$ percentage points, respectively.\footnote{The empirical budget share at $t=T$ computed from our (mismeasured) data set for petrol is $6$ percent.}
  \begin{figure}
    \centerline{\includegraphics[width=1\textwidth]{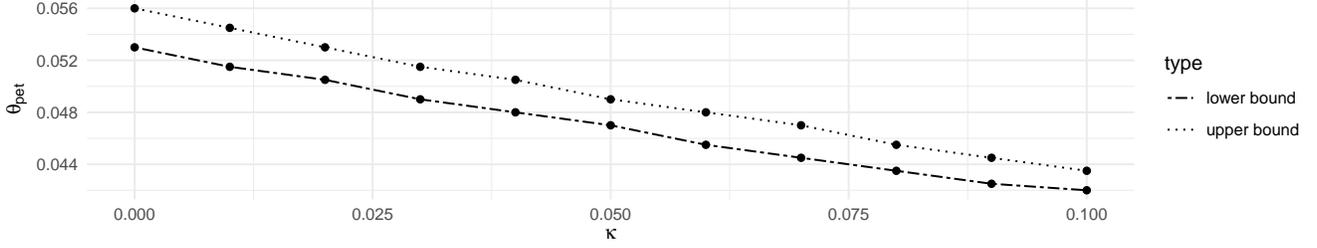}}
    \caption{Average Support Set for Budget Shares: Petrol\label{fig:petrol_0.975_bounds}} 
  \end{figure}

\section{Data Availability}
The data sets and replication codes underlying this article are available in Zenodo, at \url{https://doi.org/10.5281/zenodo.4007866}.



\end{document}